\title{Going Deep and Going Wide: Counting Logic and Homomorphism Indistinguishability over Graphs of Bounded Treedepth and Treewidth} 
\titlerunning{Going Deep and Going Wide} 
\author{Eva Fluck}{RWTH Aachen University, Germany}{fluck@cs.rwth-aachen.de}{https://orcid.org/0000-0002-9643-6081}{}
\author{Tim Seppelt}{RWTH Aachen University, Germany}{seppelt@cs.rwth-aachen.de}{https://orcid.org/0000-0002-6447-0568}{German Research Council (DFG) via Research Training Group 2236 (UnRAVeL)}
\author{Gian Luca Spitzer}{RWTH Aachen University, Germany}{gian.luca.spitzer@rwth-aachen.de}{https://orcid.org/0009-0008-0270-506X}{German Research Council (DFG) via Research Training Group 2236 (UnRAVeL)}
\authorrunning{E. Fluck, T. Seppelt, and G. L. Spitzer} 
\keywords{Treewidth, treedepth, homomorphism indistinguishability, counting first-order logic} 
\renewcommand{\phi}{\varphi}
\renewcommand{\epsilon}{\varepsilon}
\renewcommand{\theta}{\vartheta}
\definecolor{rwth-blue}{cmyk}{1,.5,0,0}\colorlet{rwth-lblue}{rwth-blue!50}\colorlet{rwth-llblue}{rwth-blue!25}
\definecolor{rwth-violet}{cmyk}{.6,.6,0,0}\colorlet{rwth-lviolet}{rwth-violet!50}\colorlet{rwth-llviolet}{rwth-violet!25}
\definecolor{rwth-purple}{cmyk}{.7,1,.35,.15}\colorlet{rwth-lpurple}{rwth-purple!50}\colorlet{rwth-llpurple}{rwth-purple!25}
\definecolor{rwth-carmine}{cmyk}{.25,1,.7,.2}\colorlet{rwth-lcarmine}{rwth-carmine!50}\colorlet{rwth-llcarmine}{rwth-carmine!25}
\definecolor{rwth-red}{cmyk}{.15,1,1,0}\colorlet{rwth-lred}{rwth-red!50}\colorlet{rwth-llred}{rwth-red!25}
\definecolor{rwth-magenta}{cmyk}{0,1,.25,0}\colorlet{rwth-lmagenta}{rwth-magenta!50}\colorlet{rwth-llmagenta}{rwth-magenta!25}
\definecolor{rwth-yellow}{cmyk}{0,0,1,0}\colorlet{rwth-lyellow}{rwth-yellow!50}\colorlet{rwth-llyellow}{rwth-yellow!25}
\definecolor{rwth-grass}{cmyk}{.35,0,1,0}\colorlet{rwth-lgrass}{rwth-grass!50}\colorlet{rwth-llgrass}{rwth-grass!25}
\definecolor{rwth-green}{cmyk}{.7,0,1,0}\colorlet{rwth-lgreen}{rwth-green!50}\colorlet{rwth-llgreen}{rwth-green!25}
\definecolor{rwth-cyan}{cmyk}{1,0,.4,0}\colorlet{rwth-lcyan}{rwth-cyan!50}\colorlet{rwth-llcyan}{rwth-cyan!25}
\definecolor{rwth-teal}{cmyk}{1,.3,.5,.3}\colorlet{rwth-lteal}{rwth-teal!50}\colorlet{rwth-llteal}{rwth-teal!25}
\definecolor{rwth-gold}{cmyk}{.35,.46,.7,.35}
\definecolor{rwth-silver}{cmyk}{.39,.31,.32,.14}
\newcommand{\R}{\ensuremath{\mathbb{R}}}
\newcommand{\N}{\ensuremath{\mathbb{N}}}
\newcommand{\I}{\ensuremath{\mathfrak I}}
\newcommand{\TW}{\ensuremath{\mathcal{TW}}}
\newcommand{\TD}{\ensuremath{\mathcal{TD}}}
\newcommand{\Ekq}{\ensuremath{\mathcal{T}_{q}^{k}}}
\newcommand{\EParam}[2]{\ensuremath{\mathcal{T}_{#2}^{#1}}}
\newcommand{\Lkq}{\ensuremath{\mathcal{L}_{q}^{k}}}
\newcommand{\LParam}[2]{\ensuremath{\mathcal{L}_{#2}^{#1}}}
\newcommand{\GEkq}{\ensuremath{\mathcal{G}\Lkq}}
\newcommand{\GEParam}[2]{\ensuremath{\mathcal{G}\LParam{#1}{#2}}}
\newcommand{\GEkqLL}{\ensuremath{\mathcal{G}\Ekq}}
\newcommand{\CG}{\ensuremath{\mathcal G}}
\newcommand{\lFO}{\ensuremath{\mathsf{FO}}}
\newcommand{\lC} {\ensuremath{\mathsf C}}
\newcommand{\lL} {\ensuremath{\mathsf L}}
\newcommand{\lGC}{\ensuremath{\mathsf{GC}}}
\newcommand{\impl}{\ensuremath{\rightarrow}}
\newcommand{\parto}{\ensuremath{\rightharpoonup}}
\newcommand{\tup}[1]{\ensuremath{\boldsymbol{#1}}}
\newcommand{\qg}[1]{\ensuremath{\mathfrak{#1}}} \newcommand{\qgp}[3]{\ensuremath{\mathfrak{#1}[#3; #2]}}
\DeclareMathOperator{\CR}{CR}
\DeclareMathOperator{\monCR}{mon-CR}
\DeclareMathOperator{\qr}{qr}
\DeclareMathOperator{\free}{free}
\DeclareMathOperator{\dom}{dom}
\DeclareMathOperator{\img}{img}
\DeclareMathOperator{\HOM}{Hom}
\DeclareMathOperator{\tw}{tw}
\DeclareMathOperator{\td}{td}
\DeclareMathOperator{\dep}{dp}
\DeclareMathOperator{\cl}{cl}
\newcommand\restrict[1]{\vert_{#1}}
\renewcommand{\vec}[1]{\ensuremath{\boldsymbol{#1}}}
\newcommand{\elimOrd}[2]{$(#1,#2)$-constructible}
\newcommand{\elimDepth}{elimination depth}
\newcommand{\labels}[1]{\ensuremath{L_{#1}}}
\newcommand{\grid}[2]{\ensuremath{G_{#1 \times #2}}}
\let\amsmath@bigm\bigm
\renewcommand{\bigm}[1]{\ifcsname fenced@\string#1\endcsname
    \expandafter\@firstoftwo
  \else
    \expandafter\@secondoftwo
  \fi
  {\expandafter\amsmath@bigm\csname fenced@\string#1\endcsname}{\amsmath@bigm#1}}
\newcommand{\DeclareFence}[2]{\@namedef{fenced@\string#1}{#2}}
\DeclareFence{\mid}{|}
\newtheorem{fact}[definition]{Fact}
\Crefname{fact}{Fact}{Facts}
\begin{document}
\colorlet{rwth-orange}{lipicsYellow}
\maketitle

\begin{abstract}
  We study the expressive power of first-order logic with counting quantifiers, especially the $k$-variable and quantifier-rank-$q$ fragment $\mathsf{C}^k_q$, using homomorphism indistinguishability. 
Recently, Dawar, Jakl, and Reggio~(2021) proved that two graphs satisfy the same $\mathsf{C}^k_q$-sentences if and only if they are homomorphism indistinguishable over the class $\mathcal{T}^k_q$ of graphs admitting a $k$-pebble forest cover of depth $q$. Their proof builds on the categorical framework of game comonads developed by Abramsky, Dawar, and Wang~(2017). We reprove their result using elementary techniques inspired by Dvo\v{r}ák~(2010). Using these techniques we also give a characterisation of guarded counting logic.
Our main focus, however, is to provide a graph theoretic analysis of the graph class $\mathcal{T}^k_q$. This allows us to separate $\mathcal{T}^k_q$ from the intersection of the graph class $\mathcal{TW}_{k-1}$, that is graphs of treewidth less or equal $k-1$, and $\mathcal{TD}_q$, that is graphs of treedepth at most $q$ if $q$ is sufficiently larger than $k$. We are able to lift this separation to the semantic separation of the respective homomorphism indistinguishability relations. A part of this separation is to prove that the class $\mathcal{TD}_q$ is homomorphism distinguishing closed, which was already conjectured by Roberson~(2022).

 \end{abstract}

\section{Introduction}

Since the 1980s, first-order logic with counting quantifiers $\mathsf{C}$ plays a decisive role in finite model theory.
In this extension of first-order logic with quantifiers $\exists^{\geq t} x$ (``there exists at least $t$ many $x$''), properties which can be expressed in first-order logic only with formulae of length depending on $t$ can be expressed succinctly.
Of particular interest are the $k$-variable and quantifier-depth-$q$ fragments $\mathsf{C}^k$ and $\mathsf{C}_q$ of $\mathsf{C}$, which
enjoy rich connections to graph algorithms \cite{dvorak_recognizing_2010}, algebraic graph theory \cite{dell_lovasz_2018,grohe_homomorphism_2022}, optimisation \cite{grohe_homomorphism_2022,roberson_lasserre_2023}, graph neural networks \cite{grohe_logic_2021}, and category theory \cite{dawar_lovasz-type_2021,abramsky_pebbling_2017}.

The intersection of these fragments, the fragment $\mathsf{C}^k_q \coloneqq \mathsf{C}^k \cap \mathsf{C}_q$ of all $\mathsf{C}$-formulae with $k$-variables and quantifier-depth $q$, has received much less attention \cite{rattan_weisfeiler-leman_2023}. In this work, we study the expressivity of $\mathsf{C}^k_q$ using homomorphism indistinguishability.

Homomorphism indistinguishability is an emerging framework for measuring the expressivity of equivalence relations comparing graphs. Two graphs $G$ and $H$ are \emph{homomorphism indistinguishable} over a graph class $\mathcal{F}$ if for all $F \in \mathcal{F}$ the number of homomorphisms from $F$ to $G$ is equal to the number of homomorphisms from $F$ to $H$. Many natural equivalence relations between graphs including isomorphism \cite{lovasz_operations_1967}, quantum isomorphism \cite{mancinska_quantum_2020}, cospectrality \cite{dell_lovasz_2018}, and feasibility of integer programming relaxations for graph isomorphism \cite{grohe_homomorphism_2022,roberson_lasserre_2023} can be characterised as homomorphism indistinguishability relations over certain graph classes. Establishing such characterisations is intriguing since it allows to use tools from structural graph theory to study equivalence relations between graphs \cite{roberson_oddomorphisms_2022,seppelt_logical_2023}.
Furthermore, the expressivity of homomorphism counts themselves is of practical interest \cite{nguyen_graph_2020,grohe_word2vec_2020}.

Equivalence with respect to $\mathsf{C}^{k}$ and $\mathsf{C}_q$ has been characterised by Dvo\v{r}ák~\cite{dvorak_recognizing_2010} and Grohe~\cite{grohe_counting_2020} as homomorphism indistinguishability over the classes $\mathcal{TW}_{k-1}$ of graphs of treewidth $\leq k-1$ and $\mathcal{TD}_q$ of graphs of treedepth $\leq q$, respectively.
Recently, Dawar, Jakl, and Reggio \cite{dawar_lovasz-type_2021} proved that two graphs satisfy the same $\mathsf{C}^k_q$-sentences if and only if they are homomorphism indistinguishable over the class $\Ekq$ of graphs admitting a $k$-pebble forest cover of depth $q$. Their proof builds on the categorical framework of game comonads developed in~\cite{abramsky_pebbling_2017}.

As a first step, we reprove their result using elementary techniques inspired by Dvo\v{r}\'ak~\cite{dvorak_recognizing_2010}. The general idea is to translate between sentences in $\mathsf{C}$ and graphs from which homomorphism are counted in an inductive fashion. By carefully imposing structural constraints, we are able to extend the original correspondence from \cite{dvorak_recognizing_2010} between $\mathsf{C}^{k}$ and graphs of treewidth at most $k-1$ to $\mathsf{C}_q$ and graphs of treedepth at most $q$, reproducing a result of \cite{grohe_counting_2020}, and finally to $\mathsf{C}^k_q$ and $\Ekq$. This simple and uniform proof strategy also yields the following result on guarded counting logic $\mathsf{GC}^k_q$.
Guarded counting logic plays a crucial role in the theory of properties of higher arity expressible by graph neural networks \cite{grohe_logic_2021}.
Towards this goal we introduce a new graph class called \GEkqLL, which is closely related to \Ekq.

\begin{restatable}{theorem}{guardedEkqLogic}
	\label{thm:guardedEkq_vs_guarded-logic}
	Let $k, q \geq 1$.
	Two graphs $G$ and $H$ are $\lGC^k_q$-equivalent
	if and only if 
	they are homomorphism indistinguishable over $\GEkqLL$.
\end{restatable}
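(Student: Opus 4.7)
The plan is to mimic the inductive Dvořák-style argument that the authors use for the unguarded case $\lC^k_q$ vs.\ $\Ekq$, but with the syntactic restriction of guarded counting built into the definition of $\GEkqLL$. Recall that in $\lGC$ every counting quantifier appears in a form like $\exists^{\geq t} y \,(E(x,y) \wedge \phi(x,y))$, so every newly introduced variable $y$ is forced to be $E$-adjacent to some variable already in scope. This should correspond to the structural condition defining $\GEkqLL$: a labeled graph lies in $\GEkqLL$ iff it admits a $k$-pebble forest cover of depth $q$ in which, additionally, every non-root vertex $v$ has a graph edge to some proper ancestor in the forest (equivalently, to the vertex whose pebble is ``bumped'' when $v$ is introduced). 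I would begin by writing out this definition precisely and checking the basic closure properties that are used in the inductive argument (in particular, how disjoint unions of sub-patterns combine along a common root label while preserving the guard edges).

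Next I would perform the forward translation. By induction on a $\lGC^k_q$-formula $\phi(\tup x)$, I construct a labeled graph $F_\phi \in \GEkqLL$ such that, for every $G$, the number of assignments $\tup a$ in $G$ with $G \models \phi(\tup a)$ equals (up to the usual automorphism normalisation) $\left|\HOM(F_\phi, G)\right|$ counted with the appropriate labeling on $\tup x$. Atomic formulas and Boolean combinations are handled exactly as in the unguarded proof; a guarded quantifier $\exists^{\geq t} y\,(E(x,y)\wedge\psi)$ is realised by adding a new labeled vertex $y$ to $F_\psi$ together with the guard edge $xy$, which is precisely the extra ancestor-edge that $\GEkqLL$ requires. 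Because the guard edge $E(x,y)$ is automatically one of the edges joining $y$ to its parent, the constructed graph stays in $\GEkqLL$.

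For the converse, I would read off a $\lGC^k_q$-sentence from any $F \in \GEkqLL$. Fixing a guarded pebble forest cover of $F$, I traverse the forest top-down: each vertex $v$ corresponds to a guarded quantifier $\exists^{\geq t} x_v\, (E(x_u, x_v) \wedge \ldots)$ where $u$ is the ancestor witnessing the mandatory guard edge, and the remaining edges of $F$ incident to $v$ and already-introduced vertices are collected as conjuncts. The pebble and depth budgets bound the number of variables in scope and the quantifier depth by $k$ and $q$ respectively, giving a $\lGC^k_q$-formula. Counting the number of homomorphisms of $F$ into $G$ is then translated into counting the assignments satisfying this formula via a standard Möbius-type inclusion-exclusion over $\HOM$ vs.\ injective/partial homomorphisms, exactly as in the unguarded case.

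The main obstacle will be pinning down the definition of $\GEkqLL$ so that both directions are tight. On the one hand, the ancestor-edge condition must be strong enough that every $F \in \GEkqLL$ really admits a guarded formula (no ``orphan'' vertex without a legal guard). On the other hand, it must be weak enough that the inductive construction in the forward direction always lands in $\GEkqLL$, even when sub-patterns are glued at a common labeled root. A secondary technical point is the usual passage between homomorphism counts and satisfaction counts, which requires isolating the contribution of each labeled automorphism orbit of $F$; this is routine in the $\Ekq$ proof but needs to be checked to survive the additional guard-edge constraint without introducing spurious cancellations.
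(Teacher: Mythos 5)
Your sketch reproduces the paper's strategy for the \emph{labelled} core of the theorem (guarded construction trees / guarded forest covers, an inductive translation from formulae to graphs and back, with the guard edge of each quantifier doubling as the labelled-neighbour condition on the eliminated vertex), and up to the usual caveat that single graphs must be replaced by linear combinations to handle negation and the thresholds in $\exists^{\geq t}$, that part is essentially the paper's \cref{lem:gchomcap,lem:qg_from_gckq}. (Your ``M\"obius-type inclusion--exclusion over injective homomorphisms'' is a red herring: the translation works directly with homomorphism counts, and passing to injective counts would leave the class $\GEkq$.)

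The genuine gap is that you never say what $\lGC^k_q$-equivalence of two \emph{unlabelled} graphs means, and consequently never prove the theorem as stated. Every $\lGC$-formula has a free variable, so there are no guarded sentences; the paper therefore defines $G \equiv_{\lGC^k_q} H$ via a bijection $f \colon V(G) \to V(H)$ with $G, v \models \varphi(x) \iff H, f(v) \models \varphi(x)$ for all $v$ and all $\varphi$ (\cref{def:gcequiv}), while $\GEkqLL$ is obtained from the labelled class $\GEkq$ by deleting all labels. Your argument only yields the pointed statement (\cref{cor:gc-thm-with-labels}). The easy direction of the remaining step is summing over $v$; the hard direction is to recover, from $\hom(F^-, G) = \hom(F^-, H)$ for all unlabelled $F^- \in \GEkqLL$, i.e.\@ from $\sum_{v} \hom(F, G(1 \to v)) = \sum_{w} \hom(F, H(1 \to w))$ for all labelled $F$, a single bijection $\pi$ with $\hom(F, G(1 \to v)) = \hom(F, H(1 \to \pi(v)))$ simultaneously for all $F$. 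This does not follow from equality of the sums alone; the paper obtains it from closure of $\GEkq$ under products together with a Gram--Schmidt-type interpolation lemma (\cref{lem:interpolation}, used in \cref{lem:labelsBw}). Without this step your proof establishes only the labelled version of the theorem, not \cref{thm:guardedEkq_vs_guarded-logic}.
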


The main contribution of this work, however, concerns the relationship between the graph classes $\Ekq$ and the class $\mathcal{TW}_{k-1} \cap \mathcal{TD}_q$ of graphs which have treewidth at most $k-1$ \emph{and} treedepth at most $q$.
Given the results of \cite{dvorak_recognizing_2010,grohe_counting_2020}, one might think that elementary equivalence with respect to sentences in $\mathsf{C}^{k}_q = \mathsf{C}^{k} \cap \mathsf{C}_q$ is characterised by homomorphism indistinguishability with respect to $\mathcal{TW}_{k-1} \cap \mathcal{TD}_q$. The central result of this paper asserts that this intuition is wrong.
As a fist step towards this, we prove that the graph class $\Ekq$ and $ \TW_{k-1}\cap\TD_q$ are distinct if $q$ is sufficiently larger\footnote{All logarithms in this work are to the base $2$.} than $k$.

\begin{restatable}{theorem}{ekqtwtdsyntax}
	\label{thm:Ekq_tw-td}
	For  $q\geq 3$ and $2\leq k-1\leq\frac{q}{3+\log q}$, $\Ekq\subsetneq \TW_{k-1}\cap\TD_q$ and $\EParam{2}{q}\subsetneq \TW_{1}\cap\TD_q$.
\end{restatable}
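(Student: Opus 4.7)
The inclusion $\Ekq \subseteq \TW_{k-1} \cap \TD_q$ is immediate from the definitions: the underlying forest of a $k$-pebble forest cover of depth $q$ witnesses $\td \leq q$, and at each vertex the set of strict ancestors carrying the $k$ active pebble colours forms the bag of a tree decomposition of width at most $k-1$.

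For the strict inclusion in the case $k = 2$, the plan is to take the path $P_{2^q-1}$ as witness. Since it is a tree, $\tw(P_{2^q-1}) = 1$, and a standard divide-and-conquer decomposition gives $\td(P_{2^q-1}) = q$. The core of the argument is the lemma that $P_n \in \EParam{2}{q}$ if and only if $n \leq 2q - 1$. The ``if'' direction is realised by a V-shaped cover: the middle vertex is the root with pebble colour~$1$, and two descending chains of length $q-1$ with alternating pebble colours are attached on either side. For the ``only if'' direction, I would argue by induction on $q$ that with only two pebble colours, the constraint that the root's pebble does not repeat on the $F$-path to its $G$-neighbour $v_{i-1}$ forces this path to be monochromatically coloured with the other pebble; propagating the analogous constraint through successive edges then pins each half of the path down to a descending chain in $F$, yielding depth at least $\lceil (n+1)/2 \rceil$. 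Since $2^q - 1 > 2q - 1$ for $q \geq 3$, this separates $\EParam{2}{q}$ from $\TW_1 \cap \TD_q$.

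For the case $k \geq 3$ under the hypothesis $k - 1 \leq q/(3 + \log q)$, I would look for a ``thin'' $(k-1)$-dimensional witness graph of treewidth exactly $k-1$ and treedepth at most $q$: a natural candidate is a subgraph of a $(k-1) \times N$ grid, or more generally an iterated product of paths of suitable depths, with side lengths calibrated to roughly $2^{q/(k-1)}$ so that divide-and-conquer gives $\td \leq q$ while the width-$(k-1)$ grid structure fixes $\tw$. To show such a graph avoids $\EParam{k}{q}$ I would generalise the $k=2$ chain argument inductively: ``spending'' one pebble to anchor the outermost axis reduces each orthogonal direction to a $(k-1)$-pebble subproblem, and composing these cost contributions forces a required depth strictly greater than $q$ precisely under the stated hypothesis on $k-1$ and $q$.

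The main obstacle is this general lower bound for $k \geq 3$. Whereas the $k = 2$ argument rests on a very tight two-colour conflict, for larger $k$ one must carefully track how the $k$ pebbles can be redistributed across disjoint subtrees of $F$ and show that no global assignment accommodates every ``direction'' of the witness graph in small depth. The $3 + \log q$ factor in the hypothesis presumably reflects bookkeeping overhead counting the distinct pebble patterns that can arise along root-to-leaf paths of length $q$, which determines the precise threshold at which the induction succeeds.
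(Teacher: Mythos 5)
Your inclusion argument and the choice of $P_{2^q-1}$ as the $k=2$ witness match the paper's construction (the paper uses $\grid{1}{2^q-1}$), and your stated lemma that $P_n \in \EParam{2}{q}$ iff $n \leq 2q-1$ is in fact correct. However, the paper reaches this bound through a very different route: it first proves (\cref{lem:Ekq-cops}) that $\Ekq$ is characterised by Cops winning a \emph{monotone} $q$-round $k$-cops-and-robber game, and then invokes F\"urer's result (\cref{lem:rounds-path}) that Robber survives the (non-monotone) game on $P_\ell$ for $\lceil(\ell-1)/2\rceil$ rounds; since Robber winning non-monotonically implies winning monotonically, this certifies $P_\ell \notin \EParam{2}{q}$. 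Your proposed direct induction on the pebble-colour conflict along the path is a reasonable alternative sketch for $k=2$, but as written it is not rigorous --- in particular, the claim that the root's pebble constraint forces each half of the path to be a descending chain in $F$ assumes the root of $F$ maps to a vertex of the path, and does not rule out nonstandard forest covers where the $F$-order interleaves path vertices in more complex ways; you would need to show that any height-$q$ pebble forest cover collapses to something V-shaped.

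The real gap is the $k \geq 3$ case, which you explicitly flag as an ``obstacle'' and do not resolve. You correctly guess that the witness should be a $(k-1)\times\ell$ grid with $\ell$ calibrated so that $\td \leq q$, but a hand-waving appeal to ``spending one pebble per axis'' and ``composing cost contributions'' does not yield a lower bound; the pebbles are reusable and the forest $F$ need not respect the grid's product structure. The paper's actual argument (\cref{lem:lower-bound-rounds}, with the structural \cref{lem:third-comp-single,lem:size-non-good}) shows, by a careful analysis of separators of size $\leq h+1$ in the grid $\grid{h}{\ell}$, that Robber can always retreat to a ``good'' component (one containing a full column) and that Cops can shrink the robber's escape space by at most two vertices per round. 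This gives the linear-in-$\ell h$ lower bound $q > \frac{h(\ell - h + 2)}{4}$ on the rounds Robber survives, which together with $\td(\grid{h}{\ell}) \leq h\lceil\log(\ell+1)\rceil$ and the calibration $\ell \approx \frac{q}{k-1}(k+1)$ yields the theorem. Without an analogue of that game-theoretic (or a correspondingly precise combinatorial) lower bound, your proposal does not establish the strict inclusion $\Ekq \subsetneq \TW_{k-1}\cap\TD_q$ for $k \geq 3$.
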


Towards \cref{thm:Ekq_tw-td}, we give an equivalent characterisation of $\Ekq$ via a monotone cops-and-robber game, which is essentially the standard game for treewidth where one additionally counts the number of rounds the cops need to capture the robber.
Here, ``monotone'' refers to a restriction of Cops, who is only allowed to move to positions that do not enlarge the escape-space of Robber.
Building on \cite{Furer01}, we then prove that $\Ekq$ is a proper subclass of $\mathcal{TW}_{k-1} \cap \mathcal{TD}_q$, for $q$ sufficiently larger than $k$.
Additionally, we provide an analysis of various notions designed to restrict both width and depth of a decomposition and show that all of them are equivalent.
Adding to the original definition of $\Ekq$ via $k$-pebble forest covers of depth $q$, which can be interpreted as treedepth decompositions augmented by a width measure, we introduce a way to measure the depth of tree decompositions. Finally, we define $k$-construction trees of elimination depth $q$, another equivalent notion, which relates to the machinery used by Dvo\v{r}\'ak~\cite{dvorak_recognizing_2010}.

However, the, let us say syntactical, separation of the graph classes $\Ekq$ and $\mathcal{TW}_{k-1} \cap \mathcal{TD}_q$ from \cref{thm:Ekq_tw-td} does not suffice to separate their homomorphism indistinguishability relations semantically. In fact, it could well be that all graphs which are homomorphism indistinguishable over $\Ekq$ are also homomorphism indistinguishable over $\mathcal{TW}_{k-1} \cap \mathcal{TD}_q$.

That such phenomena do not arise under certain mild assumptions was recently conjectured by Roberson~\cite{roberson_oddomorphisms_2022}. His conjecture asserts that every graph class which is closed under taking minors and disjoint unions is homomorphism distinguishing closed. Here, a graph class $\mathcal{F}$ is \emph{homomorphism distinguishing closed} if it satisfies the following maximality condition: For every graph $F \not\in \mathcal{F}$, there exists two graphs $G$ and $H$ which are homomorphism indistinguishable over $\mathcal{F}$ but have different numbers of homomorphism from $F$.

Since $\Ekq$, $\mathcal{TW}_{k-1}$, and $\mathcal{TD}_q$ are closed under disjoint unions and minors, the confirmation of Roberson's conjecture would readily imply the semantic counterpart of \cref{thm:Ekq_tw-td}. Unfortunately, Roberson's conjecture is wide open and has been confirmed only for the class of all planar graphs \cite{roberson_oddomorphisms_2022}, $\mathcal{TW}_{k-1}$ \cite{neuen_homomorphism-distinguishing_2023}, and for graph classes which are essentially finite \cite{seppelt_logical_2023}. Guided by \cite{neuen_homomorphism-distinguishing_2023}, we add to this short list of examples:

\begin{restatable}{theorem}{tdclosed}
	\label{thm:td-closed}
	For $q \geq 1$, the class $\mathcal{TD}_q$ is homomorphism distinguishing closed.
\end{restatable}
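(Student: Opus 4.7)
We follow the strategy of Neuen~\cite{neuen_homomorphism-distinguishing_2023}, who proved the analogous statement for $\TW_{k-1}$, and adapt the Cai--Fürer--Immerman (CFI) construction from the bounded-variable to the bounded-quantifier-rank setting. By Grohe's characterisation \cite{grohe_counting_2020}, homomorphism indistinguishability over $\TD_q$ coincides with $\lC_q$-equivalence, so the task reduces to the following: for every graph $F$ with $\td(F) > q$, produce a pair $(G, H)$ satisfying $G \equiv_{\lC_q} H$ yet $\HOM(F, G) \neq \HOM(F, H)$.

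Given such an $F$, we build $G$ and $H$ by CFI substitution: each vertex $v \in V(F)$ is replaced by a parity gadget of size $2^{\deg(v)-1}$, each edge by a matching gadget coupling the adjacent vertex-gadgets, and $H$ is obtained from $G$ by flipping a single edge-gadget (the ``twist''). A standard parity-counting argument---expanding both $\HOM(F,G)$ and $\HOM(F,H)$ as sums over parity assignments on $V(F)$ and tracking the sign contributed by the twist---shows that $\HOM(F, G) \neq \HOM(F, H)$, so $F$ itself distinguishes the pair.

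The heart of the proof is the $\lC_q$-equivalence of $G$ and $H$. In the $q$-round bijective EF game characterising $\lC_q$, Spoiler's successive moves project onto a sequence of vertices $v_1, \ldots, v_q \in V(F)$ via the gadget structure, and Spoiler wins only if this sequence realises a treedepth elimination order of $F$ of depth at most $q$. Since $\td(F) > q$ by hypothesis, no such sequence exists: after any $r \leq q$ Spoiler probes, the inequality $\td(F \setminus v) \geq \td(F) - 1$ (applied inductively) ensures that some connected component of $F \setminus \{v_1, \ldots, v_r\}$ remains unpinned, and Duplicator answers with a bijection that silently slides the twist into this component. The main obstacle is formalising this correspondence between elimination orders of $F$ and winning Spoiler strategies in the CFI game on the gadget graph---the treedepth analogue of the well-known treewidth/separator correspondence underlying the classical CFI lower bounds, now phrased in terms of round count rather than pebble count.
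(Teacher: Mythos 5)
Your overall strategy matches the paper's: invoke Grohe's characterisation of $\TD_q$-homomorphism-indistinguishability as $\lC_q$-equivalence, apply a CFI-type construction to any $F$ with $\td(F) > q$, show $\hom(F,G_0) \neq \hom(F,G_1)$ by a parity argument, and prove $G_0 \equiv_{\lC_q} G_1$ via a Duplicator strategy driven by the treedepth cops-and-robber game. (Your edge-twist gadgets and Roberson's $G_U$ variant used in the paper are equivalent.)

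However, your proof has a genuine gap precisely where you flag the ``main obstacle'': the construction of the Duplicator strategy is stated but not carried out, and this is the entire content of the paper's Proposition~\ref{prop:closedness-td-core}. Your formulation also contains two imprecisions that would need repair. First, ``Spoiler's moves project onto a sequence $v_1,\dots,v_q$ realising a treedepth elimination order'' is not the right correspondence: Spoiler plays adaptively, reacting to Duplicator's bijections, so what is needed is a Cops \emph{strategy} in the non-monotone game $\CR^q_q(F)$, not a fixed elimination sequence. The link to $\td(F) > q$ is via Lemma~\ref{lem:games}(2), which states Robber wins $\CR_q(F)$ iff $\td(F) > q$. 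Your cited fact $\td(F \setminus v) \geq \td(F)-1$ is true, but it only says some component is nonempty after $q$ deletions; it does not by itself yield a coherent Robber escape strategy nor, crucially, does it say how Duplicator constructs a bijection consistent with the gadget structure each round. The paper does this by maintaining an explicit invariant: a vertex $u$ (Robber's position) and an isomorphism $\phi \colon G_{\{u\}} \to G_{\{u_0\}}$ preserving the projection $\rho$, avoiding the pebbled vertices, and agreeing with the current position; when Spoiler moves, Duplicator composes $\phi$ with the path-sliding isomorphism of Lemma~\ref{lem:neuen4.3} along Robber's escape path. Without this bookkeeping, ``silently slides the twist into this component'' is an intuition, not a proof.

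Second, a minor point: when $F$ is disconnected you must be careful that the homomorphism counts actually differ and do not both vanish. The paper handles this (Proposition~\ref{prop:closed-core}) by working with one connected component $F^1 \notin \TD_q$, forming $F^1_0 + K_n$ and $F^1_1 + K_n$, and invoking closure under disjoint union from \cite{seppelt_logical_2023} together with the strict inequality $\hom(F^1,F^1_0) > \hom(F^1,F^1_1)$ of Roberson. Your description applies the CFI construction to all of $F$ at once; that can be made to work (every component $F_i$ maps into $(F_i)_0$, so no factor vanishes), but it needs to be said.
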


Combining this with the results of \cite{neuen_homomorphism-distinguishing_2023}, we get that $\TW_{k-1}\cap\TD_q$ is homomorphism distinguishing closed as well.
We then set out to separate homomorphism indistinguishability over $\Ekq$ and $\TW_{k-1}\cap\TD_q$.
Despite not being able to prove that $\Ekq$ is homomorphism distinguishing closed, we prove that the homomorphism distinguishing closure of $\Ekq$, i.e.\@ the smallest homomorphism distinguishing closed superclass of $\Ekq$, is a proper subclass of $\TW_{k-1}\cap\TD_q$, for $q$ sufficiently larger than $k$.
Written out, \cref{thm:Ekq_tw-td-semantics} asserts the following:
Whenever $q$ is sufficiently large in terms of $k$,
(1) there exist graphs which are homomorphism indistinguishable over $\Ekq$ but not over $\TW_{k-1}\cap\TD_q$, and
(2) for every graph $F \not\in \mathcal{TW}_{k-1} \cap \mathcal{TD}_q$, there exist graphs $G$ and $H$ which are homomorphism indistinguishable over $\Ekq$ but have a different number of homomorphisms from $F$.

\begin{restatable}{theorem}{ekqtwtdsemantics}
	\label{thm:Ekq_tw-td-semantics}
	For $q\geq 1$ and $2 \leq k-1\leq\frac{q}{3+\log q}$,  $\cl(\Ekq) \subsetneq \TW_{k-1}\cap\TD_q$.
\end{restatable}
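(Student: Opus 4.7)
The containment $\cl(\Ekq) \subseteq \TW_{k-1}\cap\TD_q$ would be established in two steps. First, unfolding the definitions of a $k$-pebble forest cover of depth $q$ yields both a tree decomposition of width at most $k-1$ and a treedepth decomposition of depth at most $q$, giving $\Ekq \subseteq \TW_{k-1}\cap\TD_q$. Next, the intersection $\TW_{k-1}\cap\TD_q$ is itself homomorphism distinguishing closed: \cref{thm:td-closed} handles $\TD_q$, the analogous result for $\TW_{k-1}$ is established in \cite{neuen_homomorphism-distinguishing_2023}, and a short general observation (the intersection of two homomorphism distinguishing closed classes is again homomorphism distinguishing closed, since $\equiv_{\mathcal{F}_i}$ refines $\equiv_{\mathcal{F}_1\cap\mathcal{F}_2}$) transfers closedness to the intersection. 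Monotonicity of $\cl$ then yields $\cl(\Ekq) \subseteq \cl(\TW_{k-1}\cap\TD_q) = \TW_{k-1}\cap\TD_q$.

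For the strict inclusion, the starting point is that $\cl(\mathcal F)$ preserves the relation $\equiv_{\mathcal F}$ of homomorphism indistinguishability. Hence $\cl(\Ekq) \neq \TW_{k-1}\cap\TD_q$ reduces to separating the two equivalence relations: I would exhibit graphs $G, H$ and some $F \in \TW_{k-1}\cap\TD_q$ such that $G \equiv_{\Ekq} H$ while $\hom(F,G) \neq \hom(F,H)$. Using that $\equiv_{\Ekq}$ coincides with $\lC^k_q$-equivalence, namely the elementary reproof (established earlier in the paper) of the Dawar--Jakl--Reggio characterisation \cite{dawar_lovasz-type_2021}, it suffices to produce $\lC^k_q$-equivalent graphs $G, H$ distinguished by homomorphism counts from some $F \in \TW_{k-1}\cap\TD_q$.

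Such a triple $(G, H, F)$ I would build from the Fürer-style construction \cite{Furer01} already underlying \cref{thm:Ekq_tw-td}. Concretely, pick a base graph $X \in \TW_{k-1}\cap\TD_q$ on which no monotone strategy with $k$ cops captures the robber within $q$ rounds, certifying $X \notin \Ekq$. Applying the CFI/Fürer gadget construction to $X$ produces two non-isomorphic graphs $G$ and $H$: a Fürer-style pebble-game analysis caps the quantifier depth needed to distinguish $G$ from $H$, yielding $G \equiv_{\lC^k_q} H$, while a direct homomorphism-count computation gives $\hom(X, G) \neq \hom(X, H)$, so $X$ itself plays the role of $F$. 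The main obstacle is parameter synchronisation: Fürer's lower bound on the required quantifier depth scales with $\log |V(X)|$, whereas the existence of a suitable $X \in \TW_{k-1}\cap\TD_q \setminus \Ekq$ forces $|V(X)|$ to be large enough to defeat every monotone $k$-cop strategy within $q$ rounds. The quantitative constraint $2 \leq k-1 \leq q/(3+\log q)$ is precisely the balance that makes both requirements simultaneously satisfiable.
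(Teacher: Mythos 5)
Your inclusion argument matches the paper exactly: $\Ekq \subseteq \TW_{k-1}\cap\TD_q$ from \cref{cor:Ekq-subset-tw-td}, then $\TW_{k-1}\cap\TD_q$ is homomorphism distinguishing closed via \cite{neuen_homomorphism-distinguishing_2023}, \cref{thm:td-closed}, and the intersection lemma (the paper cites \cite[Lemma~6.1]{roberson_oddomorphisms_2022}, but your short direct argument is the same thing), and monotonicity of $\cl$ finishes it. The strict inclusion is also organised the same way as in the paper: build a CFI-style pair $G_0 \equiv_{\Ekq} G_1$ over a base graph $X \in \TW_{k-1}\cap\TD_q$, with $\hom(X, G_0) \neq \hom(X, G_1)$ coming from \cref{lem:roberson3.7}.

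There is, however, one genuine gap in the crucial step. You certify the base graph by ``$X \notin \Ekq$, i.e.\ no \emph{monotone} $k$-cop strategy captures Robber within $q$ rounds.'' But the CFI/pebble-game machinery that yields $G_0 \equiv_{\lC^k_q} G_1$ requires Robber to win the \emph{non-monotone} game $\CR_q^k(X)$, not merely the monotone one: Duplicator's strategy in the bijective $k$-pebble game is built by shadowing Robber, and Spoiler is under no obligation to move pebbles monotonously, so the Cops moves that Duplicator must answer can be non-monotone. Since the equivalence of the monotone and non-monotone variants of the game characterising $\Ekq$ is precisely what the authors leave open (it would give homomorphism distinguishing closedness of $\Ekq$ itself), ``$X \notin \Ekq$'' does \emph{not} hand you a non-monotone Robber strategy for free. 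The paper side-steps this by proving the lower bound on the number of rounds (\cref{lem:lower-bound-rounds}, \cref{lem:rounds-path}) directly for the non-monotone game on grids, and then packaging it in \cref{lem:CR_tw-td} as ``there is a connected $G \in \TW_{k-1}\cap\TD_q$ on which Robber wins the non-monotone $\CR_q^k(G)$.'' Your proposal needs to make the same non-monotone claim explicit, and justify it, rather than inferring it from non-membership in $\Ekq$. Relatedly, the ``Fürer quantifier-depth scales with $\log|V(X)|$'' framing is not quite how the balance works here: the $\log$ enters through the treedepth bound $\td(\grid{h}{\ell}) \leq h\lceil\log(\ell+1)\rceil$ needed to keep $X \in \TD_q$, while the number of surviving rounds of Robber scales linearly in $h\ell$; the hypothesis $k-1 \leq q/(3+\log q)$ is exactly what lets both constraints be met simultaneously.
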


Besides obtaining \cref{thm:Ekq_tw-td-semantics}, we distil the challenge of proving that $\Ekq$ is homomorphism distinguishing closed to the question whether the monotone variant of the cops-and-robber game is equivalent to the non-monotone variant.
In general this equivalence between the monotone and non-monotone variant of a graph searching game is a non-trivial property.
There are games where the two variants are equivalent, such as the games corresponding to treewidth \cite{seymour_graph_1993} and treedepth \cite{giannopoulou_lifo-search_2012}, as well as games where they are not, such as games corresponding to directed treewidth \cite{KreutzerO11} or hypertreewidth \cite{Gottlob03}.

\section{Preliminaries}
\subparagraph{Notation.}
By $[k]$ we denote the set $\{1, \dots, k\}$. For a function $f$, we denote the domain of $f$ by $\dom(f)$. The \emph{image} of $f$ is the set $\img(f) \coloneqq \{f(x) ~|~ x \in \dom(f)\}$.
The \emph{restriction} of a function $f \colon A \to C$ to some set $B \subseteq A$ is the function $f\restrict{B} \colon B \to C$ with $f\restrict{B}(x) = f(x)$ for $x \in B$.
For functions $f \colon A \to C$, $g \colon B \to C$ that agree on $A \cap B$, we write $f \sqcup g$ for the union of $f$ and $g$, that is, the function mapping $x$ to $f(x)$ if $x \in A$ and to $g(x)$ if $x \in B$.

We use bold letters to denote tuples. The tuple elements are denoted by the corresponding regular letter together with an index. For example, $\tup{a}$ stands for the tuple $(a_1, \dots, a_n)$.

\subparagraph{Graphs and Labels.}
A graph $G$ is a tuple $(V(G), E(G))$, where $V(G)$ is a finite set of vertices and $E(G) \subseteq \binom{V(G)}{2}$ is the set of edges. We usually write $uv$ or $vu$ to denote the edge $\{u, v\} \in E(G)$. Unless otherwise specified, all graphs are assumed to be simple: They are undirected, unweighted and contain neither loops nor parallel edges. We denote the class of all graphs by $\CG$.

A \emph{$k$-labelled graph} $G$ is a graph together with a partial function $\nu_G \colon [k] \parto V$ that assigns labels from the finite set $[k] = \{1, \dots, k\}$ to vertices of $G$. A label thus occurs at most once in a graph, a single vertex can have multiple labels, and not all labels have to be assigned. A graph where every vertex has at least one label is called \emph{fully labelled}.
For $\ell \in [k]$ and $v \in V(G)$, we write $G(\ell \to v)$ to denote the graph obtained from $G$ by setting $\nu_{G(\ell \to v)}(\ell) = v$. We can \emph{remove} a label $\ell$ from a graph $G$, which yields a copy $G'$ of $G$ where $\nu_{G'}(\ell) = \bot$ and $\nu_{G'}(\ell') = \nu_{G}(\ell')$ for all $\ell' \neq \ell$. The \emph{product} $G_1G_2$ of two labelled graphs is the graph obtained by taking the disjoint union of $G_1$ and $G_2$, identifying vertices with the same label, and suppressing any loops or parallel edges that might be created.
We denote the class of all $k$-labelled graphs by $\CG_k$.

We call $H$ a \emph{subgraph} of $G$ if $H$ can be obtained from $G$ by removing vertices and edges. $H$ is a \emph{minor} of $G$ if it can be obtained from $G$ by removing vertices, removing edges, and \emph{contracting} edges. We contract an edge $uv$ by removing it and identifying its incident vertices. For labelled graphs, the new vertex is labelled by the union of labels of $u$ and $v$.

A graph is \emph{connected} if there exists a path between any two vertices. A \emph{tree} is a graph where any two vertices are connected by exactly one path. The disjoint union of one or more trees is called a \emph{forest}. A \emph{rooted tree} $(T, r)$ is a tree $T$ together with some designated vertex $r \in V(T)$, the \emph{root} of $T$. A \emph{rooted forest} $(F, \tup{r})$ is a disjoint union of rooted trees. The \emph{height} of a rooted tree is equal to the number of vertices on the longest path from the root to the leaves. The height of a rooted forest is the maximum height over all its connected components.

At times, the following alternative definition is more convenient. We can view a rooted forest $(F, \tup{r})$ as a pair $(V(F), \preceq)$, where $\preceq$ is a partial order on $V(F)$ and for every $v \in V(F)$ the elements of the set $\{u \in V(F) \mid u \preceq v\}$ are pairwise comparable: The minimal elements of $\preceq$ are precisely the roots of $F$, and for any rooted tree $(T, r)$ that is part of $F$ we let $v \preceq w$ if $v$ is on the unique path from $r$ to $w$.

The height of a rooted forest $(F, \tup{r})$ is then given by the length of the longest $\preceq$-chain. A rooted tree $(T', r')$ is a \emph{subtree} of a tree $(T, r)$ if $V(T') \subseteq V(T)$ and $\preceq^{T'}$ is the restriction of $\preceq^T$ to $V(T')$. Note that the subgraph of $T$ induced by $V(T')$ might not be a tree, since the vertices of $T'$ can be interleaved with vertices that do not belong to $T'$. We call a subtree $T'$ of $T$ \emph{connected} if its induced subgraph on $T$ is connected.

\subparagraph{Homomorphisms.}
A \emph{homomorphism} from a graph $F$ to a graph $G$ is a mapping $h \colon V(F) \to V(G)$ that satisfies $uv \in E(F) \implies h(u)h(v) \in E(G)$. For $k$-labelled graphs, we additionally require that $h(\nu_F(\ell)) = \nu_G(\ell)$ for all $\ell \in [k]$.
We denote the set of homomorphisms from $F$ to $G$ by $\HOM(F, G)$. The number of homomorphisms from $F$ to $G$ we denote by $\hom(F, G) \coloneqq |\HOM(F, G)|$. We also write $\HOM(F, G; a_1 \mapsto b_1, \dots, a_n \mapsto b_n)$ to denote the set of homomorphism $h\colon F \to G$ satisfying $h(a_i) = b_i$ for $i \in [n]$.
Two graphs $G$ and $H$ are \emph{homomorphism indistinguishable} over a class of graphs $\mathcal{F}$ if $\hom(F, G) = \hom(F, H)$ for all $F \in \mathcal{F}$.

\subparagraph{Logic of Graphs.}

We will mainly consider \emph{counting first-order logic} $\lC$. $\lC$ extends regular first-order logic $\lFO$ by quantifiers $\exists^{\geq t}$, for $t \in \N$. Consequently, we can build a $\lC$-formula in the usual way from atomic formulae; variables $x_1, x_2, \dots$; logical operators $\land, \lor, \impl, \neg$; and quantifiers $\forall, \exists, \exists^{\geq t}$. The atomic formulae in the language of graphs are $E\alpha\beta$ and $\alpha = \beta$ for arbitrary variables $\alpha, \beta$.

An occurrence of a variable $x$ is called \emph{free} if it is not in the scope of any quantifier. The \emph{free variables} $\free(\phi)$ of a formula $\phi$ are precisely those that have a free occurrence in $\phi$. A formula without free variables is called a \emph{sentence}. We often write $\phi(x_1, \dots, x_n)$ to denote that the free variables of $\phi$ are among $x_1, \dots, x_n$. For a graph $G$, it usually depends on the interpretation of the free variables whether $G \models \phi(x_1, \dots, x_n)$. We write $G, v_1, \dots, v_n \models \phi(x_1, \dots, x_n)$ or $G \models \phi(v_1, \dots, v_n)$ if $G$ satisfies $\phi$ when $x_i$ is interpreted by $v_i$. We might also give an explicit \emph{interpretation function} $\I \colon \free(\phi) \to V(G)$, writing $G, \I \models \phi$.

We generalise the notion of $\lL$-equivalence, writing $G, v_1, \dots, v_n \equiv_\lL H, w_1, \dots, w_n$ to denote that for all formulae $\phi(x_1, \dots, x_n) \in \lL$ it holds that $G \models_\lL \phi(v_1, \dots, v_n) \Leftrightarrow H \models_\lL \phi(w_1, \dots, w_n)$. Note that for labelled graphs, such an interpretation function is implicit: If the indices of $\free(\phi)$ are a subset of the labels of $G$, then we can interpret the variables $x_i$ by the vertex with the label $i$, that is, $\I(x_i) = \nu(i)$. The semantics of $\lC$ can then be stated succinctly in terms of label assignments.

\begin{definition}[$\lC$ semantics of labelled graphs]
  Let $\phi \in \lC$ and let $G$ be a labelled graph, such that $\nu(i) \in V(G)$ for all $x_i \in \free(\phi)$. Then $G \models \phi$ if
  \begin{itemize}
  \item $\phi = (x_i = x_j)$ and $\nu(i) = \nu(j)$
  \item $\phi = Ex_ix_j$ and $\nu(i)\nu(j) \in E(G)$
  \item $\phi = \psi \lor \theta$ and $G \models \psi$ or $G \models \theta$
  \item $\phi = \exists^{\geq t} x_\ell \psi(x_\ell)$ and there exist distinct $v_1, \dots, v_t$, such that $G(\ell \to v_i) \models \psi$ for all $i \in [t]$
  \end{itemize}
\end{definition}

Note that for labelled graphs this is equivalent to extending the standard semantics of $\lFO$ by the following rule: It is $G, v_1, \dots, v_n \models \exists^{\geq t}y \psi(x_1, \dots, x_n, y)$ if there exist distinct elements $u_1, \dots, u_t \in V(G)$ such that $G \models \psi(v_1, \dots, v_n, u_i)$ for all $i \in [t]$.

We sometimes write $\exists^{=t}x \phi(x)$ for $\exists^{\geq t} x \phi(x) \land \neg\exists^{\geq t+1}x \phi(x)$. We also write $\top$ for $\forall x (x=x)$ and $\bot$ for $\neg \top$.
As we already did above, we will often restrict ourselves to the connectives $\neg, \lor$ and the quantifier $\exists^{\geq t}$. This set of symbols is indeed equally expressive by De Morgan's laws and observing that $\exists x \phi(x) \equiv \exists^{\geq 1}x \phi(x)$ and $\forall x \phi(x) \equiv \neg\exists x \neg\phi(x)$.

The quantifier rank $\qr(\cdot)$ of a formula is defined inductively as follows. It is $\qr(\phi) = 0$ for atomic formulae $\phi$, $\qr(\neg\phi) = \qr(\phi)$, $\qr(\phi \lor \psi) = \max \{ \qr(\phi), \qr(\psi)\}$ and $\qr(\exists^{\geq t}x \phi) = 1 + \qr(\phi)$. The \emph{quantifier-rank-$q$ fragment} $\lC_q$ of counting first order logic consists of all formulae of quantifier rank at most $q$.

Instead of restricting the quantifer rank, we can also restrict the number of distinct variables that are allowed to occur in a formula. By $\lC^k$ we denote the \emph{$k$-variable fragment} of $\lC$, consisting of all formulae using at most $k$ different variables. Similarly, the \emph{$k$-variable quantifier-rank-$q$ fragment} is defined as $\lC^k_q \coloneqq \lC^k \cap \lC_q$. Note that these are purely syntactic definitions.

\subparagraph{Treewidth and Treedepth.}
Treewidth is a structural graph parameter that measures how close a graph is to being a tree. It is usually defined in terms of \emph{tree decompositions}.
\begin{definition}\label{def:treedecomp}
  A tree decomposition $(T, \beta)$ of a graph $G$ is a tree $T$ together with a function $\beta \colon V(T) \to 2^{V(G)}$ satisfying
  \begin{itemize}
  \item $\bigcup_{t \in V(T)} \beta(t) = V(G)$,
  \item for all $uv \in E(G)$ there is a $t \in V(T)$ with $u, v \in \beta(t)$,
  \item for all $v \in V(G)$ the set $\beta^{-1}(v) = \{t \in V(T) \mid v \in \beta(t)  \}$ is connected in $T$.
  \end{itemize}
\end{definition}

The sets $\beta(t)$ are called \emph{bags}. The \emph{width} of a tree decomposition is equal to $\max_{t \in V(T)}|\beta(t)|-1$. The treewidth $\tw(G)$ of a graph $G$ is the minimum width over all tree decompositions of $G$. We denote the class of graphs of treewidth at most $k$ by $\TW_k$. For an example of a tree decomposition see \cref{fig:treewidthtrees}.

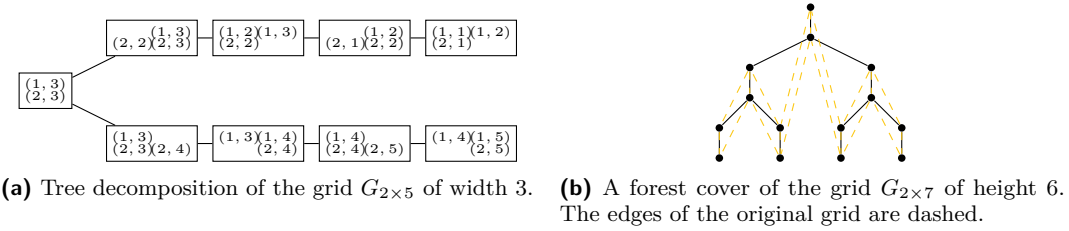
\begin{figure}
	\begin{subfigure}[t]{0.5\textwidth}
  \centering
  \begin{tikzpicture}[nodes={draw, minimum size=.25cm, inner sep=1pt}, scale=.7]
    \node at (0, 0) (v1) {
      \begin{tikzpicture}[scale=.1, node font=\tiny]
        \node[draw=none] at (0, 0) {$(1,3)$};
        \node[draw=none] at (0, -1.5) {$(2,3)$};
      \end{tikzpicture}
    };

    \node at (2, 1) (v21) {
      \begin{tikzpicture}[scale=.1, node font=\tiny]
      \node[draw=none] at (0, 0) {$(1,3)$};
      \node[draw=none] at (0, -1.5) {$(2,3)$};
      \node[draw=none] at (-5, -1.5) {$(2,2)$};
      \end{tikzpicture}
    };
	\node at (2, -1) (v22) {
		\begin{tikzpicture}[scale=.1, node font=\tiny]
		\node[draw=none] at (0, 0) {$(1,3)$};
		\node[draw=none] at (0, -1.5) {$(2,3)$};
		\node[draw=none] at (5, -1.5) {$(2,4)$};
		\end{tikzpicture}
	};

	\node at (4, 1) (v31) {
		\begin{tikzpicture}[scale=.1, node font=\tiny]
		\node[draw=none] at (0, 0) {$(1,3)$};
		\node[draw=none] at (-5,0) {$(1,2)$};
		\node[draw=none] at (-5, -1.5) {$(2,2)$};
		\end{tikzpicture}
	};
	\node at (4, -1) (v32) {
		\begin{tikzpicture}[scale=.1, node font=\tiny]
		\node[draw=none] at (0, 0) {$(1,3)$};
		\node[draw=none] at (5,0) {$(1,4)$};
		\node[draw=none] at (5, -1.5) {$(2,4)$};
		\end{tikzpicture}
	};

	\node at (6, 1) (v41) {
		\begin{tikzpicture}[scale=.1, node font=\tiny]
		\node[draw=none] at (0, 0) {$(1,2)$};
		\node[draw=none] at (0, -1.5) {$(2,2)$};
		\node[draw=none] at (-5, -1.5) {$(2,1)$};
		\end{tikzpicture}
	};
	\node at (6, -1) (v42) {
		\begin{tikzpicture}[scale=.1, node font=\tiny]
		\node[draw=none] at (0, 0) {$(1,4)$};
		\node[draw=none] at (0, -1.5) {$(2,4)$};
		\node[draw=none] at (5, -1.5) {$(2,5)$};
		\end{tikzpicture}
	};

	\node at (8, 1) (v51) {
		\begin{tikzpicture}[scale=.1, node font=\tiny]
		\node[draw=none] at (0, 0) {$(1,2)$};
		\node[draw=none] at (-5,0) {$(1,1)$};
		\node[draw=none] at (-5, -1.5) {$(2,1)$};
		\end{tikzpicture}
	};
	\node at (8, -1) (v52) {
		\begin{tikzpicture}[scale=.1, node font=\tiny]
		\node[draw=none] at (0, 0) {$(1,4)$};
		\node[draw=none] at (5,0) {$(1,5)$};
		\node[draw=none] at (5, -1.5) {$(2,5)$};
		\end{tikzpicture}
	};

    \draw (v51) -- (v41) -- (v31) -- (v21) -- (v1) -- (v22) -- (v32) -- (v42) -- (v52);

  \end{tikzpicture}
  \caption{Tree decomposition of the grid $\grid{2}{5}$ of width $3$.}
  \label{fig:treewidthtrees}
\end{subfigure}
 	~
	\begin{subfigure}[t]{0.47\textwidth}
	\centering
	\begin{tikzpicture}[scale=0.4,smallVertex/.style={fill=black, inner sep=1, circle}]
  		\node[smallVertex] (a03) at (0,-1) {};
  		\node[smallVertex] (a13) at (0,-2) {};
  		\node[smallVertex] (a01) at (-2,-3) {};
  		\node[smallVertex] (a11) at (-2,-4) {};
  		\node[smallVertex] (a05) at (2,-3) {};
  		\node[smallVertex] (a15) at (2,-4) {};
  		\node[smallVertex] (a00) at (-3,-5) {};
  		\node[smallVertex] (a10) at (-3,-6) {};
  		\node[smallVertex] (a02) at (-1,-5) {};
  		\node[smallVertex] (a12) at (-1,-6) {};
  		\node[smallVertex] (a06) at (3,-5) {};
  		\node[smallVertex] (a16) at (3,-6) {};
  		\node[smallVertex] (a04) at (1,-5) {};
  		\node[smallVertex] (a14) at (1,-6) {};
  		
  		\draw (a10) -- (a00) -- (a11) -- (a02) -- (a12);
  		\draw (a16) -- (a06) -- (a15) -- (a04) -- (a14);
  		\draw (a11) -- (a01) -- (a13) -- (a05) -- (a15);
  		\draw (a03) -- (a13);
  		
	    \foreach \i in {0,1,2,3,4,5,6}{
	    	\foreach \j in {0,1}{
	    		\ifthenelse{\NOT \i=0}{
	    			\tikzmath{
	    				integer \l;
	    				\l = \i - 1;}
	    			\draw[dashed, rwth-orange] (a\j\i) -- (a\j\l);
	    		}{}
	    	}
	    	\draw[dashed, rwth-orange] (a0\i) -- (a1\i);
	    }
	\end{tikzpicture}
	\caption{A  forest cover of the grid $\grid{2}{7}$ of height $6$. The edges of the original grid are dashed.}
	\label{fig:treedepthexample}
\end{subfigure}
 	\caption{Tree decomposition and forest cover of grids.}
	\label{fig:grid-decompositions}
\end{figure}

Treedepth, on the other hand, can be thought of as measuring how close a graph is to being a star. Alternatively, we may think of it as extending the notion of height beyond rooted forests. It is defined for a graph $G$ as the minimum height of a forest $F$ over the vertices of $G$, such that all edges in $G$ have an ancestor-descendant relationship in $F$.

\begin{definition}
  A \emph{forest cover} of a graph $G$ is a rooted forest $(F, \tup{r})$ with $V(F) = V(G)$, such that for every edge $uv \in E(G)$ it is either $u \preceq v$ or $v \preceq u$.
\end{definition}

The treedepth $\td(G)$ of $G$ is the minimum height of a forest cover of $G$. We denote the class of all graphs of treedepth at most $q$ by $\TD_q$. For an example of a forest cover see \cref{fig:treedepthexample}.

It is possible to construct a tree decomposition from a forest cover $(F, \tup{r})$. This is achieved by considering a path of bags, each containing the vertices on a path from $\tup{r}$ to the leaves of $F$. It is not hard to see that there is an ordering of these bags that satisfies the conditions of \cref{def:treedecomp}.
This yields the following relation between treedepth and treewidth.

\begin{fact}
  For every graph $G$, it holds that $\tw(G) \leq \td(G) - 1$.
\end{fact}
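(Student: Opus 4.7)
My plan is to construct, from any optimal forest cover of $G$, a tree decomposition of $G$ whose width is at most one less than the height of that cover. Let $(F,\vec r)$ be a forest cover of $G$ of height $q=\td(G)$. To make the combinatorics cleaner I would first reduce to the case that $F$ is a tree by adjoining a fresh common root $r_0$ above $\vec r$; this increases no bag size we care about because $r_0$ will only appear in an empty ``gluing'' bag.

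The bags I would use are exactly those described in the paragraph preceding the statement: for each leaf $\ell$ of $F$, let $B_\ell \coloneqq \{u\in V(F)\mid u\preceq \ell\}$ be the set of ancestors of $\ell$ together with $\ell$ itself, i.e.\ the root-to-leaf path ending at $\ell$. Since every $\preceq$-chain has length at most $q$, we get $|B_\ell|\leq q$. I would then order the leaves $\ell_1,\ldots,\ell_m$ by a depth-first traversal of $F$ and take $T$ to be the path $B_{\ell_1}-B_{\ell_2}-\cdots-B_{\ell_m}$. The width is $\max_i |B_{\ell_i}|-1\leq q-1$, matching the target bound.

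It remains to check the three axioms of \cref{def:treedecomp}. Every vertex of $G=V(F)$ is an ancestor of some leaf of $F$, hence lies in some bag. For an edge $uv\in E(G)$, the forest cover property forces $u\preceq v$ or $v\preceq u$; in either case both endpoints lie on the same root-to-leaf path, namely any $B_\ell$ with the $\preceq$-larger endpoint on the path to $\ell$. The only nontrivial axiom is the connectivity of $\beta^{-1}(v)$, and this is precisely what the DFS order buys me: $v$ lies in $B_{\ell_i}$ iff $\ell_i$ is a descendant of $v$ in $F$, and the descendants of $v$ form a contiguous block of indices in a DFS ordering of the leaves, so the corresponding bags form a subpath of~$T$.

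The main obstacle — and the only step requiring any care — is this last point. Without the right ordering of the leaves, the set of bags containing a given vertex need not be connected in $T$; the DFS ordering is what guarantees that the leaves under each subtree appear consecutively, which is essentially the classical ``subtree property'' of DFS. Once that is in hand, the three conditions are immediate and the size bound on bags is direct, yielding $\tw(G)\leq q-1 = \td(G)-1$.
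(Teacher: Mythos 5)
Your proposal is correct and follows exactly the approach the paper sketches in the paragraph preceding the statement (bags given by root-to-leaf paths of the forest cover, arranged so that the bags containing each vertex are consecutive), with the DFS ordering being the natural way to make the paper's ``there is an ordering'' precise. One minor note: the reduction to a tree via a fresh root $r_0$ is unnecessary --- the DFS argument works directly on a forest, with the leaves under any fixed vertex still forming a contiguous block --- and as written it is slightly off, since $r_0\preceq\ell$ for every leaf $\ell$ means $r_0$ would lie in every $B_\ell$ rather than only in an ``empty gluing bag''; so either drop $r_0$ altogether or define $B_\ell$ to intersect with $V(G)$.
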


Both treewidth and treedepth enjoy characterisations in terms of node searching games, the so called \emph{cops-and-robber games}.
The general cops-and-robber game is played on a graph $G$ by Cops, controlling a number of cops; and Robber, controlling a single robber.
The cops and the robber are positioned on vertices of $G$.
The goal of Cops is to place a cop on the robber's position, while the robber tries to avoid capture by moving along paths free from cops.
The players play in rounds where first Cops announces the next position(s) of the cops (with possible restriction on how many cops may be moved and where they may be positioned) and then Robber moves the robber along some path avoiding all vertices where before and after his move there is a cop.
Treewidth can be characterised by the minimum number of cops needed to capture the robber where neither the movement of Cops nor Robber is restricted (see e.g. \cite{seymour_graph_1993}).
A well-known characterisation of treedepth is the minimum number of cops needed if Cops is not allowed to move a cop after it is positioned on the graph (see e.g. \cite{giannopoulou_lifo-search_2012}).
It is equivalent to count the number of rounds the game is played, without restricting the number of cops that can be used by Cops, as long as only one cop can be moved per round.
Therefore we use the following unified definition.

\begin{definition}[$q$-round $k$-cops-and-robber game]
	\label{def:Ekq-cops}
	Let $G$ be a graph and let $k,q\geq 1$.
	The \emph{monotone $q$-round $k$-cops-and-robber game} $\monCR_q^{k}(G)$ is defined as follows:
	
	We play the game on $G'$ which is constructed from $G$ by adding a disjoint $k$-vertex clique~$K$.
	The cop positions are sets $X\in \binom{V(G')}k$, the robber position is a vertex $v\in G$.
	The game is initiated with all cops positioned on $K$ and the robber on any vertex of $G$ of his choice.
	If the cops are at positions $X$ and robber at vertex $v$ we write $(X,v)$ for the position of the game.
	
	In round $i\leq q$, 
	\begin{itemize}
		\item Cops can move from the set $X_i$ to a set $X_{i+1}$ if $|X_i\cap X_{i+1}|=k-1$ and $\gamma_{v}^{X_{i}}\supseteq \gamma_{v}^{X_i\cap X_{i+1}}$,
		\item Robber moves along some path $v_iPv_{i+1}$, where no (inner) vertex is in $X_i\cap X_{i+1}$.
		\item Cops wins if $v_{i+1}\in X_{i+1}$.
	\end{itemize}
	Robber wins if Cops has not won after $q$ rounds.
	
	If we drop the condition $\gamma_{v}^{X_{i}}\supseteq \gamma_{v}^{X_i\cap X_{i+1}}$ in the movement of the cop, we call this the \emph{non-monotone} variant of the game and write $\CR_q^{k}(G)$.
\end{definition}

For $X\subseteq V(G')$ and $v\in G$, we call the connected component $\gamma^X_v$ of the graph $G\setminus X$, with $v\in \gamma^X_v$, the \emph{robber escape space}.
If the cop strategy only depends on $\gamma^X_v$ not the precise vertex that the robber occupies, we write $(X,\gamma^X_v)$ for the position of the game.

We write $\CR_q(G)$ instead of $\CR_q^q(G)$ and $\CR^k(G)$ instead of $\CR_{|V(G)|}^k(G)$.
Treewidth and treedepth can be expressed in terms of the existence of winning strategies in these games.

\begin{lemma}[{\cite[Theorem~1.4]{seymour_graph_1993}} and {\cite[Theorem~4]{giannopoulou_lifo-search_2012}}] \label{lem:games}
  Let $G$ be a graph. Let $k,q \geq 1$.
  \begin{enumerate}
  \item $G$ has treewidth at most $k$ iff Cops has a winning strategy for $\CR^{k+1}(G)$ iff Cops has a winning strategy for $\monCR^{k+1}(G)$.
  \item $G$ has treedepth at most $q$ iff Cops player has a winning strategy for $\CR_q(G)$ iff Cops has a winning strategy for $\monCR_q(G)$.
  \end{enumerate}
\end{lemma}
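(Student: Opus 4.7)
The plan is to prove each part through two equivalences: first, that a decomposition of the appropriate kind exists if and only if monotone Cops wins, which is a direct structural translation, and second, that monotone Cops wins if and only if non-monotone Cops wins, which is the classical monotonicity theorem. The forward direction of the second equivalence is of course immediate, since every monotone strategy is in particular a valid non-monotone strategy.

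For part~(1), given a tree decomposition $(T,\beta)$ of width at most $k$, root $T$ arbitrarily and let Cops maintain the invariant that the currently visited bag sits inside $X_i$, using one spare cop parked on $K$ as a swap slot. Cops pursues the robber through the unique subtree containing the robber's component; each tree edge is realised by one or several rounds in which a single cop moves, and the connectedness condition on $\beta^{-1}(v)$ yields $\gamma_v^{X_i} = \gamma_v^{X_i\cap X_{i+1}}$ whenever a cop is swapped out, which is exactly the monotonicity requirement. Conversely, from a monotone winning strategy I would form the strategy tree whose nodes are reachable positions $(X,\gamma^X_v)$ and whose bags are the sets $X$; the monotonicity constraint forces each vertex to appear along a connected subtree, producing a tree decomposition of width at most $k$. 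The remaining implication, non-monotone Cops wins $\Rightarrow$ monotone Cops wins, is the Seymour--Thomas theorem, most transparently proved through the duality between cop strategies and brambles: the absence of a monotone winning strategy yields a bramble of order $k+2$, which in turn produces a robber strategy defeating the non-monotone Cops.

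Part~(2) proceeds analogously. Given a forest cover $(F,\tup r)$ of height at most $q$, Cops plays by placing in round $i$ a fresh cop (drawn from $K$) on the depth-$i$ ancestor in $F$ of the robber's current component; after $q$ rounds the entire root-to-leaf branch containing the robber is occupied, and the strategy is monotone because placed cops are never removed. Conversely, in a monotone strategy with only $q$ rounds and $q$ cops, each round must commit exactly one new vertex of $G$, so the vertices played along each sequence of game positions trace a root-to-leaf branch of a forest cover of height at most $q$. The treedepth monotonicity equivalence is considerably easier than its treewidth counterpart: the bounded round budget allows a direct pruning argument, replacing any non-monotone play by a monotone one of no greater length.

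The main obstacle is the monotonicity theorem in part~(1). Unlike the other implications, which are short structural translations, this equivalence is substantively hard and in practice one invokes Seymour--Thomas (and Giannopoulou--Thilikos~\cite{giannopoulou_lifo-search_2012} for part~(2)) as a black box. A minor technical point shared by both parts is adapting the classical presentations of the games---where cops are freely added or removed---to the present formalism, in which exactly one cop moves per round and the game begins with all cops on an external clique $K$; this adaptation is harmless since $K$ functions as a reservoir of available cops and any classical placement or removal move corresponds to a single swap in the present game.
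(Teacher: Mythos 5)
The paper does not prove this lemma; it is stated as a direct citation of {\cite[Theorem~1.4]{seymour_graph_1993}} for part~(1) and {\cite[Theorem~4]{giannopoulou_lifo-search_2012}} for part~(2), and both are then used as black boxes. Your sketch is consistent with that: you flesh out the routine translations (decomposition $\leftrightarrow$ monotone strategy, monotone $\Rightarrow$ non-monotone) and correctly isolate the non-monotone $\Rightarrow$ monotone implications as the only substantive content, which you also ultimately cite. The outline of both structural translations is sound and matches the kind of argument the paper itself spells out for the analogous $\Ekq$ game in \cref{lem:Ekq-cops}.

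Two small inaccuracies are worth flagging, though neither affects correctness. First, your characterisation of the treedepth monotonicity result as ``a direct pruning argument'' undersells it: Giannopoulou, Hunter, and Thilikos prove a genuine min--max theorem by constructing an obstruction (a \emph{shelter}) of the right order whenever $q$ searchers do not suffice, and then showing the shelter blocks even non-monotone search; it is not a local surgery on plays. Second, in part~(1) the paper's game $\CR^{k+1}(G)$ is by definition $\CR^{k+1}_{|V(G)|}(G)$, so one must also observe that a monotone winning strategy for $k+1$ cops never needs more than $|V(G)|$ rounds; this follows because under monotonicity the robber's escape space never grows and Cops can always be assumed to strictly shrink it (otherwise the move can be discarded), but it is a step you elide. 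Your closing remark that the external clique $K$ merely serves as a reservoir, so that the one-cop-per-round formalism simulates the classical add/remove formalism, is exactly the right way to reconcile the two presentations.
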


\section{Graph Decompositions Accounting for Treewidth and Treedepth Simultaneously}
\label{sec:graph-dec}
In this section, we reconcile treewidth and treedepth by introducing graph decompositions which account simultaneously for depth and width.
These efforts yield various equivalent characterisations of the graph class $\Ekq$, a subclass both of $\mathcal{TW}_{k-1}$ and $\mathcal{TD}_q$, the classes of graphs of treewidth $\leq k -1$ and treedepth $\leq q$, respectively.
By introducing a variant of the standard cops-and-robber game which captures $\Ekq$ and adopting a result from \cite{Furer01}, we show that $\Ekq$ is a proper subclass of $\mathcal{TW}_{k-1} \cap \mathcal{TD}_q$ if $q$ is sufficiently larger than $k$.

We start with the original definition of the class \Ekq, which incorporates width into forest covers from treedepth.
This definition has first been introduced as $k$-traversal in \cite{AbramskyDW17}.

\begin{definition} \label{def:kpebbleforest}
	Let $G$ be  graph and $k\geq 1$.
	A \emph{$k$-pebble forest cover} of $G$ is a tuple $(F,\vec{r},p)$, where $(F,\vec{r})$ is a rooted forest over the vertices $V(G)$ and a pebbling function $p\colon V(G)\rightarrow [k]$ such that:
	\begin{itemize}
		\item If $uv\in E(G)$, then $u\preceq v$ or $v\preceq u$ in $(F,\vec{r})$.
		\item If $uv \in E(G)$ and $u\prec v$ in $(F,\vec{r})$, then for every $w\in V(G)$ with $u\prec w\preceq v$ in $(F,\vec{r})$ it holds that $p(u)\neq p(w)$.
	\end{itemize}
	$(F,\vec{r},p)$ has depth $q\geq 1$ if $(F,\vec{r})$ has height $q$.
	We write $\Ekq$ for the class of all graphs $G$ admitting a $k$-pebble forest cover of depth $q$.
\end{definition}

The class $\Ekq$ can also be defined by measuring the depth of a tree decomposition $(T, \beta)$.
Crucially, it does not suffice to take the height of $T$ into account since this notion is not robust.
For example, it is well known that one can alter a tree decomposition by subdividing any edge multiple times and copying the bag of the child node.
This transformation does neither change the width of the decomposition, nor does it affect the information how to decompose the graph. However, the height of the tree will change drastically under this transformation.
It turns out that the following is the right definition:

\begin{definition}
	\label{def:depth-treedec}
	Let $G$ be a graph.
	A tuple $(T,r,\beta)$ is a \emph{rooted tree decomposition} of $G$
	if $(T,\beta)$ is a tree decomposition of $G$ and $r \in V(T)$.
	The \emph{depth of a tree decomposition $(T,\beta)$} is
	\begin{equation*}
		\dep(T,\beta)\coloneqq \min_{r\in V(T)} \dep(T,r,\beta)
		\qquad \text{with}
		\qquad
		\dep(T,r,\beta)\coloneqq \max_{v \in V(T)} \left| \bigcup_{t\preceq v} \beta(t) \right|.
	\end{equation*}
\end{definition}

Lastly we define a construction inspired by Dvo\v{r}\'ak~\cite{dvorak_recognizing_2010}, that enables us to use their proof technique to study the expressive power of first-order logic with counting quantifiers using homomorphism indistinguishability (see \cref{fig:elim-example}).

\begin{definition}
	\label{def:Ekq}
	Let $G$ be a (possibly labelled) graph. A \emph{$k$-construction tree} for $G$ is a tuple $(T,\lambda,r)$, where $T$ is a tree rooted at $r$ and $\lambda\colon V(T)\impl \CG_k$ is a function assigning $k$-labelled graphs to the nodes of $T$ such that:
	\begin{enumerate}
		\item $\lambda(r)=G$, \label{ct1}
		\item all leaves $\ell\in V(T)$ are assigned fully labelled graphs, that is $V(\lambda(\ell))=\labels{\lambda(\ell)}$,
		\item all internal nodes $t\in V(T)$ with exactly one child $t'$ are \emph{elimination nodes}, that is $\lambda(t)$ can be obtained from $\lambda(t')$ by removing one label, and
		\item all internal nodes $t\in V(T)$ with more than one child are \emph{product nodes}, that is $\lambda(t)$ is the product of its children.
	\end{enumerate}
	The \emph{\elimDepth} of a construction tree $(T,\lambda,r)$ is the maximum number of elimination nodes on any path from the root $r$ to a leaf.
	If $G$ has a $k$-construction tree of \elimDepth\ $\leq q$, we say that $G$ is \emph{\elimOrd{k}{q}}.
	For the class of all $k$-labelled \elimOrd{k}{q} graphs we write \Lkq.
\end{definition}

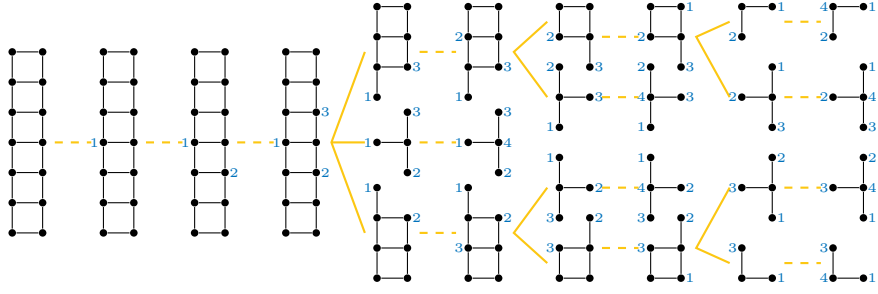
\begin{figure}
	\centering
	\begin{tikzpicture}[scale=0.4,smallVertex/.style={fill=black, inner sep=1, circle}, node font=\tiny]
	
\foreach \i in {0,1,2,3,4,5,6}{
		\foreach \j in {0,1}{
			\tikzmath{
				integer \newj;
				\newj = \j-3;}
			\node[smallVertex] (b\j\i) at (\newj,\i) {};
			\ifthenelse{\NOT \i=0}{
				\tikzmath{
					integer \l;
					\l = \i - 1;}
				\draw (b\j\i) -- (b\j\l);
			}{}
		}
		\draw (b0\i) -- (b1\i);
	}
	
\foreach \i in {0,1,2,3,4,5,6}{
		\foreach \j in {0,1}{
			\node[smallVertex] (a\j\i) at (\j,\i) {};
			\ifthenelse{\NOT \i=0}{
				\tikzmath{
					integer \l;
					\l = \i - 1;}
				\draw (a\j\i) -- (a\j\l);
			}{}
		}
		\draw (a0\i) -- (a1\i);
	}
	\node[rwth-blue] at (-0.3,3) {$1$};
	
\foreach \i in {0,1,2,3,4,5,6}{
		\foreach \j in {0,1}{
			\tikzmath{
				integer \newj;
				\newj = \j+3;}
			\node[smallVertex] (b\j\i) at (\newj,\i) {};
			\ifthenelse{\NOT \i=0}{
				\tikzmath{
					integer \l;
					\l = \i - 1;}
				\draw (b\j\i) -- (b\j\l);
			}{}
		}
		\draw (b0\i) -- (b1\i);
	}
	\node[rwth-blue] at (2.7,3) {$1$};
	\node[rwth-blue] at (4.3,2) {$2$};
	
\foreach \i in {0,1,2,3,4,5,6}{
		\foreach \j in {0,1}{
			\tikzmath{
				integer \newj;
				\newj = \j+6;}
			\node[smallVertex] (b\j\i) at (\newj,\i) {};
			\ifthenelse{\NOT \i=0}{
				\tikzmath{
					integer \l;
					\l = \i - 1;}
				\draw (b\j\i) -- (b\j\l);
			}{}
		}
		\draw (b0\i) -- (b1\i);
	}
	\node[rwth-blue] at (5.7,3) {$1$};
	\node[rwth-blue] at (7.3,2) {$2$};
	\node[rwth-blue] at (7.3,4) {$3$};
	
\foreach \i in {0,1,2}{
		\foreach \j in {0,1}{
			\tikzmath{
				integer \newj;
				\newj = \j+9;
				\newi = \i+5.5;}
			\node[smallVertex] (c\j\i) at (\newj,\newi) {};
			\ifthenelse{\NOT \i=0}{
				\tikzmath{
					integer \l;
					\l = \i - 1;}
				\draw (c\j\i) -- (c\j\l);
			}{}
		}
		\draw (c0\i) -- (c1\i);
	}
	\node[smallVertex] (c) at (9,4.5) {};
	\draw (c) -- (c00);
	\node[rwth-blue] at (8.7,4.5) {$1$};
	\node[rwth-blue] at (10.3,5.5) {$3$};
	
\foreach \i in {0,1,2}{
		\foreach \j in {0,1}{
			\tikzmath{
				integer \newj;
				\newj = \j+9;
				\newi = \i-1.5;}
			\node[smallVertex] (d\j\i) at (\newj,\newi) {};
			\ifthenelse{\NOT \i=0}{
				\tikzmath{
					integer \l;
					\l = \i - 1;}
				\draw (d\j\i) -- (d\j\l);
			}{}
		}
		\draw (d0\i) -- (d1\i);
	}
	\node[smallVertex] (d) at (9,1.5) {};
	\draw (d) -- (d02);
	\node[rwth-blue] at (8.7,1.5) {$1$};
	\node[rwth-blue] at (10.3,0.5) {$2$};
	
\node[smallVertex] (e1) at (9,3) {};
	\node[smallVertex] (e2) at (10,3) {};
	\node[smallVertex] (e3) at (10,4) {};
	\node[smallVertex] (e4) at (10,2) {};
	\draw (e1) -- (e2) -- (e3) (e2) -- (e4);
	\node[rwth-blue] at (8.7,3) {$1$};
	\node[rwth-blue] at (10.3,2) {$2$};
	\node[rwth-blue] at (10.3,4) {$3$};
	
\foreach \i in {0,1,2}{
		\foreach \j in {0,1}{
			\tikzmath{
				integer \newj;
				\newj = \j+12;
				\newi = \i+5.5;}
			\node[smallVertex] (c\j\i) at (\newj,\newi) {};
			\ifthenelse{\NOT \i=0}{
				\tikzmath{
					integer \l;
					\l = \i - 1;}
				\draw (c\j\i) -- (c\j\l);
			}{}
		}
		\draw (c0\i) -- (c1\i);
	}
	\node[smallVertex] (c) at (12,4.5) {};
	\draw (c) -- (c00);
	\node[rwth-blue] at (11.7,6.5) {$2$};
	\node[rwth-blue] at (11.7,4.5) {$1$};
	\node[rwth-blue] at (13.3,5.5) {$3$};
	
\foreach \i in {0,1,2}{
		\foreach \j in {0,1}{
			\tikzmath{
				integer \newj;
				\newj = \j+12;
				\newi = \i-1.5;}
			\node[smallVertex] (d\j\i) at (\newj,\newi) {};
			\ifthenelse{\NOT \i=0}{
				\tikzmath{
					integer \l;
					\l = \i - 1;}
				\draw (d\j\i) -- (d\j\l);
			}{}
		}
		\draw (d0\i) -- (d1\i);
	}
	\node[smallVertex] (d) at (12,1.5) {};
	\draw (d) -- (d02);
	\node[rwth-blue] at (11.7,1.5) {$1$};
	\node[rwth-blue] at (11.7,-0.5) {$3$};
	\node[rwth-blue] at (13.3,0.5) {$2$};
	
\node[smallVertex] (e1) at (12,3) {};
	\node[smallVertex] (e2) at (13,3) {};
	\node[smallVertex] (e3) at (13,4) {};
	\node[smallVertex] (e4) at (13,2) {};
	\draw (e1) -- (e2) -- (e3) (e2) -- (e4);
	\node[rwth-blue] at (11.7,3) {$1$};
	\node[rwth-blue] at (13.3,2) {$2$};
	\node[rwth-blue] at (13.3,4) {$3$};
	\node[rwth-blue] at (13.3,3) {$4$};
	
\foreach \i in {0,1}{
		\foreach \j in {0,1}{
			\tikzmath{
				integer \newj;
				\newj = \j+15;
				\newi = \i+6.5;}
			\node[smallVertex] (c\j\i) at (\newj,\newi) {};
			\ifthenelse{\NOT \i=0}{
				\tikzmath{
					integer \l;
					\l = \i - 1;}
				\draw (c\j\i) -- (c\j\l);
			}{}
		}
		\draw (c0\i) -- (c1\i);
	}
	\node[smallVertex] (c) at (16,5.5) {};
	\draw (c) -- (c10);
	\node[rwth-blue] at (14.7,6.5) {$2$};
	\node[rwth-blue] at (16.3,5.5) {$3$};
	
\foreach \i in {0,1}{
		\foreach \j in {0,1}{
			\tikzmath{
				integer \newj;
				\newj = \j+15;
				\newi = \i-1.5;}
			\node[smallVertex] (d\j\i) at (\newj,\newi) {};
			\ifthenelse{\NOT \i=0}{
				\tikzmath{
					integer \l;
					\l = \i - 1;}
				\draw (d\j\i) -- (d\j\l);
			}{}
		}
		\draw (d0\i) -- (d1\i);
	}
	\node[smallVertex] (d) at (16,0.5) {};
	\draw (d) -- (d11);
	\node[rwth-blue] at (14.7,-0.5) {$3$};
	\node[rwth-blue] at (16.3,0.5) {$2$};
	
\node[smallVertex] (e1) at (16,1.5) {};
	\node[smallVertex] (e2) at (15,1.5) {};
	\node[smallVertex] (e3) at (15,2.5) {};
	\node[smallVertex] (e4) at (15,0.5) {};
	\draw (e1) -- (e2) -- (e3) (e2) -- (e4);
	\node[rwth-blue] at (16.3,1.5) {$2$};
	\node[rwth-blue] at (14.7,0.5) {$3$};
	\node[rwth-blue] at (14.7,2.5) {$1$};
	
	\node[smallVertex] (e1) at (16,4.5) {};
	\node[smallVertex] (e2) at (15,4.5) {};
	\node[smallVertex] (e3) at (15,5.5) {};
	\node[smallVertex] (e4) at (15,3.5) {};
	\draw (e1) -- (e2) -- (e3) (e2) -- (e4);
	\node[rwth-blue] at (16.3,4.5) {$3$};
	\node[rwth-blue] at (14.7,3.5) {$1$};
	\node[rwth-blue] at (14.7,5.5) {$2$};
	
\foreach \i in {0,1}{
		\foreach \j in {0,1}{
			\tikzmath{
				integer \newj;
				\newj = \j+18;
				\newi = \i+6.5;}
			\node[smallVertex] (c\j\i) at (\newj,\newi) {};
			\ifthenelse{\NOT \i=0}{
				\tikzmath{
					integer \l;
					\l = \i - 1;}
				\draw (c\j\i) -- (c\j\l);
			}{}
		}
		\draw (c0\i) -- (c1\i);
	}
	\node[smallVertex] (c) at (19,5.5) {};
	\draw (c) -- (c10);
	\node[rwth-blue] at (17.7,6.5) {$2$};
	\node[rwth-blue] at (19.3,5.5) {$3$};
	\node[rwth-blue] at (19.3,7.5) {$1$};
	
\foreach \i in {0,1}{
		\foreach \j in {0,1}{
			\tikzmath{
				integer \newj;
				\newj = \j+18;
				\newi = \i-1.5;}
			\node[smallVertex] (d\j\i) at (\newj,\newi) {};
			\ifthenelse{\NOT \i=0}{
				\tikzmath{
					integer \l;
					\l = \i - 1;}
				\draw (d\j\i) -- (d\j\l);
			}{}
		}
		\draw (d0\i) -- (d1\i);
	}
	\node[smallVertex] (d) at (19,0.5) {};
	\draw (d) -- (d11);
	\node[rwth-blue] at (17.7,-0.5) {$3$};
	\node[rwth-blue] at (19.3,0.5) {$2$};
	\node[rwth-blue] at (19.3,-1.5) {$1$};
	
\node[smallVertex] (e1) at (19,1.5) {};
	\node[smallVertex] (e2) at (18,1.5) {};
	\node[smallVertex] (e3) at (18,2.5) {};
	\node[smallVertex] (e4) at (18,0.5) {};
	\draw (e1) -- (e2) -- (e3) (e2) -- (e4);
	\node[rwth-blue] at (19.3,1.5) {$2$};
	\node[rwth-blue] at (17.7,0.5) {$3$};
	\node[rwth-blue] at (17.7,2.5) {$1$};
	\node[rwth-blue] at (17.7,1.5) {$4$};
	
	\node[smallVertex] (e1) at (19,4.5) {};
	\node[smallVertex] (e2) at (18,4.5) {};
	\node[smallVertex] (e3) at (18,5.5) {};
	\node[smallVertex] (e4) at (18,3.5) {};
	\draw (e1) -- (e2) -- (e3) (e2) -- (e4);
	\node[rwth-blue] at (19.3,4.5) {$3$};
	\node[rwth-blue] at (17.7,3.5) {$1$};
	\node[rwth-blue] at (17.7,5.5) {$2$};
	\node[rwth-blue] at (17.7,4.5) {$4$};
	
\node[smallVertex] (c1) at (21,6.5) {};
	\node[smallVertex] (c2) at (21,7.5) {};
	\node[smallVertex] (c3) at (22,7.5) {};
	\draw (c1) -- (c2) -- (c3);
	\node[rwth-blue] at (20.7,6.5) {$2$};
	\node[rwth-blue] at (22.3,7.5) {$1$};
	
	\node[smallVertex] (d1) at (21,-0.5) {};
	\node[smallVertex] (d2) at (21,-1.5) {};
	\node[smallVertex] (d3) at (22,-1.5) {};
	\draw (d1) -- (d2) -- (d3);
	\node[rwth-blue] at (20.7,-0.5) {$3$};
	\node[rwth-blue] at (22.3,-1.5) {$1$};
	
\node[smallVertex] (e1) at (21,1.5) {};
	\node[smallVertex] (e2) at (22,1.5) {};
	\node[smallVertex] (e3) at (22,2.5) {};
	\node[smallVertex] (e4) at (22,0.5) {};
	\draw (e1) -- (e2) -- (e3) (e2) -- (e4);
	\node[rwth-blue] at (20.7,1.5) {$3$};
	\node[rwth-blue] at (22.3,0.5) {$1$};
	\node[rwth-blue] at (22.3,2.5) {$2$};
	
	\node[smallVertex] (e1) at (21,4.5) {};
	\node[smallVertex] (e2) at (22,4.5) {};
	\node[smallVertex] (e3) at (22,5.5) {};
	\node[smallVertex] (e4) at (22,3.5) {};
	\draw (e1) -- (e2) -- (e3) (e2) -- (e4);
	\node[rwth-blue] at (20.7,4.5) {$2$};
	\node[rwth-blue] at (22.3,3.5) {$3$};
	\node[rwth-blue] at (22.3,5.5) {$1$};
	
\node[smallVertex] (c1) at (24,6.5) {};
	\node[smallVertex] (c2) at (24,7.5) {};
	\node[smallVertex] (c3) at (25,7.5) {};
	\draw (c1) -- (c2) -- (c3);
	\node[rwth-blue] at (23.7,6.5) {$2$};
	\node[rwth-blue] at (25.3,7.5) {$1$};
	\node[rwth-blue] at (23.7,7.5) {$4$};
	
	\node[smallVertex] (d1) at (24,-0.5) {};
	\node[smallVertex] (d2) at (24,-1.5) {};
	\node[smallVertex] (d3) at (25,-1.5) {};
	\draw (d1) -- (d2) -- (d3);
	\node[rwth-blue] at (23.7,-0.5) {$3$};
	\node[rwth-blue] at (25.3,-1.5) {$1$};
	\node[rwth-blue] at (23.7,-1.5) {$4$};
	
\node[smallVertex] (e1) at (24,1.5) {};
	\node[smallVertex] (e2) at (25,1.5) {};
	\node[smallVertex] (e3) at (25,2.5) {};
	\node[smallVertex] (e4) at (25,0.5) {};
	\draw (e1) -- (e2) -- (e3) (e2) -- (e4);
	\node[rwth-blue] at (23.7,1.5) {$3$};
	\node[rwth-blue] at (25.3,0.5) {$1$};
	\node[rwth-blue] at (25.3,2.5) {$2$};
	\node[rwth-blue] at (25.3,1.5) {$4$};
	
	\node[smallVertex] (e1) at (24,4.5) {};
	\node[smallVertex] (e2) at (25,4.5) {};
	\node[smallVertex] (e3) at (25,5.5) {};
	\node[smallVertex] (e4) at (25,3.5) {};
	\draw (e1) -- (e2) -- (e3) (e2) -- (e4);
	\node[rwth-blue] at (23.7,4.5) {$2$};
	\node[rwth-blue] at (25.3,3.5) {$3$};
	\node[rwth-blue] at (25.3,5.5) {$1$};
	\node[rwth-blue] at (25.3,4.5) {$4$};
	
\draw[dashed, thick, rwth-orange] (-1.6,3) -- (-0.4,3) (1.4,3) -- (2.6,3) (4.4,3) -- (5.6,3) (10.4,6) -- (11.6,6) (10.4,3) -- (11.6,3) (10.4,0) -- (11.6,0) (16.4,6.5) -- (17.6,6.5) (16.4,4.5) -- (17.6,4.5) (16.4,1.5) -- (17.6,1.5) (16.4,-0.5) -- (17.6,-0.5) (22.4,7) -- (23.6,7) (22.4,4.5) -- (23.6,4.5) (22.4,1.5) -- (23.6,1.5) (22.4,-1) -- (23.6,-1);
	\draw[thick, rwth-orange] (8.6,0) -- (7.5,3) -- (8.6,6) (7.5,3) -- (8.6,3) (14.6,4.5) -- (13.5,6) -- (14.6,7) (14.6,-1) -- (13.5,0) -- (14.6,1.5) (20.6,4.5) -- (19.5,6.5) -- (20.6,7) (20.6,-1) -- (19.5,-0.5) -- (20.6,1.5);
	\end{tikzpicture}
	\caption{A $4$-construction tree for the grid $\grid{2}{7}$ of elimination depth $6$. Edges entering elimination nodes are dashed.}
	\label{fig:elim-example}
\end{figure}

It turns out that all three notions coincide.

\begin{restatable}{theorem}{ekqEquiv}
	\label{thm:Ekq_equiv}
	Let $k, q\geq 1$. For every graph $G$, the following are equivalent:
	\begin{enumerate}
		\item $G$ is \elimOrd{k}{q},\label{ekqEquiv1}
		\item $G$ has a tree decomposition of width $k-1$ and depth $q$,\label{ekqEquiv2}
		\item $G\in\Ekq$, that is $G$ admits a $k$-pebble forest cover of depth $q$.\label{ekqEquiv3}
	\end{enumerate}
\end{restatable}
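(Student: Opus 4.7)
The plan is to prove the theorem by establishing the cycle of implications $(3) \Rightarrow (2) \Rightarrow (1) \Rightarrow (3)$, in each case via an explicit, parameter-preserving translation between the three decomposition notions.

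For $(3) \Rightarrow (2)$, given a $k$-pebble forest cover $(F, \vec{r}, p)$ of depth $q$, we take $T$ to be $F$ (adjoining an auxiliary empty-bag root if $F$ is a proper forest) and, for each $v \in V(G)$, define
\[
  \beta(v) \coloneqq \{u \preceq v : \text{for every } w \text{ with } u \prec w \preceq v,\ p(u) \neq p(w)\},
\]
i.e., $v$ together with the $F$-ancestors whose pebbles are still ``alive''. The second clause of \cref{def:kpebbleforest} forces both endpoints of every edge into the bag at the deeper endpoint, so edges are covered; injectivity of $p$ on $\beta(v)$ yields width at most $k-1$; and the interval nature of pebble-liveness along any $F$-path gives the usual connectivity condition. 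Moreover, $\bigcup_{t \preceq v}\beta(t)$ equals exactly the set of $F$-ancestors of $v$, so the depth of $(T, \beta)$ equals the height of $F$, namely $q$.

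For $(2) \Rightarrow (1)$, we follow a Dvo\v{r}\'ak-style recursion. First we fix a consistent labelling $\ell \colon V(G) \to [k]$ assigning distinct labels to vertices sharing a bag (such an $\ell$ exists because the chordal supergraph induced by any width-$(k-1)$ tree decomposition is $k$-colourable, with colour classes propagated along each vertex's connected bag-subtree). Recursively on $T$, at each node $t$ we build a sub-construction-tree whose root has $\lambda$-value $G[V_t]$ labelled by $\ell\restrict{\beta(t)}$: at $t$ with children $t_1, \dots, t_s$ we take a product combining the recursive sub-trees for each child $t_i$ -- each prefixed by a chain of $|\beta(t_i) \setminus \beta(t)|$ elim nodes stripping labels of vertices leaving the bag -- together with a ``bag sub-tree'' that assembles $G[\beta(t)]$ from leaves consisting of single labelled edges and vertices. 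A final chain of elim nodes at the very top strips $\beta(r)$ to obtain $\lambda(r) = G$. The elim nodes traversed on any root-to-leaf path in the resulting construction tree correspond, by tree-decomposition connectivity, bijectively to the vertices in $\bigcup_{t' \preceq t_\ell} \beta(t')$ for the $T$-leaf $t_\ell$ reached, whence the elim depth is bounded by the tree-decomposition depth $q$.

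For $(1) \Rightarrow (3)$, given a $k$-construction tree $(T, \lambda, r)$ of elim depth $q$, we associate to each $v \in V(G)$ the unique topmost elim node $t_v^\star$ at which $v$ first receives a label. Uniqueness holds because $v$ is unlabelled in $\lambda(r) = G$ and at each product node an unlabelled vertex is routed to exactly one child, so the ``$v$-unlabelled'' region of $T$ is a single chain from the root. We define $F$ by making the parent of $v$ the vertex $u$ whose $t_u^\star$ is the nearest elim ancestor of $t_v^\star$ in $T$, and we set $p(v)$ to the label assigned to $v$ at $t_v^\star$. The height of $F$ is bounded by the length of the longest elim chain in $T$, which is at most $q$. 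For the pebble condition, given an edge $uv \in E(G)$ with $u$ a proper forest-ancestor of $v$, both vertices appear labelled in some common leaf $L$ of $T$; along the $T$-path from $t_u^\star$ down to $L$, the vertex $u$ persists with label $p(u)$, and since the labelling of each $\lambda$-graph is injective, no elim between $t_u^\star$ and $t_v^\star$ can reassign $p(u)$ to any other vertex on the forest-path between $u$ and $v$. The main obstacle is the $(2) \Rightarrow (1)$ direction: the tight elim-depth bound requires interleaving elim and product nodes so that each vertex's label is stripped and reintroduced exactly when the vertex actually leaves and enters the bag-subtree; a naive recursion that strips all of $\bigcup_i \beta(t_i) \setminus \beta(t)$ at once would yield a bound of $\sum_i |\beta(t_i)|$ rather than the required $|\bigcup_{t' \preceq t_\ell}\beta(t')| \leq q$.
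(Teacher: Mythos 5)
Your directions $(3)\Rightarrow(2)$ and $(2)\Rightarrow(1)$ are essentially sound (the former coincides with the paper's construction of the bag function; for the latter, the concern you raise at the end is already handled by your interleaving of elim chains below the product node, so the elim depth on any root-to-leaf path is indeed the size of the union of the bags along the corresponding $T$-path). The genuine gap is in $(1)\Rightarrow(3)$.

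The claim that ``along the $T$-path from $t_u^\star$ down to $L$, the vertex $u$ persists with label $p(u)$'' is false. A vertex can acquire a \emph{second} label at a later elimination node, and then at a product node the $L$-branch can keep the second label and drop $p(u)$. After that, $p(u)$ is free and may be re-issued to some $w$ with $t_u^\star \prec t_w^\star \preceq t_v^\star$. Concretely, take $k=3$, $V(G)=\{u,w,v,z\}$, $E(G)=\{uv,uz\}$, and the $3$-construction tree
\[
 r \xrightarrow{\text{elim}} e_1 \xrightarrow{\text{elim}} \times \ \text{ with children } a_1 \xrightarrow{\text{elim}} a_1' \xrightarrow{\text{elim}} a_1'' \ \text{ and } \ a_2 \xrightarrow{\text{elim}} a_2',
\]
where $\lambda(r)=G$ (no labels), $\lambda(e_1)=G$ with $u\mapsto 1$, $\lambda(\times)=G$ with $u\mapsto 1,2$, $\lambda(a_1)=G[\{u,w,v\}]$ with $u\mapsto 2$, $\lambda(a_1')$ adds $w\mapsto 1$, $\lambda(a_1'')$ adds $v\mapsto 3$, $\lambda(a_2)=G[\{u,z\}]$ with $u\mapsto 1,2$, and $\lambda(a_2')$ adds $z\mapsto 3$. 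One checks $\lambda(a_1)\lambda(a_2)=\lambda(\times)$, so this is a valid $3$-construction tree (of elimination depth $4$). Your recipe gives $t_u^\star=r$, $p(u)=1$; $t_w^\star=a_1$, $p(w)=1$; $t_v^\star=a_1'$, $p(v)=3$; and $F$ with $u\prec w\prec v$. But $uv\in E(G)$ and $p(u)=p(w)=1$, so the pebble condition of \cref{def:kpebbleforest} fails. The culprit is exactly the step you flagged as obvious: $u$ carries label $1$ at $e_1$ but only label $2$ at $a_1$, so label $1$ is legitimately reused for $w$.

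The paper avoids this by never reading the pebbling function off the construction tree's labels. It proves $(1)\Rightarrow(2)$ directly (bags $=$ labelled vertices, a one-line verification), and then $(2)\Rightarrow(3)$ by Abramsky--Shah's construction: map each vertex to its topmost bag to get $F$, and define $p$ \emph{freshly} by a greedy ``smallest unused pebble in the current bag'' rule, which is then verified against the pebble condition using the connectivity of the decomposition. Your cycle closes once you replace your direct $(1)\Rightarrow(3)$ by $(1)\Rightarrow(2)\Rightarrow(3)$ with a pebbling function defined this way rather than inherited from the construction tree.
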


The equivalence of \cref{ekqEquiv1,ekqEquiv2} is proven by a careful choice of the tree decomposition where one than can identify the bags with the labelled vertices of the construction tree. For the equivalence of \cref{ekqEquiv2,ekqEquiv3}, we follow the proof of \cite[Theorem~19]{abramsky_relating_2021} and observe that their construction preserves depth. Details can be found in \cref{app:equiv-classes}.

\begin{corollary}
	\label{cor:Ekq-subset-tw-td}
	\label{cor:minor-closed}
	Let $k, q\geq 1$.
	$\Ekq $ is minor-closed, closed under taking disjoint unions, and a subclass of $\TW_{k-1}\cap\TD_q$.
\end{corollary}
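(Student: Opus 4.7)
The plan is to derive the three assertions separately from \cref{thm:Ekq_equiv}, choosing whichever of the three equivalent characterisations is most convenient in each case. The bulk of the work is routine bookkeeping.

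\emph{Containment $\Ekq \subseteq \TW_{k-1} \cap \TD_q$.} Given $G \in \Ekq$, the second characterisation of \cref{thm:Ekq_equiv} supplies a tree decomposition of $G$ of width $k-1$, so $G \in \TW_{k-1}$. The third characterisation supplies a $k$-pebble forest cover $(F, \vec{r}, p)$ of $G$ of depth $q$; forgetting the pebbling function $p$ leaves a forest cover of $G$ of height $q$ (the first condition of \cref{def:kpebbleforest} is precisely the forest cover condition), hence $\td(G) \leq q$ and $G \in \TD_q$.

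\emph{Closure under disjoint unions.} I would argue via the pebble-forest-cover characterisation. Given $G_1, G_2 \in \Ekq$ with $k$-pebble forest covers $(F_i, \vec{r}_i, p_i)$ of depth at most $q$, the componentwise disjoint union $(F_1 \sqcup F_2,\, \vec{r}_1 \sqcup \vec{r}_2,\, p_1 \sqcup p_2)$ is a $k$-pebble forest cover of $G_1 \sqcup G_2$ of depth at most $q$: both conditions of \cref{def:kpebbleforest} are local to connected components, and the height of the union is the maximum of the individual heights.

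\emph{Minor-closure.} I would use the tree-decomposition characterisation (the second of \cref{thm:Ekq_equiv}), since tree decompositions behave transparently under minor operations. It suffices to verify closure under vertex deletion, edge deletion, and edge contraction. Vertex deletion (remove the vertex from all bags) and edge deletion (no change at all) preserve the decomposition and can only decrease both width and depth. For the contraction of an edge $uv$, I would replace both $u$ and $v$ by a fresh vertex $x$ in every bag: the result $(T, \beta')$ is again a tree decomposition because $\beta^{-1}(u) \cap \beta^{-1}(v) \neq \emptyset$ (any bag containing the edge $uv$ lies in the intersection), so $\beta'^{-1}(x) = \beta^{-1}(u) \cup \beta^{-1}(v)$ remains connected in $T$. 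Since $|\beta'(t)| \leq |\beta(t)|$ for every $t \in V(T)$, every union $\bigcup_{t \preceq v} \beta'(t)$ has size at most $|\bigcup_{t \preceq v} \beta(t)|$, so neither the width nor the depth of the decomposition increases.

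The only step that calls for any care is edge contraction, and even there the argument is a direct extension of the classical treewidth minor-closure proof; the additional observation needed is simply that the depth measure $\max_v |\bigcup_{t \preceq v} \beta(t)|$ is monotone under vertex identifications in the bags.
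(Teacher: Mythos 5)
Your proof is correct and takes the same route the paper intends: the paper declares the corollary ``immediate'' from \cref{thm:Ekq_equiv} and explicitly recommends the tree-decomposition characterisation for minor-closure (as opposed to pebble forest covers, which require more bookkeeping under contraction), which is exactly what you do. One small wording nit in the contraction step: the inference ``$|\beta'(t)| \leq |\beta(t)|$ for each $t$, hence $|\bigcup_{t\preceq v}\beta'(t)| \leq |\bigcup_{t\preceq v}\beta(t)|$'' is not valid for arbitrary bag shrinkage, but it does hold here because the same global identification $u,v\mapsto x$ is applied in every bag, so the union itself is the image of the old union under a non-injective map --- which is precisely the ``monotone under vertex identifications'' observation you make at the end, so the proof is fine.
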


Given \cref{thm:Ekq_equiv}, \cref{cor:minor-closed} is immediate.
Dawar, Jakl, and Reggio reduced the proofs of the results of Grohe and Dvo\v{r}\'ak \cite{grohe_counting_2020,dvorak_recognizing_2010} to a ``combinatorial core'' \cite[Remark~17]{dawar_lovasz-type_2021}, which amounts to showing that the classes $\mathcal{TW}_k$ and $\mathcal{TD}_q$ are closed under contracting edges.
To that end, \cref{cor:minor-closed} illustrates the benefits of characterising $\Ekq$ in terms of tree decompositions (\cref{def:depth-treedec}): Proving that pebble forest covers are preserved under edge contractions requires a non-trivial amount of bookkeeping while the analogous statement for tree decomposition is straightforward.

We conclude with a characterisation of $\Ekq$ in terms of a cops-and-robber game.

\begin{restatable}{lemma}{crGame}
	\label{lem:Ekq-cops}
	The Cops win the game $\monCR_q^{k}(G)$ if and only if $G\in \Ekq$.
\end{restatable}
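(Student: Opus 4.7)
The plan is to use the equivalent characterisations of $\Ekq$ from \cref{thm:Ekq_equiv}: I would construct a Cops strategy from a $k$-pebble forest cover for the forward direction, and build a tree decomposition of width $\leq k-1$ and depth $\leq q$ from a winning strategy for the reverse direction.

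For the forward direction, fix a $k$-pebble forest cover $(F,\vec{r},p)$ of depth $q$ and identify the $k$ cops with the pebble colours $[k]$. I would maintain the invariant that after round $i$ the robber's escape space $C_i$ lies inside the subtree rooted at its $\preceq$-minimum $u^\ast_i$ and that $C_{i+1}\subsetneq C_i$. At round $i$, Cops moves cop $p(u^\ast_i)$ to $u^\ast_i$. The key verification is monotonicity when cop $p(u^\ast_i)$ currently sits on some $v\in V(G)$: here I need that $v$ has no edge into $C_i$. Suppose for contradiction that $vw\in E(G)$ with $w\in C_i$; then $v$ and $w$ are $\preceq$-comparable. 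The case $v\prec w$ places $u^\ast_i$ on the $F$-path from $v$ to $w$, so the pebble condition forces $p(v)\neq p(u^\ast_i)$, contradicting the cop's current position. The case $w\prec v$ combined with $u^\ast_i\preceq w$ places $w$ strictly above $v$ in $F$; but cop $p(u^\ast_i)=p(v)$ was placed on $v$ at some earlier round $j$ with $u^\ast_j=v$, so $C_j$ lies inside the subtree of $v$, and $C_i\subseteq C_j$ forces $w$ into this subtree, contradicting $w\prec v$. After the move, the escape space shrinks strictly into the subtree of some child of $u^\ast_i$, so after at most $q$ rounds Cops wins.

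For the reverse direction, I would unfold a winning strategy $\sigma$ (which without loss of generality depends only on the position $(X,\gamma^X_v)$) into a rooted tree $T$: its nodes are the game positions reachable under $\sigma$, the root is the initial position, and each node's children correspond to Cops' unique $\sigma$-move followed by each possible Robber response. Setting $\beta((X,\gamma)) := X\cap V(G)$ yields width $\leq k-1$ since $|X|=k$ and $K\cap V(G)=\emptyset$. Vertex and edge coverage, as well as connectivity of $\beta^{-1}(v)$, follow from monotonicity by the standard argument behind the treewidth half of \cref{lem:games}. For the depth bound, each round alters exactly one cop position (since $|X_i\cap X_{i+1}|=k-1$), so along any root-to-leaf path of length at most $q$, at most $q$ distinct vertices of $V(G)$ appear in $\bigcup_{t'\preceq t}\beta(t')$, giving $\dep(T,r,\beta)\leq q$. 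Invoking \cref{thm:Ekq_equiv} then concludes $G\in\Ekq$; disconnected $G$ is handled per connected component.

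The hard part will be the case analysis for monotonicity in the forward direction, in particular the case where the cop to be relocated sits on a strict descendant of the new focus vertex. Ruling this out relies on tracking the history of cop placements against the monotonically shrinking escape spaces over the course of the game, rather than on the pebble condition alone.
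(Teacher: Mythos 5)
Your proposal follows the paper's route in both directions: in the forward direction you descend the $k$-pebble forest cover one level per round to build a monotone Cop strategy, and in the reverse direction you unfold the strategy tree into a tree decomposition of width $\leq k-1$ and depth $\leq q$ and invoke \cref{thm:Ekq_equiv}. The only place to tighten is the invariant for the forward direction: the paper maintains an explicit invariant on the \emph{cop positions} (that the cops on $V(G)$ are precisely the pebble-fresh ancestors of the current frontier vertex), which directly gives that the cop to be lifted sits strictly above the new target; since your stated invariant speaks only of the escape spaces $C_i$, both your case $v\prec w$ (which tacitly uses $v\prec u^\ast_i$ to place $u^\ast_i$ on the $F$-path from $v$ to $w$) and the sub-case you flag in your last paragraph rest on the chain property $u^\ast_0\prec u^\ast_1\prec\cdots$ of the successive minima and the fact that every cop on $V(G)$ occupies some $u^\ast_j$ with $j<i$, so this should be promoted from an ad-hoc history recall to an explicit part of the invariant.
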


The proof of this lemma follows the same strategies as the proof for the monotone version of the cops-and-robber game for treewidth (see for example \cite{Rabinovich14}). Details can be found in \cref{app:cr-equiv}.

\subsection{Separating $\Ekq$ from $\mathcal{TW}_k \cap \mathcal{TD}_q$ Syntactically}

We aim to show that the graph class $\Ekq$ is a proper subclass of $\mathcal{TW}_k \cap \mathcal{TD}_q$.
Since $\EParam{q}{q}=\TD_q=\TW_{q-1}\cap\TD_q$, one can only hope to separate the classes if $q$ is larger than $k$.
Using the characterisations of $\Ekq$ via a cops-and-robber game, we show that there indeed are graphs for which one can not simultaneously bound the width to treewidth and the depth to treedepth.
The graph we consider is the $(h\times\ell)$-grid \grid{h}{\ell}, with $h<\ell$. It is well known that $\tw(\grid{h}{\ell})=h$ and $\td(\grid{h}{\ell})\leq h \lceil\log(\ell+1)\rceil$, cf.\@ \cref{fig:grid-decompositions}. We give a lower bound to the number of rounds that the robber can survive in a, possibly non-monotone, game $\CR(G,h)$, which is linear in both $\ell$ and $h$.

\begin{restatable}{lemma}{lowerBoundRounds}
	\label{lem:lower-bound-rounds}
	For $1<h<\ell-2$ and $q\leq\frac{h(\ell  - h + 2)}{4}$, Robber wins the game $\CR_q^{h+1}(\grid{h}{\ell})$.
\end{restatable}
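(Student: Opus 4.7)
My plan is to exhibit an explicit strategy for Robber on $\grid{h}{\ell}$ and analyse it via a potential function. Throughout the game, Robber maintains a rectangular \emph{safe region} $R_i = [a_i, b_i] \times [c_i, d_i]$ of the grid with two invariants: (i) her current position lies in $R_i$, and (ii) no vertex of $R_i$ is occupied by a cop of $X_i$. Invariant~(ii) ensures that $R_i$ is entirely contained in Robber's escape component of $\grid{h}{\ell} \setminus X_i$. Since all $h+1$ cops begin on the disjoint clique $K$, Robber takes $R_0 = [1,h]\times[1,\ell]$ and positions herself at a corner.

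Next I would describe the update rule. When Cops move by placing a new cop at vertex $v$: if $v \notin R_i$, Robber sets $R_{i+1} = R_i$ and stays put; if $v = (p,q) \in R_i$, Robber shrinks $R_i$ to whichever of the four sub-rectangles $[a_i,p-1]\times[c_i,d_i]$, $[p+1,b_i]\times[c_i,d_i]$, $[a_i,b_i]\times[c_i,q-1]$, $[a_i,b_i]\times[q+1,d_i]$ maximises a potential $\Phi$, and walks to a corner of the new region along a path through $R_i$. The game's path condition is automatically satisfied: the path stays within $R_i$, whose vertices are cop-free in $X_i$ and thus cop-free in $X_i \cap X_{i+1}$, while the new cop $v$ is avoided since it lies outside the chosen sub-rectangle; and the robber's final position is cop-free in $X_{i+1}$ by the same reasoning together with invariant~(ii) at step $i$.

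The technical heart is the choice of potential. Guided by the shape of the target bound, I propose
\[
  \Phi_i \;=\; \tfrac{1}{4}\,\min(h_i',\ell_i')\bigl(\max(h_i',\ell_i') - \min(h_i',\ell_i') + 2\bigr),
\]
where $h_i' = b_i-a_i+1$ and $\ell_i' = d_i-c_i+1$ are the side lengths of $R_i$; this yields $\Phi_0 = h(\ell-h+2)/4$ on the starting rectangle. The key algebraic claim to prove is that, for any cop placement $(p,q) \in R_i$, at least one of the four sub-rectangles has potential at least $\Phi_i - 1$. Granting this claim, $\Phi$ drops by at most $1$ per round; since Cops cannot win while $\Phi_i > 0$ (then $R_i$ still contains cop-free vertices adjacent to Robber's current corner), Robber survives at least $\Phi_0 = h(\ell-h+2)/4$ rounds.

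The main obstacle will be establishing this algebraic claim. The argument splits according to whether $v$ lies near a row-boundary, near a column-boundary, or in the interior of $R_i$, and on the relative magnitudes of $h_i'$ and $\ell_i'$. Intuitively, a boundary cut costs almost nothing; a deep interior cut across the long side roughly halves it, but the resulting rectangle is more elongated so the slack term $\max - \min + 2$ grows to absorb the loss; an interior cut across the short side is the most delicate, and is the case on which the specific constant $1/4$ in the target hinges. Once the per-round drop of $1$ is verified by this case analysis, the conclusion follows immediately from the initial value of $\Phi$.
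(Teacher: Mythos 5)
Your key algebraic claim is false, and the strategy it supports cannot work. Take a safe region of size $2\times m$ with the cop placed at $(1,\lceil m/2\rceil)$: the four sub-rectangles are a $1\times m$ path (potential $\tfrac{m+1}{4}$) and two pieces of width $2$ and length about $m/2$ (potential about $\tfrac{m}{4}$), so the best available potential is roughly $\tfrac{m+1}{4}$ while $\Phi_i=\tfrac{m}{2}$ --- a drop of order $m/4$ in a single round, not $1$. A $4\times 10$ region with a cop at $(2,5)$ gives a drop from $8$ to $5$. The underlying problem is structural rather than a fixable constant: a single cop placed in the interior of $R_i$ does not disconnect $R_i$ at all (the true escape component loses exactly one vertex), yet your update rule forces Robber to concede an entire half of the rectangle. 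Any potential that is a faithful function of the retained sub-rectangle must therefore lose a constant fraction of the area against a centre placement, which caps this strategy at $O(h+\log \ell)$ useful rounds rather than the required $\Omega(h\ell)$.

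The paper's proof avoids exactly this trap: Robber always stays in the largest \emph{connected component} of $\grid{h}{\ell}\setminus X$, and the work goes into showing that a cop set of size at most $h+1$ can produce at most two components containing a full column plus at most one extra singleton component (\cref{lem:third-comp-single}), and that any component larger than $\tfrac{(h-1)(h+2)}{2}$ must contain a full column (\cref{lem:size-non-good}). Consequently the cops need at least $h$ rounds before the grid splits into two column-spanning pieces, and thereafter each round removes at most two vertices from Robber's component, which is what yields the linear bound $\tfrac{h(\ell-h+2)}{4}$. If you want to keep a potential-function flavour, the potential has to be (essentially) the size of the actual escape component, together with a proof that it decreases by at most an additive constant per round; tracking rectangles cannot deliver that.
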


The proof of \cref{lem:lower-bound-rounds} builds upon \cite{Furer01}.
The winning strategy of the robber is to always stay in the component with the most vertices.
We find a lower bound on the size of this component in terms of the number of rounds played and prove that the cop player can only force this bound to shrink by two vertices each round, for the majority of the rounds.
We additionally show that for $h>3$ Cops can indeed force the component to shrink by two vertices each round and thus in this case the bound given in \cref{lem:lower-bound-rounds} is tight up to an additive term depending only on $h$.

For $h=1$, the proof idea of \cref{lem:lower-bound-rounds} is not applicable as on a path there are separators of size two that separate the path into three components of roughly equal size.
Despite that, one may observe that such a separator does not benefit Cops as from such a position he would always have to combine two of these components into a larger one.
Thus the cop player can only move along the path and shrink the escape space of the robber by one vertex.
This case is covered in the original proof of \cite{Furer01}.

\begin{lemma}[{\cite[Theorems~5~and~7]{Furer01}}]
	\label{lem:rounds-path}
	Let $\ell\geq 1$. Robber wins the game $\CR_q^2(\grid{1}{\ell})$ if and only if $q\leq\lceil \frac{\ell-1}{2}\rceil$.
\end{lemma}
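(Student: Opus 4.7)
My plan is to prove both implications by exhibiting explicit winning strategies, treating the cops' direction and the robber's direction separately.

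For the cops-win direction (when $q \geq \lceil(\ell-1)/2\rceil + 1$), I use a \emph{middle-and-sweep} strategy. Let $m := \lceil \ell/2 \rceil$. In round 1, Cops moves one cop from $K$ to $v_m$; this partitions $V(P_\ell) \setminus \{v_m\}$ into two intervals of sizes at most $\lceil(\ell-1)/2\rceil$, and without loss of generality Robber ends up in the right interval $R = \{v_{m+1}, \dots, v_\ell\}$. In round 2, Cops moves the second cop from $K$ to $v_{m+1}$; the intersection $X_2 \cap X_3 = \{v_m\}$ blocks Robber from escaping leftwards, so Robber must end in $\{v_{m+2}, \dots, v_\ell\}$. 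In each subsequent round $i \geq 3$, Cops moves the trailing cop (at $v_{m+i-3}$) to $v_{m+i-1}$ while the cop at $v_{m+i-2}$ remains stationary as the intersection. One verifies by induction on $i$ that Robber's connected component in $V(P_\ell) \setminus X_{i+1}$ shrinks by exactly one vertex per round, so after $\lceil(\ell-1)/2\rceil$ sweeping rounds the component becomes $\{v_\ell\}$. The next round moves a cop to $v_\ell$ and catches Robber, for a total of $\lceil(\ell-1)/2\rceil + 1$ rounds.

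For the robber-win direction (when $q \leq \lceil(\ell-1)/2\rceil$), I show that no cops strategy succeeds in fewer than $\lceil(\ell-1)/2\rceil + 1$ rounds. The key structural observation, flagged in the paragraph preceding the lemma, is that any two-vertex cop set on a path induces at most three connected components, but since $|X_i \cap X_{i+1}| = 1$, every round Cops must release one of its two cops, merging two of the three components. Consequently, any "split" move placing a cop in the interior of Robber's current safe interval forces the release of the opposite boundary, and Robber can respond by moving into the merged region. Concretely, Robber starts at the path endpoint $v_\ell$ and, whenever Cops moves, responds by moving to the far-right vertex of the maximal cop-free suffix of the path containing his position, traversing old cop positions as necessary via paths whose internal vertices avoid the unique intersection cop. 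The central claim is that an appropriate potential combining the size of Robber's current component with the distance from bounding cops to path endpoints decreases by at most one per round, so Robber survives $\lceil(\ell-1)/2\rceil$ rounds.

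The main obstacle is identifying the correct potential or invariant. The naive choice "size of Robber's safe component decreases by at most one per round" is actually too strong: a split move by Cops can shrink the component by more than one in a single round (e.g., on $P_7$ the move sequence placing cops first at $v_4$ and then at $v_6$ reduces the robber component from size~$3$ to size~$1$). However, such aggressive splits come at the cost of releasing boundaries, so any amortised potential must track both the component size and the distance from the boundary cops to the nearest path endpoint, counting the latter as available "slack" that Robber can later exploit by retreating. Verifying a one-per-round decrease of this amortised potential against every cops move---splits, sweeps, and "wasted" moves that relocate a cop to an irrelevant vertex---constitutes the bulk of the technical work, and is carried out in the original proof of Furer cited in the lemma statement.
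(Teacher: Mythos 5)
The paper does not prove this lemma; it is stated with the attribution \cite[Theorems~5~and~7]{Furer01} and the argument is deferred entirely to Fürer's paper. There is therefore no in-paper proof against which to compare your attempt, but the attempt itself can still be assessed on its merits.

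Your cops-win direction is correct and self-contained. Placing the first cop at $v_m$ with $m = \lceil \ell/2\rceil$, then committing the second cop to whichever side the robber chose, and thereafter advancing the trailing cop two steps past the stationary one shrinks the robber's interval by exactly one vertex per round. The robber's side after round two has $\lfloor \ell/2\rfloor - 1$ vertices, so the capture occurs after $\lfloor \ell/2\rfloor + 1 = \lceil(\ell-1)/2\rceil + 1$ rounds, matching the claimed threshold.

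The robber-win direction has both a flaw and a gap. The concrete strategy you sketch---start at the endpoint $v_\ell$ and always move to the far-right vertex of the maximal cop-free suffix containing the robber---is not sound. On $P_5$ the cops may first place a cop at $v_4$; under your rule the robber stays at $v_5$ (the cop-free suffix is the singleton $\{v_5\}$), and the second cop then moves to $v_5$ while $v_4$ remains the intersection, capturing the robber in round two, although survival for two full rounds is what must be shown. A correct robber strategy must exploit the free repositioning available in round one and in general must track the \emph{largest} cop-free component, which may lie on the left. Beyond this, you correctly identify why the naive invariant (component size drops by at most one per round) fails---a split move can collapse the component by more---and you propose an amortised potential combining component size with boundary slack, but you then explicitly defer the verification that this potential decreases by at most one per round to Fürer's original argument. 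Since that verification is the entire substance of the lower bound, the lower-bound half of the lemma remains unproved in your proposal.
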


With an appropriate choice of $\ell$ and a small alteration to the graph \grid{k-1}{\ell} that ensures that the treedepth of the graph is exactly $q$, we can prove the following:

\ekqtwtdsyntax*

Detailed proofs can be found in \cref{app:rounds}.
The reader should note that the proof of the lower bound on the number of rounds even holds for the non-monotone game, which in turn allows us to lift this result to the semantic level of homomorphism indistinguishability.
 
\section{Homomorphism Indistinguishability} \label{sec:homind}
In this section, we turn to investigating $\Ekq$ in terms of homomorphism indistinguishability.
It turns out that the representation of $\Ekq$ in terms of construction trees offers a great framework for obtaining characterisations of logical equivalence. The general idea will be to use these trees to inductively construct $\lC$-formulae that capture homomorphism counts. Not only does this approach generalise results from \cite{dvorak_recognizing_2010,grohe_counting_2020}, it also yields an intuitive characterisation of $\lC^k_q$-equivalence. This provides a more elementary proof of a result from \cite{dawar_lovasz-type_2021}.

Moreover, the constructive nature of our proof strategy proves fruitful in obtaining additional characterisations of fragments of $\lC$. The general idea is to impose natural restrictions on the construction trees, such that a fragment $\lL \subsetneq \lC$ already suffices to capture homomorphism counts. By choosing these restrictions carefully, the resulting subclass of $\Ekq$ is then still large enough to capture $\lL$-equivalence. We illustrate this point by giving a characterisation of \emph{guarded} counting logic $\lGC$.

We conclude by \emph{semantically} separating $\Ekq$ and $\mathcal{TW}_{k-1} \cap \mathcal{TD}_q$. More formally, we show that, for $q$ sufficiently larger than $k$, there exist graphs $G$ and $H$ which are homomorphism indistinguishable over $\Ekq$ but have different numbers of homomorphisms from some graph in $\mathcal{TW}_{k-1} \cap \mathcal{TD}_q$.

\subsection{Homomorphism Indistinguishability over $\Ekq$ is $\lC^k_q$-Equivalence}
\label{subsec:ckq}
In his 2010 paper \cite{dvorak_recognizing_2010}, Dvo\v{r}ák showed that $\lC^k$-equivalence is equivalent to homomorphism indistinguishability over $\TW_k$. It turns out that his techniques generalize remarkably well to construction trees.
To begin with, we make a few observations on how the operations that make up construction trees interact with homomorphism counts.

\begin{observation}\label{prp:homproducts}
  For labelled graphs $F_1, F_2, G$, it holds that $\hom(F_1F_2, G) = \hom(F_1, G) \cdot \hom(F_2, G)$.
\end{observation}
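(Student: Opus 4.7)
The plan is to exhibit an explicit bijection between $\HOM(F_1 F_2, G)$ and $\HOM(F_1, G) \times \HOM(F_2, G)$; the statement then follows by taking cardinalities. The key observation that makes everything work is that $G$ being labelled forces every homomorphism $F_i \to G$ to send the vertex carrying label $\ell$ (when present) to the unique vertex $\nu_G(\ell)$. Consequently, two homomorphisms $h_1 : F_1 \to G$ and $h_2 : F_2 \to G$ automatically agree on the vertices that get identified when forming the product $F_1 F_2$.

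First I would set up the natural label-preserving maps $\iota_i : F_i \to F_1 F_2$ induced by the construction of the product (disjoint union followed by identification of same-labelled vertices and suppression of loops and parallel edges). Each $\iota_i$ maps edges to edges (possibly identifying distinct edges of $F_1 \sqcup F_2$ to a single edge of $F_1 F_2$, but never destroying adjacency). The forward direction of the bijection is then $h \mapsto (h \circ \iota_1, h \circ \iota_2)$, and both components are homomorphisms of labelled graphs because $h$ and the $\iota_i$ are.

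For the backward direction, given $(h_1, h_2) \in \HOM(F_1, G) \times \HOM(F_2, G)$, I would define $h \coloneqq h_1 \sqcup h_2$ on $V(F_1 F_2)$. Well-definedness on identified vertices holds since both $h_1$ and $h_2$ must agree with $\nu_G$ on labelled vertices. To verify $h$ is a homomorphism, note that every edge of $F_1 F_2$ arises (up to identification) from an edge of $F_1$ or of $F_2$; since $h_i$ preserves the adjacencies of $F_i$, this edge is mapped to an edge of $G$. Label-preservation of $h$ is inherited from $h_1$ and $h_2$.

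The two assignments are mutually inverse essentially by construction: restricting $h_1 \sqcup h_2$ along $\iota_i$ returns $h_i$, and conversely any $h$ is recovered from its restrictions since $\iota_1$ and $\iota_2$ jointly cover $V(F_1 F_2)$. There is no real obstacle; the only thing to be careful about is bookkeeping around the suppression of parallel edges and loops, which is harmless here because $G$ is simple and so both sides of the bijection are blind to edge multiplicity.
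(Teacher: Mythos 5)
Your proof is correct and follows essentially the same route the paper takes: exhibit a bijection $\HOM(F_1F_2,G) \cong \HOM(F_1,G)\times\HOM(F_2,G)$, using the fact that homomorphisms into the labelled graph $G$ must agree on labelled vertices, so $h_1 \sqcup h_2$ is well defined, while the inverse direction restricts a homomorphism from $F_1F_2$ to the copies of $F_1$ and $F_2$. The extra bookkeeping with the maps $\iota_i$ and the remark about suppression of loops and parallel edges is a fine elaboration but adds nothing beyond what the paper's one-line justification already contains.
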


This is because any two homomorphisms $g \colon F_1 \to G$ and $h \colon F_2 \to G$ must agree on vertices with the same label, so $g \sqcup h$ is well-defined and a homomorphism from $F_1F_2$ to $G$. Moreover, for $h \in \HOM(F_1F_2, G)$ the restrictions $h\restrict{V(F_1)}$ and $h\restrict{V(F_2)}$ are homomorphisms. We can also relate the homomorphism counts from graphs $F$ and $F'$, whenever $F'$ is obtained from $F$ by removing some label $\ell$. Then in any homomorphism $h \colon F' \to G$ the image of $\nu_F(\ell)$ is no longer necessarily $\nu_G(\ell)$. Hence, we can obtain $\hom(F, G)$ by moving the the label $\ell$ to different vertices in $G$ and tallying up the homomorphisms from $F$ to those graphs.

\begin{observation}\label{prp:homlabeldel}
  Let $F'$ be the graph obtained from $F$ by removing a single label $\ell$. Then $\hom(F', G) = m$ if and only if there exists a decomposition $m = \sum_{i=1}^tc_im_i$ with $c_i, m_i \in \N$, such that:
  \begin{itemize}
  \item There exist exactly $c_i$ vertices $v$ with $\hom(F, G(\ell \rightarrow v)) = m_i$.
    \item There exist exactly $c \coloneqq \sum_i c_i$ vertices $v$ with $\hom(F, G(\ell \rightarrow v)) \neq 0$.
  \end{itemize}
\end{observation}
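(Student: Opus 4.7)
The plan is to reduce the biconditional to the fundamental identity
\[
\hom(F', G) = \sum_{v \in V(G)} \hom(F, G(\ell \to v)),
\]
after which both directions follow by simply grouping the summands according to their value.

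To establish this identity, I would exhibit a bijection between $\HOM(F', G)$ and the disjoint union $\bigsqcup_{v \in V(G)} \HOM(F, G(\ell \to v))$. Let $u \coloneqq \nu_F(\ell)$ be the vertex of $F$ carrying the label that is being removed. A homomorphism $h \colon F' \to G$ respects every remaining label and every edge of $F'$; since $u$ no longer bears the label $\ell$ in $F'$, the image $v \coloneqq h(u)$ is unconstrained by $\ell$. Moving the label $\ell$ in $G$ to $v$ promotes $h$ to a homomorphism $F \to G(\ell \to v)$, because the only additional constraint introduced, namely $h(\nu_F(\ell)) = \nu_{G(\ell \to v)}(\ell)$, holds by construction. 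Conversely, any $h \in \HOM(F, G(\ell \to v))$ is itself a homomorphism $F' \to G$, since removing the label $\ell$ only weakens the constraints. These operations are mutually inverse, and partitioning $\HOM(F', G)$ according to the image of $u$ yields the identity.

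With the identity in hand, the forward direction proceeds by enumerating the distinct nonzero values $m_1,\dots,m_t$ attained by the function $v \mapsto \hom(F, G(\ell \to v))$ and setting $c_i \coloneqq |\{ v \in V(G) \mid \hom(F, G(\ell \to v)) = m_i \}|$. Both multiplicity conditions then hold by definition, and the identity gives $\sum_i c_i m_i = \hom(F', G) = m$. For the backward direction, starting from a decomposition satisfying the two conditions, the identity yields $\hom(F', G) = \sum_{v} \hom(F, G(\ell \to v)) = \sum_i c_i m_i = m$, where the second condition guarantees that the vertices contributing nothing to the sum are precisely those lying outside the nonzero support.

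The argument is essentially bookkeeping, so I do not expect any serious obstacle. The only subtlety worth highlighting is that $u = \nu_F(\ell)$ may carry additional labels in $F$; in that case $\hom(F, G(\ell \to v))$ vanishes whenever $v$ clashes with the vertex of $G$ already pinned by those other labels, but such terms contribute $0$ and are therefore absorbed automatically into the non-support part of the sum.
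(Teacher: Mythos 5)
Your proposal is correct and follows the same route the paper takes: the paper justifies this observation only by the informal remark that removing the label frees the image of $\nu_F(\ell)$, so one recovers $\hom(F',G)$ by summing $\hom(F, G(\ell\to v))$ over all $v$ and grouping by value, which is exactly the identity you prove and the bookkeeping you carry out. Your explicit bijection and the remark about $\nu_F(\ell)$ possibly carrying further labels are more detail than the paper provides, but nothing in substance differs.
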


Finally, observe that when $F$ is fully labelled there can be at most one homomorphism $h \colon F \to G$, which is entirely determined by the label positions in $G$.

\begin{observation}
  Let $F$ be a fully labelled graph and let $L_F$ denote the set of labels. Then there exists a unique homomorphism $h \colon F \to G$ if for all labels $i, j \in L_F$
  \begin{itemize}
  \item $\nu_F(i) = \nu_F(j) \implies \nu_G(i) = \nu_G(j)$,
  \item $\nu_F(i)\nu_F(j) \in E(F) \implies \nu_G(i)\nu_G(j) \in E(G)$.
  \end{itemize}
\end{observation}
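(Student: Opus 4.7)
The plan is to exploit that $F$ being fully labelled forces the value of any homomorphism on every vertex, so both existence and uniqueness reduce to checking well-definedness and the edge condition.

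First, I would define the candidate map $h$. Since $V(F) = L_F$ (as $F$ is fully labelled, every vertex carries at least one label), for each $v \in V(F)$ I pick some label $i$ with $\nu_F(i) = v$ and set $h(v) \coloneqq \nu_G(i)$. To see that this is well-defined regardless of the choice of $i$: if $j$ is another label with $\nu_F(j) = v = \nu_F(i)$, then the first hypothesis $\nu_F(i) = \nu_F(j) \implies \nu_G(i) = \nu_G(j)$ gives $\nu_G(i) = \nu_G(j)$, so the value of $h(v)$ does not depend on the chosen label. By construction $h(\nu_F(i)) = \nu_G(i)$ for every $i \in L_F$, so $h$ respects labels.

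Next, I would verify the homomorphism property. Let $uv \in E(F)$. Using full labelling, pick labels $i,j$ with $\nu_F(i) = u$ and $\nu_F(j) = v$. Then $\nu_F(i)\nu_F(j) = uv \in E(F)$, and the second hypothesis yields $\nu_G(i)\nu_G(j) \in E(G)$, that is, $h(u)h(v) \in E(G)$. Hence $h$ is a label-preserving homomorphism $F \to G$.

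Finally, uniqueness is immediate: any homomorphism $h' \colon F \to G$ must satisfy $h'(\nu_F(i)) = \nu_G(i)$ for every $i \in L_F$, and since every vertex of $F$ is of the form $\nu_F(i)$ for some $i$, the map $h'$ agrees with $h$ on all of $V(F)$. There is no real obstacle here; the only subtlety worth flagging is that the first hypothesis is needed precisely to rule out a conflict when a single vertex of $F$ carries several labels, which is exactly the situation that could otherwise prevent $h$ from being well-defined.
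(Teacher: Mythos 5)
Your proof is correct and is essentially the (implicit) argument the paper has in mind; the paper states this as an unproven observation, and you fill it in the natural way: define $h$ on each vertex via any of its labels, use the first hypothesis for well-definedness when a vertex carries multiple labels, use the second hypothesis for the edge condition, and derive uniqueness from the fact that every vertex of $F$ is labelled so any label-preserving homomorphism is forced pointwise.
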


The crucial insight is these conditions are all definable in $\lC$. The condition for fully labelled graphs in particular can be expressed as a conjunction of atomic formulae using at most $|L_F|$ different variables. This allows us to prove the following lemma by induction over a construction tree.
The proofs of the lemmas in this section can be found in \Cref{app:c-homind}.

\begin{restatable}{lemma}{homcountsInCkq}
  \label{lem:homcounts_in_ckq}
  Let $F \in \Lkq$ be a $k$-labelled graph, and let $m \geq 0$. Then there exists a formula $\phi_m \in \lC^k_q$ such that for each $k$-labelled graph $G$ with $L_F \subseteq L_G$, $G \models \phi_m$ if and only if $\hom(F, H) = m$.
\end{restatable}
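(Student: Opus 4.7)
The plan is to induct on the structure of a $k$-construction tree $(T, \lambda, r)$ witnessing $F \in \Lkq$, matching the three cases of the Observations just above to the three node types. At a leaf, $F$ is fully labelled, so by the third observation any homomorphism $F \to G$ is forced on every vertex by the labelling of $G$, and $\hom(F,G) \in \{0,1\}$. Accordingly take
$$\phi_1 \coloneqq \bigwedge_{\substack{\ell, \ell' \in L_F \\ \nu_F(\ell) = \nu_F(\ell')}} (x_\ell = x_{\ell'}) \;\land\; \bigwedge_{\substack{\ell, \ell' \in L_F \\ \nu_F(\ell)\nu_F(\ell') \in E(F)}} E x_\ell x_{\ell'} \quad \in \quad \lC^k_0,$$
and set $\phi_0 \coloneqq \neg \phi_1$, $\phi_m \coloneqq \bot$ for $m \geq 2$. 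All variable indices are in $L_F \subseteq [k]$, so we stay in $\lC^k_0$.

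At a product root $r$ with children $t_1, \ldots, t_j$ and $F_i \coloneqq \lambda(t_i) \in \LParam{k}{q}$, \cref{prp:homproducts} gives $\hom(F, G) = \prod_i \hom(F_i, G)$. Using the formulas $\phi^{(i)}_{m_i} \in \lC^k_q$ supplied by the induction hypothesis, I set
$$\phi_m \coloneqq \bigvee_{\substack{(m_1,\dots,m_j) \in \N^j \\ \prod_i m_i = m}} \bigwedge_{i=1}^j \phi^{(i)}_{m_i}.$$
The disjunction is finite (for $m \geq 1$ each $m_i \leq m$; for $m = 0$ it collapses to $\bigvee_i \phi^{(i)}_0$) and no new quantifiers are introduced, so $\phi_m \in \lC^k_q$.

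At an elimination root with unique child $t'$, write $F'' \coloneqq \lambda(t') \in \LParam{k}{q-1}$, with $F$ obtained from $F''$ by removing some label $\ell \in [k]$. The induction hypothesis yields formulas $\phi''_i \in \lC^k_{q-1}$ in which $x_\ell$ is free. By \cref{prp:homlabeldel}, $\hom(F,G) = m$ iff (a) no vertex $v$ has $\hom(F'', G(\ell \to v)) > m$, and (b) there is a profile $(a_1, \ldots, a_m) \in \N^m$ with $\sum_{i=1}^m i \cdot a_i = m$ such that exactly $a_i$ vertices satisfy $\hom(F'', G(\ell \to v)) = i$. Only finitely many profiles qualify, so set
$$\phi_m \coloneqq \Bigl(\forall x_\ell \bigvee_{i=0}^m \phi''_i\Bigr) \;\land\; \bigvee_{\substack{(a_1,\ldots,a_m) \\ \sum_i i\, a_i = m}} \bigwedge_{i=1}^m \exists^{=a_i} x_\ell\; \phi''_i.$$
Since $\ell \in [k]$, binding $x_\ell$ reuses one of the $k$ available variable slots, and the quantifier rank grows by exactly $1$, so $\phi_m \in \lC^k_q$.

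The only non-routine point is noticing that although a homomorphism count can be arbitrarily large, fixing the target value $m$ makes each additive or multiplicative decomposition one of finitely many, so the constructed formulas remain finite. The universal clause $\forall x_\ell \bigvee_{i \leq m} \phi''_i$ in the elimination step is easy to miss but essential: otherwise a vertex $v$ with $\hom(F'', G(\ell \to v)) > m$ would slip past the quantifier-counting clauses and inflate $\hom(F,G)$ past $m$ without being detected. The remaining bookkeeping—variable reuse, the quantifier-rank arithmetic, and the observation that the subtree at a product child still has elimination depth at most $q$—is dictated directly by \cref{def:Ekq}.
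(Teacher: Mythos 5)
Your proof is correct and follows essentially the same route as the paper: induction over the $k$-construction tree with the leaf, product, and elimination cases handled exactly as in \cref{lem:homcounts_in_ckq}. The only (equivalent) variation is in the elimination step, where the paper excludes uncounted contributions by asserting that exactly $c=\sum_i c_i$ vertices yield a nonzero count, while you instead add the universal clause bounding every vertex's count by $m$ and use exact counting quantifiers for each value $i\in[m]$; both encodings add one to the quantifier rank and reuse the variable $x_\ell$, so the bookkeeping goes through identically.
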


Ideally, we would like to prove the converse in a similar manner. Given some $\lC^k_q$ formula $\psi$ that distinguishes two graphs $G$ and $H$, construct a graph $F \in \Lkq$ with $\hom(F, G) \neq \hom(F, H)$ per induction over the structure of $\psi$. While graphs are too rigid in this regard, such a construction will be possible using \emph{linear combinations} of graphs.\footnote{These linear combinations are called ``quantum graphs'' in \cite{dvorak_recognizing_2010}.}

For a class of (labelled) graphs $\mathcal{F}$, we let $\R\mathcal{F}$ be the class of finite formal linear combinations with real coefficients of graphs $F \in \mathcal{F}$. We generalise the $\hom$ function to $\R\CG$ in a linear way by defining
\begin{equation*}
    \hom(\qg{F}, G) = \hom(\sum_i c_iF_i, G) \coloneqq \sum_i c_i \cdot \hom(F_i, G).
  \end{equation*}
for $\qg{F} = \sum_i c_iF_i$.

  We make the following observation, which will allow us to reason about linear combinations instead of graphs.

  \begin{observation}\label{prp:qgdistinguishing}
    Let $G, H$ be graphs and let $\qg{F} \in \R\mathcal{F}$. If $\hom(\qg{F}, G) \neq \hom(\qg{F}, H)$, then there is already an $F \in \mathcal{F}$ with $\hom(F, G) \neq \hom(F, H)$.
  \end{observation}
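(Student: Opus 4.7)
The plan is to prove the contrapositive: if $\hom(F,G) = \hom(F,H)$ for every $F \in \mathcal{F}$, then $\hom(\qg{F}, G) = \hom(\qg{F}, H)$ for every $\qg{F} \in \R\mathcal{F}$. This immediately gives the observation, since the assumption $\hom(\qg{F}, G) \neq \hom(\qg{F}, H)$ rules out the possibility that all members of $\mathcal{F}$ agree.

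Concretely, write $\qg{F} = \sum_i c_i F_i$ as a finite formal sum with $F_i \in \mathcal{F}$ and $c_i \in \R$. By the linearity of the extended $\hom$ operator, which was built into its very definition in the paragraph preceding the statement, we have
\begin{equation*}
  \hom(\qg{F}, G) \;=\; \sum_i c_i \cdot \hom(F_i, G)
  \qquad\text{and}\qquad
  \hom(\qg{F}, H) \;=\; \sum_i c_i \cdot \hom(F_i, H).
\end{equation*}
If $\hom(F_i, G) = \hom(F_i, H)$ held for every index $i$, the two right-hand sides would agree term by term and we would get $\hom(\qg{F}, G) = \hom(\qg{F}, H)$, contradicting the hypothesis. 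Hence there must exist some $i$ with $\hom(F_i, G) \neq \hom(F_i, H)$, and this $F_i \in \mathcal{F}$ is the desired graph.

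There is essentially no obstacle here: the statement is a direct consequence of the linear extension of $\hom$, and no structural property of the class $\mathcal{F}$ or of the graphs $G, H$ is used. The only thing to be slightly careful about is that the representation $\qg{F} = \sum_i c_i F_i$ is a \emph{finite} formal combination, so the sums above are genuinely finite and the elementary distributivity argument goes through without any convergence issues.
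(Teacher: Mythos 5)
Your proof is correct and matches the (implicit) reasoning the paper intends: the paper states this as an unproved observation precisely because it follows immediately from the linearity of the extended $\hom$ operator, which is exactly the contrapositive argument you give.
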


  The product of two linear combinations is defined in the natural way, where the graph products distribute over the sum. We also remove any graphs with loops that might have been created from the resulting linear combination. This definition preserves the property that $\hom(\qg{F}_1\qg{F}_2, H) = \hom(\qg{F}_1, H)\hom(\qg{F}_2, H)$ and admits the following interpolation lemma.

  \begin{lemma}[{\cite[Lemma 5]{dvorak_recognizing_2010}}]
    \label{lem:qginterpolation}
  Let $\mathcal{F}$ be a class of graphs and let $\qg{F} \in \R\mathcal{F}$. If $S^-, S^+$ are disjoint finite sets of real numbers, then there exists a linear combination $\qgp{F}{S^+}{S^-}$, such that for any graph $G$
  \begin{itemize}
    \item $\hom(\qgp{F}{S^+}{S^-}, G) = 1$ if $\hom(\qg{F}, G) \in S^+$, and
    \item $\hom(\qgp{F}{S^+}{S^-}, G) = 0$ if $\hom(\qg{F}, G) \in S^-$.
  \end{itemize}
    Moreover, if $\mathcal{F}$ is closed under taking products then $\qgp{F}{S^+}{S^-} \in \R\mathcal{F}$.
  \end{lemma}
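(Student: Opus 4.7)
My plan is to reduce the problem to real Lagrange interpolation and then lift a scalar polynomial to a quantum-graph expression in the variable $\qg{F}$, exploiting that $\hom(-,G)$ is linear in its first argument and multiplicative on the quantum-graph product. First, I would enumerate the $m \coloneqq |S^+| + |S^-|$ points in $S \coloneqq S^+ \cup S^-$ and invoke standard Lagrange interpolation to obtain a polynomial $p(x) \in \R[x]$ of degree at most $m-1$ with $p(s) = 1$ for $s \in S^+$ and $p(s) = 0$ for $s \in S^-$. Writing $p(x) = \sum_{j=0}^{m-1} c_j x^j$, I then define
\[
  \qgp{F}{S^+}{S^-} \coloneqq \sum_{j=0}^{m-1} c_j\, \qg{F}^{j},
\]
where $\qg{F}^0$ is interpreted as the empty labelled graph $K_0$, the unit of the quantum-graph product, which satisfies $\hom(K_0, G) = 1$ for every $G$.

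The key verification step uses \cref{prp:homproducts}, extended linearly to $\R\CG$ via the paper's remark that $\hom(\qg{F}_1\qg{F}_2, H) = \hom(\qg{F}_1, H)\hom(\qg{F}_2, H)$, together with the linearity of $\hom(-,G)$. These yield $\hom(\qg{F}^j, G) = \hom(\qg{F}, G)^j$ for every $j \geq 0$ and hence
\[
  \hom(\qgp{F}{S^+}{S^-}, G) \;=\; \sum_{j=0}^{m-1} c_j\, \hom(\qg{F}, G)^{j} \;=\; p(\hom(\qg{F}, G)),
\]
which, by the defining property of $p$, equals $1$ whenever $\hom(\qg{F}, G) \in S^+$ and $0$ whenever $\hom(\qg{F}, G) \in S^-$, as required.

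For the \emph{moreover}-clause, closure of $\mathcal{F}$ under products ensures that each power $\qg{F}^j$ with $j \geq 1$ is a linear combination of graphs in $\mathcal{F}$, and the constant term $c_0$ is handled by interpreting $K_0$ as the empty product. The main conceptual obstacle is exactly this bookkeeping around the degree-zero term: one must fix a convention so that the constant coefficient $c_0$ is represented within $\R\mathcal{F}$. Once the empty labelled graph is admitted into the formalism (as is standard for quantum graphs, and harmless for all classes considered in this paper, which are closed under disjoint unions hence trivially contain $K_0$), the remainder of the argument is routine real polynomial algebra combined with the multiplicativity property from \cref{prp:homproducts}.
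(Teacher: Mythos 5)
Your proof is correct and is essentially the argument the paper relies on: the lemma is quoted from Dvo\v{r}\'ak, whose proof is exactly this Lagrange interpolation of a polynomial evaluated at $\qg{F}$, using linearity of $\hom(-,G)$ and multiplicativity over the quantum-graph product. Your explicit handling of the degree-zero term via the empty labelled graph as the unit of the product is the right convention and is in fact needed for the paper's own applications (e.g.\@ $\qgp{F_\psi}{0}{1}$ for negation has a nonzero constant term).
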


  With this result, we may construct for a formula $\phi \in \lC^k_q$ and $n \in \N$ a linear combination $\qg{F}_{\psi, n}$ such that for all graphs $G$ of size $n$ it holds that $\hom(\qg{F}_{\psi, n}, G) = 1$ if $G \models \psi$ and $\hom(\qg{F}_{\psi, n}, G) = 0$ otherwise. We say that $\qg{F}_{\psi, n}$ \emph{models $\psi$ for graphs of size $n$}.

  \begin{restatable}{lemma}{qgFromCkq}
    \label{lem:qg_from_ckq}
  Let $k, q \geq 1$ and let $\phi$ be a $\lC^k_q$-formula. Then for every $n \geq 1$ there exists an $\qg{F} \in \R\Lkq$ modelling $\phi$ for graphs of size $n$.
\end{restatable}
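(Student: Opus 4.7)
The plan is to induct on the structure of $\phi\in\lC^k_q$, mirroring each syntactic operation by a corresponding combinatorial operation on $k$-labelled graphs while tracking that the resulting construction trees stay within $\Lkq$. For the atomic base cases, take $\qg{F}_{x_i=x_j}$ to be a single vertex carrying both labels $i$ and $j$ simultaneously, and $\qg{F}_{Ex_ix_j}$ to be a single edge with endpoints labelled $i$ and $j$; both are fully labelled and hence leaves of construction trees, so they lie in $\Lkq$ trivially. For negation $\neg\psi$, apply the interpolation lemma (\cref{lem:qginterpolation}) to $\qg{F}_\psi$ with $S^+ = \{0\}$ and $S^- = \{1\}$; by the inductive hypothesis, $\hom(\qg{F}_\psi, G) \in \{0,1\}$ on size-$n$ graphs, so this swaps the two values. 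For disjunction $\psi\lor\theta$, apply interpolation to the sum $\qg{F}_\psi + \qg{F}_\theta$ with $S^+ = \{1, 2\}$ and $S^- = \{0\}$. Both steps require that $\Lkq$ is closed under products, which is immediate: the product $F_1F_2$ admits a construction tree with a single product node joining the trees for $F_1$ and $F_2$ at the root, preserving elimination depth.

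The interesting case is $\phi = \exists^{\geq t} x_\ell \psi$ with $\psi \in \lC^k_{q-1}$. The inductive hypothesis supplies $\qg{F}_\psi \in \R\LParam{k}{q-1}$ modelling $\psi$ for size-$n$ graphs, and one may assume label $\ell$ occurs in every summand (the alternative, that $\psi$ does not depend on $x_\ell$, is degenerate and handled separately). Let $\qg{F}'$ be obtained from $\qg{F}_\psi$ by removing label $\ell$ from every summand; this corresponds to prepending a single elimination node to each construction tree, hence $\qg{F}' \in \R\Lkq$. Extending \cref{prp:homlabeldel} linearly gives
\begin{equation*}
  \hom(\qg{F}', G) \;=\; \sum_{v \in V(G)} \hom(\qg{F}_\psi, G(\ell \to v)) \;=\; \bigl|\{v \in V(G) : G(\ell \to v) \models \psi\}\bigr|,
\end{equation*}
an integer in $\{0, 1, \dots, n\}$. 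A final application of interpolation to $\qg{F}'$ with $S^+ = \{t, \dots, n\}$ and $S^- = \{0, \dots, t-1\}$ produces $\qg{F}_\phi \in \R\Lkq$ which evaluates to $1$ precisely when at least $t$ vertices $v$ satisfy $G(\ell \to v) \models \psi$, i.e.\ when $G \models \phi$. The restriction to size-$n$ graphs is precisely what lets us keep $S^+$ finite.

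The main obstacle I expect is the bookkeeping verifying that every inductive step produces a linear combination in $\R\Lkq$. This reduces to two structural facts about construction trees: (i) closure under products via a single product root, and (ii) closure under label removal at the cost of at most one additional elimination node. The $k$-variable bound is automatic since variable reuse in $\lC^k_q$ corresponds exactly to label reuse: each quantifier $\exists^{\geq t} x_\ell$ introduces label $\ell$ in the subformula's graphs and removes it in the final step, so no more than $k$ labels are ever active at once. Stringing these ingredients together along the formula structure delivers the required $\qg{F}_\phi$.
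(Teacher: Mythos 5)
Your proof is correct and follows essentially the same line as the paper's: the same base cases, the same use of the interpolation lemma with the same sets $S^+, S^-$ for negation and disjunction, and the same label-removal-plus-interpolation step for $\exists^{\geq t}x_\ell$, together with the same structural observations (closure of $\Lkq$ under products via a product root; label removal adding one elimination node). The only small point the paper handles that you elide is the atomic case $Ex_ix_i$, where one must set the linear combination to $0$; your aside about $x_\ell$ not occurring free in $\psi$ is a reasonable extra caution but is not in the paper.
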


The proof is per induction over the structure of $\phi$ and exploits how homomorphism counts change under label deletions and taking products. The interpolation construction is used to define negation and to renormalise homomorphism counts to $0$ or $1$. The construction has the property that labels in the components of $\qg{F}$ correspond to free variables of $\phi$. This motivates the following corollary.

\begin{corollary}\label{cor:qg_from_ckq}
  Let $k, q \geq 1$ and let $\phi$ be a $\lC^k_q$-sentence. Then for every $n \geq 1$ there exists an $\qg{F} \in \R\Ekq$ modelling $\phi$ for graphs of size $n$.
\end{corollary}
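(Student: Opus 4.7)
The plan is to invoke \cref{lem:qg_from_ckq} directly on the sentence $\phi$, which yields a linear combination $\qg{F} \in \R\Lkq$ modelling $\phi$ for graphs of size $n$. The remaining task is to argue that, because $\phi$ has no free variables, every component of $\qg{F}$ can be identified with an unlabelled graph in $\Ekq$, thereby upgrading membership in $\R\Lkq$ to membership in $\R\Ekq$.

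The key is the invariant, already alluded to in the remark preceding this corollary, that the set of labels assigned to vertices of each component of the linear combination constructed in the proof of \cref{lem:qg_from_ckq} coincides with the set of indices of free variables of the subformula being processed. I would verify this invariant by induction on the structure of $\phi$. For the atomic formulae $E x_i x_j$ and $x_i = x_j$, the modelling graph is fully labelled with labels $\{i,j\}$, matching the free variables. The connectives $\neg$ and $\lor$ are realised via the interpolation \cref{lem:qginterpolation} and summation, neither of which introduces nor removes labels. The quantifier $\exists^{\geq t} x_\ell$ in the inductive step corresponds to an elimination node removing label $\ell$, precisely mirroring the binding of $x_\ell$ in the formula.

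Since $\phi$ is a sentence, $\free(\phi) = \emptyset$, and the invariant forces every component $F$ of $\qg{F}$ to carry no labels at all. Viewing such an $F$ as an unlabelled graph, the very construction tree that witnessed $F \in \Lkq$ also witnesses that $F$ is \elimOrd{k}{q} in the unlabelled sense. By the equivalence of items~(1) and~(3) in \cref{thm:Ekq_equiv}, this yields $F \in \Ekq$, and hence $\qg{F} \in \R\Ekq$ as required.

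The main obstacle I anticipate is handling vacuous quantifiers $\exists^{\geq t} x_\ell\, \psi$ with $x_\ell \notin \free(\psi)$, where the label $\ell$ does not appear in the graphs produced for $\psi$. This is harmless: the corresponding elimination step can either be omitted or executed as a trivial no-op removal of an unassigned label, neither of which affects the elimination depth of the resulting construction tree, so the bound $q$ is preserved.
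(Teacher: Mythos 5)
Your proposal is correct and takes the same route the paper has in mind: the corollary is an immediate consequence of \cref{lem:qg_from_ckq} together with the observation (stated in the text just before the corollary) that labels in the components of $\qg{F}$ only arise from free variables, so that for a sentence every component is an unlabelled graph in $\Lkq$, whence \cref{thm:Ekq_equiv} gives membership in $\Ekq$. One minor imprecision: the invariant that actually holds (and suffices) is that the label set of each component is a \emph{subset} of the free-variable indices rather than always equal to it, since the interpolation step for $\lor$ can produce product terms drawn from only one disjunct; this does not affect your conclusion.
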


  We can now prove the main result of this section.

  \begin{theorem}\label{thm:ckq_equivalence}
    Let $k,q \geq 1$. 
    Two graphs $G$ and $H$ are $\lC^k_q$-equivalent if and only if they are homomorphism indistinguishable over $\Ekq$.
  \end{theorem}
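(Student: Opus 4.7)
The theorem is a direct corollary of the two technical lemmas already stated, namely \cref{lem:homcounts_in_ckq} (graphs in $\Lkq$ yield $\lC^k_q$-formulae capturing their homomorphism counts) and \cref{cor:qg_from_ckq} (every $\lC^k_q$-sentence is modelled by a linear combination in $\R\Ekq$). The plan is simply to bolt these together and use \cref{prp:qgdistinguishing} to pass from linear combinations back to individual graphs.

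For the easy direction, assume $G \equiv_{\lC^k_q} H$ and fix $F \in \Ekq$, viewed as a $k$-labelled graph with empty label set. Let $m \coloneqq \hom(F,G)$. Applying \cref{lem:homcounts_in_ckq}, there is a formula $\phi_m \in \lC^k_q$ such that any graph $G'$ satisfies $\phi_m$ iff $\hom(F,G')=m$; since $L_F = \emptyset$, $\phi_m$ is in fact a sentence. By construction $G \models \phi_m$, so by $\lC^k_q$-equivalence also $H \models \phi_m$, whence $\hom(F,H) = m = \hom(F,G)$.

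For the converse, assume $G$ and $H$ are homomorphism indistinguishable over $\Ekq$. The one-vertex graph lies in $\Ekq$, so counting homomorphisms from it gives $|V(G)|=|V(H)|=:n$. Let $\phi$ be an arbitrary $\lC^k_q$-sentence and suppose, toward a contradiction, that $G \models \phi$ while $H \not\models \phi$. By \cref{cor:qg_from_ckq} there exists $\qg{F} \in \R\Ekq$ modelling $\phi$ for graphs of size $n$, so that
\begin{equation*}
    \hom(\qg{F},G) = 1 \neq 0 = \hom(\qg{F},H).
\end{equation*}
By \cref{prp:qgdistinguishing}, some $F \in \Ekq$ occurring in $\qg{F}$ already satisfies $\hom(F,G) \neq \hom(F,H)$, contradicting the hypothesis.

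Since all the heavy lifting is already encapsulated in \cref{lem:homcounts_in_ckq} and \cref{cor:qg_from_ckq}, there is no real obstacle left at this point. The only small subtlety is the graph-size matching required by \cref{cor:qg_from_ckq}, which is settled for free by the one-vertex graph being in $\Ekq$; without this observation one would need to extend the modelling lemma to handle graphs of varying sizes, which is easily done by taking $n = \max\{|V(G)|,|V(H)|\}$ since hom-indistinguishability over $\Ekq$ in any case forces $|V(G)|=|V(H)|$.
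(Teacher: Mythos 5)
Your proof is correct and follows essentially the same route as the paper: one direction via \cref{lem:homcounts_in_ckq}, the other via \cref{cor:qg_from_ckq} together with \cref{prp:qgdistinguishing}. Your justification of the size-matching step (using that the one-vertex graph lies in $\Ekq$) is slightly more explicit than the paper's ``wlog $|G|=|H|$'', but the argument is otherwise identical.
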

  \begin{proof}
    Suppose there was a graph $F \in \Ekq \subseteq \Lkq$ with $\hom(F, G) \neq \hom(F, H)$. Then by \cref{lem:homcounts_in_ckq} there exist $\lC^k_q$ sentences $\varphi^F_m$ such that $G \models \varphi^F_m$ iff $\hom(F, G) = m$. Consequently, there exists an $m$ with $G \models \varphi^F_m$ and $H \not\models \varphi^F_m$, so $G$ and $H$ cannot satisfy the same $\lC^k_q$ sentences.

    Suppose now there was a sentence $\phi \in \lC^k_q$ with $G \models \phi$ and $H \not\models \phi$. Wlog we assume $|G| = |H| = m$. Then by \cref{cor:qg_from_ckq} there is an $\qg{F} \in \R\Ekq$ that models $\phi$ for graphs of size $m$, that is, $\hom(\qg{F}, G) \neq \hom(\qg{F}, H)$. By \cref{prp:qgdistinguishing}, this already implies the existence of an $F \in \Ekq$ with $\hom(F, G) \neq \hom(F, H)$.
  \end{proof}

  By dropping the restriction on one of the parameters in \cref{thm:ckq_equivalence}, we recover the original results of Dvo\v{r}\'ak~\cite{dvorak_recognizing_2010} and Grohe~\cite{grohe_counting_2020}:

    \begin{corollary}
  	Let $k, q \geq 1$. Let $G$ and $H$ be graphs.
  	\begin{enumerate}
  		\item $G$ and $H$ are $\lC^k$-equivalent iff they are homomorphism indistinguishable over $\mathcal{TW}_{k-1}$.
  		\item $G$ and $H$ are $\lC_q$-equivalent iff they are homomorphism indistinguishable over $\mathcal{TD}_q$.
  	\end{enumerate}
      \end{corollary}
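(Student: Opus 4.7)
The plan is to derive both statements as corollaries of \cref{thm:ckq_equivalence} by varying the parameter that is unrestricted. Specifically, I will show
\begin{equation*}
\lC^k = \bigcup_{q \geq 1} \lC^k_q, \qquad \mathcal{TW}_{k-1} = \bigcup_{q \geq 1} \Ekq,
\end{equation*}
and analogously for the second statement
\begin{equation*}
\lC_q = \bigcup_{k \geq 1} \lC^k_q, \qquad \mathcal{TD}_q = \bigcup_{k \geq 1} \Ekq.
\end{equation*}
Assuming these identities, both parts follow by a clean unfolding. For part (1): $G \equiv_{\lC^k} H$ iff $G \equiv_{\lC^k_q} H$ for every $q \geq 1$, iff (by \cref{thm:ckq_equivalence}) $G$ and $H$ are homomorphism indistinguishable over $\Ekq$ for every $q \geq 1$, iff they are homomorphism indistinguishable over $\bigcup_{q} \Ekq = \mathcal{TW}_{k-1}$. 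Part (2) is symmetric.

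The logical identities are immediate: every $\lC^k$-formula has a finite quantifier rank and is thus contained in $\lC^k_q$ for some $q$, and conversely $\lC^k_q \subseteq \lC^k$ by definition; similarly every $\lC_q$-formula only uses finitely many distinct variables, say $k$, so it lies in $\lC^k_q$.

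For the graph class identities, the inclusions $\Ekq \subseteq \mathcal{TW}_{k-1}$ and $\Ekq \subseteq \mathcal{TD}_q$ are furnished by \cref{cor:Ekq-subset-tw-td}. For the reverse direction in (1), given $F \in \mathcal{TW}_{k-1}$, \cref{thm:Ekq_equiv} yields a tree decomposition of width $k-1$; since $F$ is finite, this decomposition trivially has depth at most $|V(F)|$, so $F \in \mathcal{T}_{|V(F)|}^{k}$. For the reverse direction in (2), given $F \in \mathcal{TD}_q$, take a forest cover $(F', \vec r)$ of height $\leq q$ and define the pebbling function $p \colon V(F) \to [q]$ by assigning each vertex the pebble equal to its depth in $(F',\vec r)$. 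Any two vertices on the same root-to-leaf path have distinct depths and hence distinct pebbles, so the condition of \cref{def:kpebbleforest} is satisfied; therefore $F \in \mathcal{T}_q^q \subseteq \bigcup_{k} \Ekq$.

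There is no real obstacle here; the content of the corollary is simply that the join-over-one-parameter specialisations of \cref{thm:ckq_equivalence} recover the two classical results \cite{dvorak_recognizing_2010,grohe_counting_2020}. The only mildly non-trivial step is the pebbling-by-depth construction showing $\mathcal{TD}_q \subseteq \mathcal{T}_q^q$, and even that is a one-line verification using \cref{def:kpebbleforest}.
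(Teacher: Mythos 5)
Your proposal is correct and takes essentially the same route the paper intends: the paper states this corollary without proof, simply remarking that it follows "by dropping the restriction on one of the parameters" in \cref{thm:ckq_equivalence}, and your argument spells out exactly what that means — writing $\lC^k=\bigcup_q\lC^k_q$, $\mathcal{TW}_{k-1}=\bigcup_q\Ekq$ (and symmetrically for part (2)) and observing that homomorphism indistinguishability over a union of classes is just homomorphism indistinguishability over every class in the union. The two inclusions $\mathcal{TW}_{k-1}\subseteq\bigcup_q\Ekq$ and $\mathcal{TD}_q\subseteq\bigcup_k\Ekq$ that you verify (any width-$(k-1)$ tree decomposition trivially has depth $\le |V(F)|$; pebbling by depth in a height-$q$ forest cover) are exactly the right bookkeeping, and together with \cref{cor:Ekq-subset-tw-td} they close the argument.
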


\subsection{Guarded Fragments}
\label{subsec:gckq}
Given the constructive nature of these proofs, it is interesting to investigate whether the same strategy may be used to obtain results for different fragments of $\lC$ by restricting construction trees in some way. An example where this works well is guarded counting logic $\lGC$.

In the guarded fragment $\lGC$, quantifiers are restricted to range over the neighbours of a vertex. Formally, we require that quantifiers only occur in the form $\exists^{\geq t}y (Exy \land \psi(z_1, \dots, z_n, y))$, where $x$ and $y$ are distinct variables.

Since $\lGC$-formulae necessarily have a free variable, it is not immediately obvious how to define $\lGC$-equivalence on graphs. One option is to consider graphs together with a distinguished vertex and write $G, v \equiv_\lGC H, w$. This, however, does not yield an equivalence relation on graphs alone. To that end, we choose the following natural definition.

\begin{definition}[\textsf{GC}-equivalence]\label{def:gcequiv}
  Let $G$ and $H$ be unlabelled graphs. We say that $G$ and $H$ are $\textsf{GC}$-equivalent, in symbols $G \equiv_{\mathsf{GC}} H$, if there exists a bijection $f \colon V(G) \to V(H)$ such that $G, v \models \varphi(x) \iff H, f(v) \models \varphi(x)$ for all $v \in V(G)$ and $\varphi \in \mathsf{GC}$.
\end{definition}

To apply our proof strategy to $\lGC$, we need to restrict the construction trees such that guarded quantifiers suffice to express homomorphism counts. Observe that the quantifiers are only needed to describe how the number of homomorphisms $F \to G$ changes by removing a label from $F$. More precisely, we use that removing a label $\ell$ from $F$ is the same as moving it around in $G$ and tallying up the resulting homomorphisms. Now if $\ell$ is adjacent to some other label $\ell'$, then the only positions of $\ell$ in $G$ that contribute to the final homomorphism count are adjacent to $\ell'$. Consequently, it will suffice to quantify over the neighbours of $\ell'$.

\begin{definition}
  Let $k, q \geq 1$. By $\GEkq$ we denote the class of $k$-labelled graphs that admit a $k$-construction tree of elimination-depth $q$ with the additional restriction that labels can only be removed if they have a labelled neighbour.
\end{definition}

We observe that in \cref{fig:elim-example} there are nodes where labels without labelled neighbors are removed. In \cref{fig:guard-elim-example}, we depict a construction tree without such nodes of the same graph. We remark that all graphs in $\GEkq$ are labelled, as a single label can never be removed.
Under these restrictions, the argument from \cref{lem:homcounts_in_ckq} goes through using only guarded quantifiers.

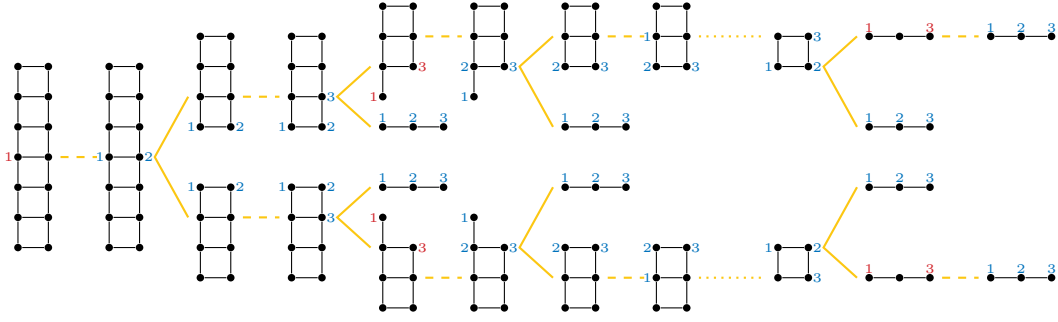
\begin{figure}
  \centering
  \begin{tikzpicture}[scale=0.4,smallVertex/.style={fill=black, inner sep=1, circle}, node font=\tiny]
\foreach \i in {0,1,2,3,4,5,6}{
    	\foreach \j in {0,1}{
    		\node[smallVertex] (a\j\i) at (\j,\i) {};
    	    \ifthenelse{\NOT \i=0}{
				\tikzmath{
					integer \l;
					\l = \i - 1;}
				\draw (a\j\i) -- (a\j\l);
			}{}
    	}
    	\draw (a0\i) -- (a1\i);
	}
	\node[rwth-red] at (-0.3,3) {$1$};
	
\foreach \i in {0,1,2,3,4,5,6}{
		\foreach \j in {0,1}{
			\tikzmath{
				integer \newj;
				\newj = \j+3;}
			\node[smallVertex] (b\j\i) at (\newj,\i) {};
			\ifthenelse{\NOT \i=0}{
				\tikzmath{
					integer \l;
					\l = \i - 1;}
				\draw (b\j\i) -- (b\j\l);
			}{}
		}
		\draw (b0\i) -- (b1\i);
	}
	\node[rwth-blue] at (2.7,3) {$1$};
	\node[rwth-blue] at (4.3,3) {$2$};
	
\foreach \i in {0,1,2,3}{
		\foreach \j in {0,1}{
			\tikzmath{
				integer \newj;
				\newj = \j+6;
				integer \newi;
				\newi = \i+4;}
			\node[smallVertex] (c\j\i) at (\newj,\newi) {};
			\ifthenelse{\NOT \i=0}{
				\tikzmath{
					integer \l;
					\l = \i - 1;}
				\draw (c\j\i) -- (c\j\l);
			}{}
		}
		\draw (c0\i) -- (c1\i);
	}
	\node[rwth-blue] at (5.7,4) {$1$};
	\node[rwth-blue] at (7.3,4) {$2$};
	
\foreach \i in {0,1,2,3}{
		\foreach \j in {0,1}{
			\tikzmath{
				integer \newj;
				\newj = \j+6;
				integer \newi;
				\newi = \i-1;}
			\node[smallVertex] (d\j\i) at (\newj,\newi) {};
			\ifthenelse{\NOT \i=0}{
				\tikzmath{
					integer \l;
					\l = \i - 1;}
				\draw (d\j\i) -- (d\j\l);
			}{}
		}
		\draw (d0\i) -- (d1\i);
	}
	\node[rwth-blue] at (5.7,2) {$1$};
	\node[rwth-blue] at (7.3,2) {$2$};
	
\foreach \i in {0,1,2,3}{
		\foreach \j in {0,1}{
			\tikzmath{
				integer \newj;
				\newj = \j+9;
				integer \newi;
				\newi = \i+4;}
			\node[smallVertex] (c\j\i) at (\newj,\newi) {};
			\ifthenelse{\NOT \i=0}{
				\tikzmath{
					integer \l;
					\l = \i - 1;}
				\draw (c\j\i) -- (c\j\l);
			}{}
		}
		\draw (c0\i) -- (c1\i);
	}
	\node[rwth-blue] at (8.7,4) {$1$};
	\node[rwth-blue] at (10.3,4) {$2$};
	\node[rwth-blue] at (10.3,5) {$3$};
	
\foreach \i in {0,1,2,3}{
		\foreach \j in {0,1}{
			\tikzmath{
				integer \newj;
				\newj = \j+9;
				integer \newi;
				\newi = \i-1;}
			\node[smallVertex] (d\j\i) at (\newj,\newi) {};
			\ifthenelse{\NOT \i=0}{
				\tikzmath{
					integer \l;
					\l = \i - 1;}
				\draw (d\j\i) -- (d\j\l);
			}{}
		}
		\draw (d0\i) -- (d1\i);
	}
	\node[rwth-blue] at (8.7,2) {$1$};
	\node[rwth-blue] at (10.3,2) {$2$};
	\node[rwth-blue] at (10.3,1) {$3$};
	
\foreach \i in {0,1,2}{
		\foreach \j in {0,1}{
			\tikzmath{
				integer \newj;
				\newj = \j+12;
				integer \newi;
				\newi = \i+6;}
			\node[smallVertex] (c\j\i) at (\newj,\newi) {};
			\ifthenelse{\NOT \i=0}{
				\tikzmath{
					integer \l;
					\l = \i - 1;}
				\draw (c\j\i) -- (c\j\l);
			}{}
		}
		\draw (c0\i) -- (c1\i);
	}
	\node[smallVertex] (c) at (12,5) {};
	\draw (c) -- (c00);
	\node[rwth-red] at (11.7,5) {$1$};
	\node[rwth-red] at (13.3,6) {$3$};
	
\foreach \j in {0,1,2}{
		\tikzmath{
		integer \newj;
		\newj = \j + 12;}
		\node[smallVertex] (e\j) at (\newj,4) {};
		\node[smallVertex] (f\j) at (\newj,2) {};
		\ifthenelse{\NOT \j=0}{
			\tikzmath{
				integer \l;
				\l = \j - 1;}
			\draw (e\j) -- (e\l);
			\draw (f\j) -- (f\l);
		}{}
	}
	\node[rwth-blue] at (12,4.3) {$1$};
	\node[rwth-blue] at (13,4.3) {$2$};
	\node[rwth-blue] at (14,4.3) {$3$};
	\node[rwth-blue] at (12,2.3) {$1$};
	\node[rwth-blue] at (13,2.3) {$2$};
	\node[rwth-blue] at (14,2.3) {$3$};
	
\foreach \i in {0,1,2}{
		\foreach \j in {0,1}{
			\tikzmath{
				integer \newj;
				\newj = \j+12;
				integer \newi;
				\newi = \i-2;}
			\node[smallVertex] (d\j\i) at (\newj,\newi) {};
			\ifthenelse{\NOT \i=0}{
				\tikzmath{
					integer \l;
					\l = \i - 1;}
				\draw (d\j\i) -- (d\j\l);
			}{}
		}
		\draw (d0\i) -- (d1\i);
	}
	\node[smallVertex] (d) at (12,1) {};
	\draw (d) -- (d02);
	\node[rwth-red] at (11.7,1) {$1$};
	\node[rwth-red] at (13.3,0) {$3$};
	
\foreach \i in {0,1,2}{
		\foreach \j in {0,1}{
			\tikzmath{
				integer \newj;
				\newj = \j+15;
				integer \newi;
				\newi = \i+6;}
			\node[smallVertex] (c\j\i) at (\newj,\newi) {};
			\ifthenelse{\NOT \i=0}{
				\tikzmath{
					integer \l;
					\l = \i - 1;}
				\draw (c\j\i) -- (c\j\l);
			}{}
		}
		\draw (c0\i) -- (c1\i);
	}
	\node[smallVertex] (c) at (15,5) {};
	\draw (c) -- (c00);
	\node[rwth-blue] at (14.7,5) {$1$};
	\node[rwth-blue] at (14.7,6) {$2$};
	\node[rwth-blue] at (16.3,6) {$3$};
	
\foreach \i in {0,1,2}{
		\foreach \j in {0,1}{
			\tikzmath{
				integer \newj;
				\newj = \j+15;
				integer \newi;
				\newi = \i-2;}
			\node[smallVertex] (d\j\i) at (\newj,\newi) {};
			\ifthenelse{\NOT \i=0}{
				\tikzmath{
					integer \l;
					\l = \i - 1;}
				\draw (d\j\i) -- (d\j\l);
			}{}
		}
		\draw (d0\i) -- (d1\i);
	}
	\node[smallVertex] (d) at (15,1) {};
	\draw (d) -- (d02);
	\node[rwth-blue] at (14.7,1) {$1$};
	\node[rwth-blue] at (14.7,0) {$2$};
	\node[rwth-blue] at (16.3,0) {$3$};
	
\foreach \i in {0,1,2}{
		\foreach \j in {0,1}{
			\tikzmath{
				integer \newj;
				\newj = \j+18;
				integer \newi;
				\newi = \i+6;}
			\node[smallVertex] (c\j\i) at (\newj,\newi) {};
			\ifthenelse{\NOT \i=0}{
				\tikzmath{
					integer \l;
					\l = \i - 1;}
				\draw (c\j\i) -- (c\j\l);
			}{}
		}
		\draw (c0\i) -- (c1\i);
	}
	\node[rwth-blue] at (17.7,6) {$2$};
	\node[rwth-blue] at (19.3,6) {$3$};
	
\foreach \j in {0,1,2}{
		\tikzmath{
			integer \newj;
			\newj = \j + 18;}
		\node[smallVertex] (e\j) at (\newj,4) {};
		\node[smallVertex] (f\j) at (\newj,2) {};
		\ifthenelse{\NOT \j=0}{
			\tikzmath{
				integer \l;
				\l = \j - 1;}
			\draw (e\j) -- (e\l);
			\draw (f\j) -- (f\l);
		}{}
	}
	\node[rwth-blue] at (18,4.3) {$1$};
	\node[rwth-blue] at (19,4.3) {$2$};
	\node[rwth-blue] at (20,4.3) {$3$};
	\node[rwth-blue] at (18,2.3) {$1$};
	\node[rwth-blue] at (19,2.3) {$2$};
	\node[rwth-blue] at (20,2.3) {$3$};
	
\foreach \i in {0,1,2}{
		\foreach \j in {0,1}{
			\tikzmath{
				integer \newj;
				\newj = \j+18;
				integer \newi;
				\newi = \i-2;}
			\node[smallVertex] (d\j\i) at (\newj,\newi) {};
			\ifthenelse{\NOT \i=0}{
				\tikzmath{
					integer \l;
					\l = \i - 1;}
				\draw (d\j\i) -- (d\j\l);
			}{}
		}
		\draw (d0\i) -- (d1\i);
	}
	\node[rwth-blue] at (17.7,0) {$2$};
	\node[rwth-blue] at (19.3,0) {$3$};
	
\foreach \i in {0,1,2}{
		\foreach \j in {0,1}{
			\tikzmath{
				integer \newj;
				\newj = \j+21;
				integer \newi;
				\newi = \i+6;}
			\node[smallVertex] (c\j\i) at (\newj,\newi) {};
			\ifthenelse{\NOT \i=0}{
				\tikzmath{
					integer \l;
					\l = \i - 1;}
				\draw (c\j\i) -- (c\j\l);
			}{}
		}
		\draw (c0\i) -- (c1\i);
	}
	\node[rwth-blue] at (20.7,7) {$1$};
	\node[rwth-blue] at (20.7,6) {$2$};
	\node[rwth-blue] at (22.3,6) {$3$};
	
\foreach \i in {0,1,2}{
		\foreach \j in {0,1}{
			\tikzmath{
				integer \newj;
				\newj = \j+21;
				integer \newi;
				\newi = \i-2;}
			\node[smallVertex] (d\j\i) at (\newj,\newi) {};
			\ifthenelse{\NOT \i=0}{
				\tikzmath{
					integer \l;
					\l = \i - 1;}
				\draw (d\j\i) -- (d\j\l);
			}{}
		}
		\draw (d0\i) -- (d1\i);
	}
	\node[rwth-blue] at (20.7,-1) {$1$};
	\node[rwth-blue] at (20.7,0) {$2$};
	\node[rwth-blue] at (22.3,0) {$3$};
	
\foreach \i in {0,1}{
		\foreach \j in {0,1}{
			\tikzmath{
				integer \newj;
				\newj = \j+25;
				integer \newi;
				\newi = \i+6;}
			\node[smallVertex] (c\j\i) at (\newj,\newi) {};
			\ifthenelse{\NOT \i=0}{
				\tikzmath{
					integer \l;
					\l = \i - 1;}
				\draw (c\j\i) -- (c\j\l);
			}{}
		}
		\draw (c0\i) -- (c1\i);
	}
	\node[rwth-blue] at (24.7,6) {$1$};
	\node[rwth-blue] at (26.3,6) {$2$};
	\node[rwth-blue] at (26.3,7) {$3$};
	
\foreach \i in {0,1}{
		\foreach \j in {0,1}{
			\tikzmath{
				integer \newj;
				\newj = \j+25;
				integer \newi;
				\newi = \i-1;}
			\node[smallVertex] (d\j\i) at (\newj,\newi) {};
			\ifthenelse{\NOT \i=0}{
				\tikzmath{
					integer \l;
					\l = \i - 1;}
				\draw (d\j\i) -- (d\j\l);
			}{}
		}
		\draw (d0\i) -- (d1\i);
	}
	\node[rwth-blue] at (24.7,0) {$1$};
	\node[rwth-blue] at (26.3,0) {$2$};
	\node[rwth-blue] at (26.3,-1) {$3$};
	
\foreach \j in {0,1,2}{
		\tikzmath{
			integer \newj;
			\newj = \j + 28;}
		\node[smallVertex] (c\j) at (\newj,7) {};
		\node[smallVertex] (d\j) at (\newj,-1) {};
		\node[smallVertex] (e\j) at (\newj,4) {};
		\node[smallVertex] (f\j) at (\newj,2) {};
		\ifthenelse{\NOT \j=0}{
			\tikzmath{
				integer \l;
				\l = \j - 1;}
			\draw (c\j) -- (c\l);
			\draw (d\j) -- (d\l);
			\draw (e\j) -- (e\l);
			\draw (f\j) -- (f\l);
		}{}
	}
	\node[rwth-blue] at (28,4.3) {$1$};
	\node[rwth-blue] at (29,4.3) {$2$};
	\node[rwth-blue] at (30,4.3) {$3$};
	\node[rwth-blue] at (28,2.3) {$1$};
	\node[rwth-blue] at (29,2.3) {$2$};
	\node[rwth-blue] at (30,2.3) {$3$};
	\node[rwth-red] at (28,7.3) {$1$};
	\node[rwth-red] at (30,7.3) {$3$};
	\node[rwth-red] at (28,-0.7) {$1$};
	\node[rwth-red] at (30,-0.7) {$3$};
	
\foreach \j in {0,1,2}{
		\tikzmath{
			integer \newj;
			\newj = \j + 32;}
		\node[smallVertex] (e\j) at (\newj,7) {};
		\node[smallVertex] (f\j) at (\newj,-1) {};
		\ifthenelse{\NOT \j=0}{
			\tikzmath{
				integer \l;
				\l = \j - 1;}
			\draw (e\j) -- (e\l);
			\draw (f\j) -- (f\l);
		}{}
	}
	\node[rwth-blue] at (32,7.3) {$1$};
	\node[rwth-blue] at (33,7.3) {$2$};
	\node[rwth-blue] at (34,7.3) {$3$};
	\node[rwth-blue] at (32,-0.7) {$1$};
	\node[rwth-blue] at (33,-0.7) {$2$};
	\node[rwth-blue] at (34,-0.7) {$3$};
	
\draw[dashed, thick, rwth-orange] (1.4,3) -- (2.6,3) (7.4,5) -- (8.6,5) (7.4,1) -- (8.6,1) (13.4,7) -- (14.6,7) (13.4,-1) -- (14.6,-1) (19.4,7) -- (20.6,7) (19.4,-1) -- (20.6,-1) (30.4,7) -- (31.6,7) (30.4,-1) -- (31.6,-1);
	\draw[thick, rwth-orange] (5.6,1) -- (4.5,3) -- (5.6,5) (11.6,4) -- (10.5,5) -- (11.6,6) (11.6,0) -- (10.5,1) -- (11.6,2) (17.6,4) -- (16.5,6) -- (17.6,7) (17.6,-1) -- (16.5,0) -- (17.6,2) (27.6,4) -- (26.5,6) -- (27.6,7) (27.6,-1) -- (26.5,0) -- (27.6,2);
	\draw[thick, dotted, rwth-orange] (22.4,7) -- (24.6,7) (22.4,-1) -- (24.6,-1);
  \end{tikzpicture}
  \caption{A guarded $3$-construction tree of elimination depth $7$ for the grid $\grid{2}{7}$ with one labelled vertex. Edges entering elimination nodes are dashed. At every labelled graph, those labels that may be removed are marked blue, those that may not be removed are marked red. The omitted part of the construction tree follows the same pattern.}
  \label{fig:guard-elim-example}
\end{figure}

\begin{restatable}{lemma}{gcHomcap}\label{lem:gchomcap}
  Let $F \in \GEkq$. Then for each $m \geq 0$ there is a formula $\varphi_m$ such that for appropriately labelled graphs $G$ it holds that $\hom(F, G) = m$ iff $G \models \varphi_m$.
\end{restatable}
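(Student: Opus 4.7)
The plan is to mirror the inductive argument of \cref{lem:homcounts_in_ckq}, proving the statement by structural induction on a guarded $k$-construction tree $(T,\lambda,r)$ of elimination depth $q$ witnessing $F \in \GEkq$. Concretely, I will show that for every node $t$ of $T$ and every $m \geq 0$, there is a $\lGC^k_q$-formula $\varphi^{\lambda(t)}_m$ whose free variables correspond to the labels of $\lambda(t)$, such that for every appropriately labelled graph $G$ we have $G \models \varphi^{\lambda(t)}_m$ if and only if $\hom(\lambda(t), G) = m$. Setting $\varphi_m \coloneqq \varphi^{\lambda(r)}_m = \varphi^{F}_m$ at the root then proves the lemma.

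For leaf nodes, $\lambda(t)$ is fully labelled and any homomorphism is uniquely determined by the label positions, so $\hom(\lambda(t), G) \in \{0,1\}$ and the required formula is a quantifier-free Boolean combination of atomic formulas $E x_i x_j$ and $x_i = x_j$ reflecting the edges and identifications of $\lambda(t)$. For a product node $t$ with children $t_1, \dots, t_s$, \Cref{prp:homproducts} gives $\hom(\lambda(t), G) = \prod_i \hom(\lambda(t_i), G)$, so $\varphi^{\lambda(t)}_m$ is a finite disjunction over tuples $(m_1, \dots, m_s)$ with $\prod_i m_i = m$ and $m_i \leq m$ of the conjunctions $\bigwedge_i \varphi^{\lambda(t_i)}_{m_i}$, which exist by induction. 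Since the labels of the children subsume those of $\lambda(t)$ and no new quantifiers are introduced, guardedness, variable-count and quantifier-rank are preserved.

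The critical case is an elimination node $t$ with child $t'$, where $\lambda(t)$ arises from $\lambda(t')$ by removing a label $\ell$. By the definition of $\GEkq$, the vertex formerly carrying label $\ell$ has a labelled neighbour $\ell'$ in $\lambda(t')$. \Cref{prp:homlabeldel} rewrites $\hom(\lambda(t), G) = m$ as a finite disjunction, taken over decompositions $m = \sum_{i=1}^s c_i m_i$, of the statement that there are exactly $c_i$ vertices $v \in V(G)$ with $\hom(\lambda(t'), G(\ell \to v)) = m_i$, and that no further vertex $v$ yields $\hom(\lambda(t'), G(\ell \to v)) \neq 0$. Crucially, whenever $\hom(\lambda(t'), G(\ell \to v)) \neq 0$ the edge $\ell\ell'$ of $\lambda(t')$ forces $v$ to be adjacent to $\nu_G(\ell')$; thus the counting quantifiers over such $v$ may without loss be restricted to neighbours of $x_{\ell'}$, that is, written in the guarded form $\exists^{\geq c} y (E y x_{\ell'} \land \varphi^{\lambda(t')}_{m_i}[x_\ell / y])$. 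Assembling the $\exists^{=c_i}$-clauses by combining $\exists^{\geq c_i}$ and $\neg \exists^{\geq c_i + 1}$ yields $\varphi^{\lambda(t)}_m \in \lGC$.

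The main obstacle is the bookkeeping in the elimination case: one must check that every counting quantifier introduced there is guarded by the atom $E y x_{\ell'}$ where $\ell'$ is still a label of $\lambda(t)$, that the bound variable $y$ reuses the variable slot previously assigned to $\ell$ so the $k$-variable budget is respected, and that an additional quantifier is charged only at elimination nodes so the quantifier rank is bounded by the elimination depth $q$. Once these three accounting points are verified, the induction carries through and $\varphi_m$ lies in the desired fragment $\lGC^k_q$.
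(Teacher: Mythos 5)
Your proof is correct and follows essentially the same approach as the paper: it inherits the induction from \cref{lem:homcounts_in_ckq}, modifies only the elimination step by inserting the guard $E x_\ell x_{\ell'}$ where $\ell'$ is the labelled neighbour promised by the definition of $\GEkq$, and justifies this by observing that any vertex $v$ with $\hom(\lambda(t'), G(\ell \to v)) \neq 0$ must already be adjacent to $\nu_G(\ell')$, so restricting the counting quantifier to neighbours loses nothing. This is precisely the argument in the paper.
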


  The proof of the converse---showing that there exists for each $\psi \in \lGC^k_q$ an $\qg{F} \in \R\GEkq$ modelling $\psi$---also goes through nearly unchanged.

\begin{restatable}{lemma}{qgFromGckq}\label{lem:qg_from_gckq}
  Let $\varphi \in \lGC^k_q$. Then there is an $\qg{F} \in \R\GEkq$ modelling $\varphi$ for graphs of size $n$.
\end{restatable}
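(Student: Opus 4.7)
The plan is to prove the lemma by structural induction on $\varphi$, mirroring the inductive construction in the proof of \cref{lem:qg_from_ckq} while maintaining the additional invariant that every elimination step (label removal) in the construction trees underlying the components of $\qg{F}$ removes a label that has a labelled neighbour, as required by the definition of $\GEkq$. The base cases (atomic formulae $x_i = x_j$ and $Ex_ix_j$) and the Boolean cases ($\neg \psi$ and $\psi_1 \lor \psi_2$) are handled exactly as in \cref{lem:qg_from_ckq}: atomic formulae are modelled by a single fully labelled graph of elimination depth $0$, trivially in $\GEkq$, and the Boolean cases apply \cref{lem:qginterpolation} together with the product construction. Here one verifies that $\GEkq$ is closed under taking products, since combining two construction trees at a new product root preserves both elimination depth and the guardedness of every label removal (vertex identification via label sharing can only add labelled neighbours, never remove them).

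The crucial new step is the guarded quantifier case $\varphi = \exists^{\geq t} y\, (E x_i y \land \psi(\tup{z}, y))$, where $\qr(\psi) \leq q - 1$. By the inductive hypothesis applied to $\psi$ we obtain $\qg{F}_\psi \in \R\GEParam{k}{q-1}$ modelling $\psi$ for graphs of size $n$, with labels corresponding to the free variables $\tup{z}, y$ of $\psi$ (and the guard variable $x_i$). We then form the product $\qg{F}' \coloneqq \qg{F}_\psi \cdot e_{x_i y}$, where $e_{x_i y}$ denotes the single-edge graph with endpoints labelled $x_i$ and $y$; by \cref{prp:homproducts}, $\qg{F}'$ models the conjunction $E x_i y \land \psi$ for graphs of size $n$. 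Crucially, in every component of $\qg{F}'$ the vertex labelled $y$ is now adjacent to the vertex labelled $x_i$, so appending an elimination node removing the label $y$ at the root of each underlying construction tree yields a valid tree in $\GEkq$ of elimination depth at most $q$. By \cref{prp:homlabeldel}, this label-removal transforms the $0/1$-valued $\qg{F}'$ into a linear combination whose value on any labelled graph is the \emph{number} of vertices $w$ witnessing $E x_i w \land \psi$. A final application of \cref{lem:qginterpolation} with $S^+ = \{t, t+1, \dots, n\}$ and $S^- = \{0, 1, \dots, t-1\}$ thresholds this count, producing $\qg{F}_\varphi \in \R\GEkq$ modelling $\varphi$ for graphs of size $n$.

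The main obstacle I anticipate is bookkeeping the guardedness invariant through the interpolation step. The interpolation lemma expresses $\qgp{F}{S^+}{S^-}$ as a linear combination of iterated products of $\qg{F}'$, and one must verify that every elimination node appearing in the resulting construction trees still removes a label that is adjacent to a currently labelled vertex. This reduces to closure of $\GEkq$ under products (addressed above) plus the observation that the outermost elimination---the one removing $y$---occurs after the guard edge has been added via $e_{x_i y}$, so $y$ has labelled neighbour $x_i$ in \emph{every} term of the linear combination at the moment of its removal. With this invariant in place, the elimination depth grows by exactly one per quantifier, matching the bound $\qr(\varphi) \leq q$, and the inductive step closes.
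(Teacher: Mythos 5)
Your proposal is correct and follows essentially the same route as the paper's proof: take the product of the linear combination for $\psi$ with the single guard edge so that the quantified label acquires a labelled neighbour, remove that label, and threshold via \cref{lem:qginterpolation}, relying on closure of $\GEkq$ under products for the Boolean and interpolation steps. Your extra care in tracking the guardedness invariant through the interpolation construction is a point the paper handles more tersely, but the arguments coincide.
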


The analogues of these two lemmas already sufficed to prove \cref{thm:ckq_equivalence}. Here, however, we still need to be mindful of any remaining labels. Concretely, \cref{lem:gchomcap} and \cref{lem:qg_from_gckq} imply the following for $\lGC$ sentences.

\begin{corollary}\label{cor:gc-thm-with-labels}
  Let $G, v$ and $H, w$ be graphs together with a single labelled vertex. Then the following are equivalent.
  \begin{enumerate}
    \item For all $\psi(x) \in \lGC^k_q$, it holds $G, v \models \psi(x) \iff H, w \models \psi(x)$.
    \item $\hom(F, G) = \hom(F, H)$ for all $F \in \GEkq$.
  \end{enumerate}
\end{corollary}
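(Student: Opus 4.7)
The plan is to assemble the two ingredients \cref{lem:gchomcap} and \cref{lem:qg_from_gckq} in precisely the same manner as the proof of \cref{thm:ckq_equivalence} assembles \cref{lem:homcounts_in_ckq} and \cref{cor:qg_from_ckq}, with \cref{prp:qgdistinguishing} to pass from quantum graphs back to graphs. I identify the distinguished vertex $v$ (resp.\ $w$) with the single label $1$ on the graph side, which will correspond to the free variable $x$ on the logic side; thereafter both items become statements about $1$-labelled graphs and $\lGC^k_q$-formulas whose only free variable is $x$.

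For the direction $(2) \Rightarrow (1)$, I would proceed by contraposition. Suppose $\psi(x) \in \lGC^k_q$ satisfies $G, v \models \psi(x)$ while $H, w \not\models \psi(x)$. First arrange $|V(G)| = |V(H)| = n$ by padding the smaller graph with isolated, unlabelled vertices. This affects neither guarded satisfaction (every quantifier is ultimately rooted at the free variable $x = v$, so isolated vertices cannot be reached) nor the homomorphism counts we care about from graphs in $\GEkq$ carrying label $1$. Then \cref{lem:qg_from_gckq} supplies $\qg{F} \in \R\GEkq$ modelling $\psi$ for graphs of size $n$, so $\hom(\qg{F}, (G, v)) \neq \hom(\qg{F}, (H, w))$. \cref{prp:qgdistinguishing} extracts a single $F \in \GEkq$ in its support with distinct hom-counts on $(G, v)$ and $(H, w)$, contradicting (2).

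For the direction $(1) \Rightarrow (2)$, I would again argue by contraposition. Pick $F \in \GEkq$ with $\hom(F, (G, v)) \neq \hom(F, (H, w))$. Since $(G, v)$ and $(H, w)$ only carry label $1$, any $F$ whose label set contains a label other than $1$ yields $0$ on both sides; combined with the remark after \cref{def:Ekq} that graphs in $\GEkq$ retain at least one label, we may assume that the label set of $F$ is exactly $\{1\}$. \cref{lem:gchomcap} then produces formulas $\varphi_m(x) \in \lGC^k_q$ whose single free variable corresponds to label $1$ such that $G \models \varphi_m$ iff $\hom(F, G) = m$. Setting $m \coloneqq \hom(F, (G, v))$, we have $G, v \models \varphi_m(x)$ while $H, w \not\models \varphi_m(x)$, contradicting (1).

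The main obstacle is the label bookkeeping: both \cref{lem:gchomcap} and \cref{lem:qg_from_gckq} establish a tight correspondence between labels of $F$ and free variables of the associated formula, and one must verify that the single-label ($1 \leftrightarrow x$) setting is preserved throughout both translations, so that the distinguishing object obtained in each direction lives in the correct class and refers to the designated vertex. The size-padding step in the $(2) \Rightarrow (1)$ direction is the one additional subtlety, but it is harmless precisely because guarded quantification cannot escape the connected component of the free variable; the rest is a direct adaptation of the unlabelled argument for \cref{thm:ckq_equivalence}.
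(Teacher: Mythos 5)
Your proof is correct and follows essentially the same route the paper intends: the corollary is presented there as an immediate consequence of \cref{lem:gchomcap} and \cref{lem:qg_from_gckq}, assembled exactly as \cref{lem:homcounts_in_ckq} and \cref{cor:qg_from_ckq} are assembled in the proof of \cref{thm:ckq_equivalence}. Your padding step for equalising $|V(G)|$ and $|V(H)|$ is a genuine and correctly justified addition, since unlike in the unguarded setting neither hypothesis forces the two graphs to have the same order, and your reduction to distinguishing graphs $F$ with label set exactly $\{1\}$ (hence connected) is precisely what makes the padding harmless on the homomorphism side.
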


While this is already a nice result, ideally we would like to make a statement about general, unlabelled, graphs. Fortunately, simply removing all labels from $F \in \GEkq$ turns out to induce the equivalence relation described in \cref{def:gcequiv}.

Let us denote by $\GEkqLL$ the class of graphs in $\GEkq$ with all labels removed. Then we can state the following theorem, characterising $\lGC^k_q$-equivalence in terms of homomorphism indistinguishability.
The details can be found in \Cref{app:gc-homind}.

\guardedEkqLogic*

\subsection{Separating $\Ekq$ from $\mathcal{TW}_{k-1} \cap \mathcal{TD}_q$ Semantically}
By \cref{thm:Ekq_tw-td}, the graph class $\Ekq$ is a proper subclass of $\mathcal{TW}_{k-1} \cap \mathcal{TD}_q$. Despite that, it could well be that the homomorphism indistinguishability relations of the two graph classes (and via \cref{thm:ckq_equivalence} also $\mathsf{C}^k_q$-equivalence) coincide, i.e.\@ $G \equiv_{\Ekq} H$ if and only if $G \equiv_{\mathcal{TW}_{k-1} \cap \mathcal{TD}_q} H$ for all graphs $G$ and $H$. It turns out that this is not the case.

In general, establishing that the homomorphism indistinguishability relations $\equiv_{\mathcal{F}_1}$ and $\equiv_{\mathcal{F}_2}$ of two graph classes $\mathcal{F}_1 \neq \mathcal{F}_2$ are distinct is a notoriously hard task. 
Pivotal tools for accomplishing this were introduced by Roberson in \cite{roberson_oddomorphisms_2022}. He defines the \emph{homomorphism distinguishing closure} $\cl(\mathcal{F})$ of a graph class $\mathcal{F}$ as the graph class
\[
\cl(\mathcal{F}) \coloneqq \{ F \text{ graph} \mid \forall G, H.\ G \equiv_{\mathcal{F}} H \implies \hom(F, G) = \hom(F, H)\}.
\]
A graph class $\mathcal{F}$ is \emph{homomorphism distinguishing closed} if $\mathcal{F} = \cl(\mathcal{F})$. In this case, for every $F \not\in \mathcal{F}$ there exist two graphs $G$ and $H$ homomorphism indistinguishable over $\mathcal{F}$ and satisfying that $\hom(F, G) \neq \hom(F, H)$. 
Therefore, homomorphism distinguishing closed graph classes may be thought of as maximal in terms of homomorphism indistinguishability.

Roberson conjectures that \emph{every graph class which is closed under taking minors and disjoint unions is homomorphism distinguishing closed}. A confirmation of this conjecture would aid separating homomorphism indistinguishability relations and in turn all equivalence relations between graphs which have such characterisations, cf.\@ \cite{roberson_lasserre_2023}. 
In particular, it would readily imply that $\equiv_{\Ekq}$ and $\equiv_{\mathcal{TW}_{k-1} \cap \mathcal{TD}_q}$ are distinct, cf.\@ \cref{cor:minor-closed}.
Unfortunately, the conjecture's assertion is only known to be true for the class of planar graphs \cite{roberson_oddomorphisms_2022}, $\mathcal{TW}_k$ \cite{neuen_homomorphism-distinguishing_2023} and graph classes arising from finite graph classes \cite{seppelt_logical_2023}.
Towards separating $\equiv_{\Ekq}$ and $\equiv_{\mathcal{TW}_{k-1} \cap \mathcal{TD}_q}$, we first add to this list by proving the following:

\tdclosed*

The proof of \cref{thm:td-closed} follows the proof in \cite{neuen_homomorphism-distinguishing_2023} of the assertion that the class $\mathcal{TW}_{k}$ is homomorphism distinguishing closed for all $k \geq 0$.
Central to it is a construction of highly similar graphs from \cite{roberson_oddomorphisms_2022} which is reminiscent of the CFI-construction \cite{cai_optimal_1992}. 
With these ingredients, it suffices to prove that Duplicator wins the model comparison game characterising $\mathsf{C}_q$-equivalence on these CFI-like graphs constructed over a graph of high treedepth. To that end, we build a Duplicator strategy from a Robber strategy for the game from \cref{def:Ekq-cops}.
The connection between model comparison and node searching games via CFI-constructions is well-known \cite{hella_logical_1996,dawar_power_2007}.

Crucial for the aforementioned argument is that Robber wins the \emph{non-monotone} node searching game characterising bounded treedepth. Indeed, it cannot be assumed that Cops plays monotonously since he must shadow Spoiler's moves. Since we are unable to establish a non-monotone node searching game characterising $\Ekq$, we cannot conclude along the same lines that $\Ekq$ is homomorphism distinguishing closed. Nevertheless, we separate $\equiv_{\Ekq}$ and $\equiv_{\mathcal{TW}_{k-1} \cap \mathcal{TD}_q}$.
The details are deferred to \cref{app:td-closed}.
\ekqtwtdsemantics*

Proving that $\Ekq$ is characterised by Robber winning the (non-monotone) game $\CR_q^k$, i.e.\@ the non-monotone version of \cref{lem:Ekq-cops}, would immediately imply that $\Ekq$ is homomorphism distinguishing closed. 
\section{Outlook}
We studied the expressive power of the counting logic fragment $\mathsf{C}^k_q$ with tools from homomorphism indistinguishability. After giving an elementary and uniform proof of theorems from \cite{dvorak_recognizing_2010,grohe_counting_2020,dawar_lovasz-type_2021}, we showed that the graph class $\Ekq$, whose homomorphism indistinguishability relation characterises $\mathsf{C}^k_q$-equivalence, is a proper subclass of $\mathcal{TW}_{k-1} \cap \mathcal{TD}_q$. Finally, we showed that homomorphism indistinguishability over $\Ekq$ is not the same as homomorphism indistinguishability over $\mathcal{TW}_{k-1} \cap \mathcal{TD}_q$.

The main problem remaining open is to tighten \cref{thm:Ekq_tw-td-semantics} by proving that the graph class $\Ekq$ is homomorphism distinguishing closed, as predicted by Roberson's conjecture. Our contribution in this direction is a reduction to a purely graph theoretic problem: Proving that the class $\Ekq$ is characterised by a \emph{non-monotone} cops-and-robber game, cf.\@ \cref{lem:Ekq-cops}, would be sufficient to yield this claim. Exploring whether intertwining node searching and model comparison games can help to verify Roberson's conjecture in other cases seems a tempting direction for future research.

With slight reformulations, our results might yield insights into the ability of the Weisfeiler--Leman algorithm to determine subgraph counts after a fixed number of rounds \cite{rattan_weisfeiler-leman_2023,neuen_homomorphism-distinguishing_2023}. 
\newpage

\newpage
\appendix

\section{Material Omitted in \Cref{sec:graph-dec}}
\label{app:graph-dec}
\newcommand{\numberOfRoundsK}{\ensuremath{\frac{(k-1) (\ell  - k + 3)}{4}}}
\newcommand{\sizeLargeComponent}{\ensuremath{\frac{\ell h - h - 2}{2}}}
\newcommand{\sizeNonGood}{\ensuremath{\frac{(h-1)(h+2)}{2}}}
\newcommand{\boundL}{\ensuremath{\lceil \frac{q}{k-1}\rceil (k+4)}}

\subsection{Equivalence of Definitions (Proof of \cref{thm:Ekq_equiv})}
\label{app:equiv-classes}

\ekqEquiv*

\begin{proof}
	We first show $(1)\Leftrightarrow (2)$.
	
	Let $G$ be \elimOrd{k}{q} witnessed by a $k$-construction tree $(T,\lambda,r)$ of \elimDepth\ $\leq q$.
	For every node $t\in V(T)$ and the corresponding $k$-labelled graph $\lambda(t)$, we write $\beta(t)$ for the set of labelled vertices of $\lambda(t)$.
	We show that $(T,\beta,r)$ is a tree decomposition of $G$ of width $\leq k-1$ and depth $\leq q$.
	We observe that the number of vertices that are in any bag of a path from the root to some leaf equal the number of times a label was removed on the same path and therefore is bounded by the \elimDepth.
	Thus one only has to show that this is indeed a tree decomposition.
	For every edge $uv\in E(G)$, there has to be some leaf $\ell\in V(T)$, such that the edge is already present in $\lambda(\ell)$. Hence, $u,v \in \beta(\ell)$.
	Finally, we have that $\beta^{-1}(v)$ is the subset of nodes $t\in V(T)$ such that $v\in V(\lambda(t))$ and $v$ is labelled in $\lambda(t)$.
	This sub-tree is obviously connected, as labels can only be introduced at leaf nodes and only labelled vertices are identified at join nodes.
	
	Conversely, let $(T,r,\beta)$ be a rooted tree decomposition of $G$ of width $\leq k-1$ and depth $\leq q$.
	A rooted tree decomposition is \emph{nice} if every node that is not a leaf is either an introduce node, a forget node or a join node.
	We call a node $t$ 
	\begin{itemize}
		\item \emph{introduce node} if it has exactly one child $s$ such that there exists a vertex $v\in V(G)$ with $\beta(s)=\beta(s)\cup \{v\} $,
		\item \emph{forget node} if it has exactly one child $s$ such that there exists a vertex $v\in V(G)$ with $\beta(s)=\beta(t)\cup \{v\}$, and 
		\item \emph{join node} if it has exactly two children $s_1,s_2$ with $\beta(t)=\beta(s_1)=\beta(s_2)$.
	\end{itemize}
	Additionally the bag of the root node and of all leaf nodes are empty.
	By \cite{bodlaender_partial_1998}, if there is a tree decomposition then there also is a nice tree decomposition.
	We observe that the technique to make a tree decomposition nice preserves the depth of the decomposition.
	Thus w.l.o.g. $(T,r,\beta)$ is nice.
	We observe that the join node of a nice tree decomposition is quite similar to the product node of a construction tree, as is the forget node to the eliminate node.
	But we need to get rid of the introduce nodes.
	Therefore we construct a new rooted tree decomposition $(T',r,\beta')$, where at every introduce node $t$ we append a new leaf $\ell_t$, with bag $\beta'(\ell_t)=\beta(t)$.
	Furthermore we set $\beta'(t)=\beta(t)$, for every $t\in V(T)$.
	For every $t\in V(T')$, we now set $\lambda(t):=G[\gamma(t)]$, that is the subgraph induced by all vertices that are in a bag of the sub-tree below $t$.
	We set $\beta(t)$ to be the labelled vertices of $\lambda(t)$.
	It remains to define a colouring function $c\colon V(G)\rightarrow [k]$, such that for all $t\in V(T')$, it holds that $c|_{\beta(t)}$ is injective.
	We define this colouring via traversing the tree $T'$ from the root to the leafs.
	Whenever $t$ is a forget node with child $s$, such that $\beta(s)\setminus\beta(t)=\{v\}$, we set $c(v)$ to be the smallest value that is not used for any vertex in $\beta(t)$.
	Then $T'$ together with the graphs $H_t$, that are labelled via the function $c|_{\beta(t)}^{-1}$ is the desired elimination order.
	
	Next we show $(2)\Leftrightarrow (3)$.
	We use the same construction as in the proof of \cite[Theorem~19]{abramsky_relating_2021}, where the authors gave a proof that a tree decomposition of width $\leq k-1$ exists if and only if a $k$-pebble forest cover exists.
	We recall their construction and prove that it preserves depth.
	
	Let $(T,\beta)$ be a tree decomposition of $G$, of width $\leq k-1$ and depth $\leq q$, and $r\in V(T)$ such that $\dep(T,\beta) = \dep(T,r,\beta)$.
	Again w.l.o.g. $(T,r,\beta)$ is nice.
	We define the function $\tau\colon V(G) \rightarrow V(T)$ to map every vertex of $G$ to the unique node of $t$ that has smallest distance to the root $r$.
	As $(T,r,\beta)$ is nice, this function is injective.
	The forest cover $(F,\vec{r}')$ is induced by the partial order on the image of $\tau$ with respect to $(T,r)$.
	The pebbling function $p\colon V(G) \rightarrow [k]$ is defined inductively.
	Assume $p$ is defined for all $v'\prec v$.
	Then $p(v)=\min [k]\setminus \{ p(v')\mid v'\in \beta(\tau(v))\setminus\{v\}\}$.
	We observe that the depth of $(F,\vec{r}')$ is the length of the longest chain in the image of $\tau$ with respect to $(T,r)$.
	Additionally we observe that for all $v\in V(G)$ and $t\in V(T)$ from $v\in\beta(t)$ it follows that $\tau(v)\preceq t$.
	Thus the depth of $(F,\vec{r}')$ equals the maximum number of vertices from the root $r$ to any leaf of $T$ and thus the depth of $(T,r,\beta)$.
	
	Now let $(F, \vec{r},p)$ be a $k$-pebble forest cover of depth $\leq q$ of $G$.
	We construct a rooted tree $(T,r')$ by introducing a new root $r'$ that connects to all nodes in $\vec{r}$.
	We define $\beta(r')=\emptyset$ and $\beta(t)\coloneqq \{u\preceq t\mid \text{ for all } w\in V(G), u\prec w\preceq t \Rightarrow p(u)\neq p(w) \}$, for every $t\in V(F)$.
	We have that $\dep(T,r,\beta)= \max_{v \in V(T)} \left| \bigcup_{t\in P_v} \beta(t) \right| = \max_{t\in V(F)}|\{s\in V(F)\mid s\preceq t \}|=\dep(F,\vec{r},p)$.
\end{proof}

\subsection{Monotone Cop Strategies and Decompositions (Proof of \cref{lem:Ekq-cops})}
\label{app:cr-equiv}

\crGame*

\begin{proof}
	Let $G$ be a graph. Let $1\leq k\leq q$.
	
	Assume $G\in\Ekq$, thus there is a $k$-pebble forest cover $(F,\vec{r},p)$ of depth at most $q$.
	For any set $Y\subseteq V(G)$ and any vertex $x\in V(G)$, we write $Y\preceq x$ and $x\preceq Y$ to denote that $y\preceq x$ and respectively $x\preceq y$ with respect to $(F,\vec{r})$, for all $y\in Y$, and $\max_{\preceq}Y$ to denote the maximal element with respect to $(F,\vec{r})$.
	We extend the pebbling function $p$ to a pebbling function $p'$ on $V(G')$ by an arbitrary bijection between $[k]$ and the vertices of $K$.
	From this we construct a winning strategy for the cops in the game $\monCR_q^k(G)$ such that for every position $(X,y)$ of the game the following two conditions on $X'\coloneqq X\cap V(G)$ hold:
	\begin{enumerate}
		\item $X'\preceq y$ and
		\item $X'=\{u\preceq \max_{\preceq} X'\mid \text{ for all } w\in V(G) \text{ with }u\prec w\preceq \max_{\preceq} X', p(u)\neq p(w) \}$.
	\end{enumerate}
	The strategy is then defined as
	\begin{equation*}
	\mathcal{S}(X,y)\coloneqq \left(X\setminus \{x_b\}\right)\cup \{b\},
	\end{equation*}
	where $b$ is the minimal vertex that fulfils $X'\prec b\preceq y$ and $x_b\in X$ the unique element with $p'(x_b)=p(b)$.
	Then $(2)$ holds by observing that $b$ either is some root vertex if there are no cops on $G$ and otherwise $b$ is a child of $\max_{\preceq} X'$.
	To see that $(1)$ holds let us assume that there is some $u$ reachable from $y$ in $G'\setminus \left(X\setminus\{x_b\}\right)$ such that $b\not\preceq u$.
	Since $(F,\vec{r})$ is a forest cover of $G$ we know that every path from $y$ to $u$ meets some vertex that comes before $b$ with respect to $(F,\vec{r})$ and thus the $X\setminus\{x_b\}$ avoiding path $P$ from $y$ to $u$ contains some edge $vw\in E(G)$ such that $v\prec \max_{\preceq} X' \prec b \preceq w$.
	We observe that all vertices of $X'$ are comparable to $v$, as by $(2)$ all vertices in $X'$ are pairwise comparable and thus on the unique path from some root in $\vec{r}$ to $\max_{\preceq} X'$.
	By $(2)$ we get that there is some vertex $x\in X$ with $p(x)=p(v)$ and also $v\prec x$.
	Thus $v\prec x\prec b\preceq w$ yields a contradiction to $(F,\vec{r},p)$ being a $k$-pebble forest cover.
	
	All in all the robber is forced down the tree in $(F,\vec{r})$, that corresponds to the connected component that the robber chose. As $(F,\vec{r})$ has depth $q$, the robber is caught after at most $q$ rounds, thus $\mathcal{S}$ is a winning strategy for the cops in the game $\monCR_q^k(G)$
	
	For the other direction let $\mathcal{S}$ be a winning-strategy for Cops in the game $\monCR_q^k(G)$.
	W.l.o.g. it holds that, for all $X\in\binom{V(G')}{k}$, all $y\in V(G)\setminus X$ and all $y'\in \gamma_y^X$, we have $\mathcal{S}(X,y)=\mathcal{S}(X,y')$.
	We write $\mathcal{S}(X,\gamma_y^X):=\mathcal{S}(X,y)$.
	Additionally w.l.o.g. we can assume that the cops always place the next cop within the robbers escape space, thus $\mathcal{S}(X,V(C))\subseteq X\cup V(C)$, for all $X\in\binom{V(G')}{k}$ and $C$ connected component of $G\setminus X$.
	We construct a tree decomposition of width $\leq k-1$ and depth $\leq q$ from this strategy, thus by \cref{thm:Ekq_equiv} it holds that $G\in\Ekq$.
	
	We inductively define the ``strategy tree'' describing the strategy $\mathcal{S}$, that is a rooted tree $(T,r)$, a function $\lambda\colon V(T)\setminus\{r\}\to \binom{V(G')}{k}\times 2^{V(G)}$ and a function $\beta\colon V(T)\rightarrow \binom{V(G)}{\leq k}$ by
	\begin{itemize}
		\item $\beta(r)\coloneqq \emptyset$,
		\item let $C_1,\ldots,C_m$ be the connected components of $G$, then add children $t_1,\ldots,t_m$ to $r$ with $\lambda(t_i)=(V(K),V(C_i))$ and $\beta(t_i)=\mathcal{S}(V(K),V(C_i))\setminus V(K)$ and
		\item for a node $s\in V(T)\setminus\{r\}$ with $\lambda(s)=(X,Y)$, let $X':=\mathcal{S}(X,Y)$ and let $C_1,\ldots,C_m$ be the connected components of $Y\setminus X'$, then add children $t_1,\ldots,t_m$ to $s$ with $\lambda(t_i)=(X',V(C_i))$ and $\beta(t_i)=\mathcal{S}(X',V(C_i))\setminus V(K)$.
	\end{itemize}
	Then $(T,r,\beta)$ is the desired tree decomposition. 

	\begin{claim}
		$(T,r,\beta)$ is a rooted tree decomposition of $G$.
	\end{claim}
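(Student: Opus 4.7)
The plan is to verify the three conditions of \cref{def:treedecomp}: every vertex lies in some bag, every edge lies in some bag, and for each $v\in V(G)$ the set $\beta^{-1}(v)$ is connected in $T$. Throughout I exploit both the winning property of $\mathcal S$ and the monotonicity condition $\gamma_y^{X_i}\supseteq\gamma_y^{X_i\cap X_{i+1}}$, which in particular implies that along any play the robber's escape space is non-increasing. For a non-root node $s$ with $\lambda(s)=(X_s,Y_s)$, write $X'_s:=\mathcal S(X_s,Y_s)$ so that $\beta(s)=X'_s\setminus V(K)$; the children of $s$ index the components of $Y_s\setminus X'_s$, and the assumption $\mathcal S(X,V(C))\subseteq X\cup V(C)$ ensures that cop positions evolve coherently as one descends the tree.

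Vertex coverage is immediate: for $v\in V(G)$, the robber's stationary strategy at $v$ must fail against the winning $\mathcal S$, so at some round a cop is placed on $v$, giving $v\in\beta(t)$ for the corresponding node. For edge coverage, given $uv\in E(G)$, let $t_u, t_v$ be the first nodes at which $u, v$ enter a bag. I first argue that $t_u$ and $t_v$ are comparable in~$T$: if $s$ were a common ancestor with $u,v\notin\beta(s)$, then $u,v$ both lie in $Y_s\setminus X'_s$, and since $uv\in E(G)$ they belong to the same connected component of $Y_s\setminus X'_s$, hence to the same child subtree of $s$; iterating yields comparability. Assuming without loss of generality $t_u\preceq t_v$, at every node on the path from $t_u$ down to $t_v$ the vertex $v$ remains in the escape space; if the cop on $u$ were lifted by a monotone move there, then $u$, being adjacent to $v$, would re-enter the escape space, contradicting monotonicity. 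Hence $u$ persists in every bag along this path, so $u,v\in\beta(t_v)$.

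Connectedness of $\beta^{-1}(v)$ uses the same partition argument: since the children of any node index a partition of $Y_s\setminus X'_s$, $v$ cannot first appear in bags of two distinct child subtrees of a common ancestor where $v$ is absent. Hence $v$ has a unique first-appearance node $t_v$, and $\beta^{-1}(v)$ is contained in the subtree rooted at $t_v$. Within this subtree, if $v$ is removed from the cop position at some descendant $t'$, monotonicity certifies that $v$ is not in the escape space at $t'$ either (otherwise the removal would have strictly enlarged the escape space), so $v$ cannot re-enter the escape space of any further descendant. Since cop positions at subsequent descendants are confined to previous cops together with the current escape space, $v$ cannot reappear in any bag. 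Therefore $\beta^{-1}(v)$ is a connected rooted subtree of $T$.

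The main obstacle is the disciplined application of monotonicity: both the edge-coverage step and the non-reappearance part of connectedness reduce to the same principle, namely that \emph{a cop vertex adjacent to a vertex in the current escape space cannot be removed by a monotone move}, together with the corollary that once a vertex has been cleared from the escape space, it stays cleared along all descendants. The cleanest presentation factors these as short auxiliary invariants about monotone strategies, invoked both in the edge-coverage step and in the connectedness step.
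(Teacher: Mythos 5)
Your proof is correct and follows essentially the same approach as the paper: both arguments derive vertex coverage from the strategy being winning, and derive edge coverage and connectedness of $\beta^{-1}(v)$ from the monotonicity condition together with the convention that Cops only places cops inside the robber's current escape space. The paper phrases the connectivity and edge steps as contradictions on a minimal pair of violating nodes, whereas you phrase them via unique first appearance and persistence of vertices in the escape space, but the underlying invariants are identical.
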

	
	\begin{claimproof}
		We first prove that the pre-image of every vertex $v\in V(G)$ is non-empty and connected.
		$\beta^{-1}(v)$ is non-empty as the robber is eventually caught and thus he is also caught if he never leafs vertex $v$.
		Now suppose for contradiction that there are distinct non-adjacent nodes $t_1,t_2\in\beta^{-1}(v)$ such that, for all $s$ on the path from $t_1$ to $t_2$, $s\not\in\beta^{-1}(v)$.
		
		Let us first consider the case that neither $t_1\prec t_2$ nor $t_2\prec t_1$ with respect to $(T,r)$.
		Let $s\in V(T)$ be the maximal node such that $s\prec \{t_1,t_2\}$ and $(X,Y)\coloneqq\lambda(s)$.
		Then $v\not\in\beta(s)=\mathcal{S}(\lambda(s))\setminus V(K)$.
		Thus there is one child $s'$ of $s$ that corresponds to the component $\gamma_v^{\mathcal{S}(\lambda(s))}$.
		As we assumed that Cops always places the cops into the robber component we get that both $t_1$ and $t_2$ are descendants of $s'$, which contradicts the maximality of $s$.
		
		Thus assume $t_1 \prec t_2$ ($t_2\prec t_1$ is symmetric).
		Let $s$ be the child of $t_1$ with $s\prec t_2$ and let $(X,Y)\coloneqq\lambda(s)$.
		We get that $v\in\beta(s)\cup Y$, again by the assumption that Cops always places the cops into the robber component.
		This contradicts the monotonicity of $\mathcal{S}$, as $v\notin \beta(s)$ by the choice of $t_1$ and $t_2$.
		
		It remains to show that for every edge $uv\in E(G)$ there is some node $t\in V(T)$ such that $u,v\in\beta(t)$.
		Assume for contradiction that $\beta^{-1}(u)$ and $\beta^{-1}(v)$ are disjoint, for some edge $uv\in E(G)$.
		Let $t_1\in \beta^{-1}(u)$ and $t_2\in \beta^{-1}(v)$ such that $s\notin\beta^{-1}(u)\cup\beta^{-1}(v)$, for all $s$ on the path from $t_1$ to $t_2$.
		We observe that $t_1$ and $t_2$ are comparable, as $\gamma_{v}^X=\gamma_u^X$, for all $X\in\binom{V(G')\setminus\{u,v\}}{k}$.
		W.l.o.g. assume $t_1\prec t_2$.
		Let $s$ be the child of $t_1$ such that $s\preceq t_2$ and let $(X,Y)\coloneqq\lambda(s)$.
		As $v\in Y$ and $u\notin\mathcal{S}(\lambda(s))$, the robber can escape to $v$, which contradicts monotonicity.
	\end{claimproof}
	
	The width of $(T,r,\beta)$ is at most $k-1$.
	As a last step we need to verify the size of $\bigcup_{t\preceq s}\beta(t)$, for all vertices $s\in V(T)$.
	As the cop player can move at most one cop per round, we now that for all $t\in V(T)$ and all children $s$ of $t$, there is at most one vertex in $\beta(s)$ that is not in $\beta(t)$.
	As the bag of the root is empty and every edge in the tree corresponds to a Cops move in the game, we know that from the root to any node there can not be more than $q$ introduced vertices.
	Thus $|\bigcup_{t\preceq s}\beta(t)|\leq q$, for all $s\in V(T)$.
\end{proof}

\subsection{The Grid is Narrow and Shallow but not at the same Time (Proofs of \cref{lem:lower-bound-rounds} and \cref{thm:Ekq_tw-td})}
\label{app:rounds}

In order to prove \cref{lem:lower-bound-rounds}, we first make some structural observations on the grid~\grid{h}{\ell} with $h>1$ rows and $\ell$ columns and its separators of size $\leq h+1$.
We write $V(\grid{h}{\ell})=\{(1,1),\ldots,(1,\ell),(2,1),\ldots,(h,\ell)\}$ and $(i,j)(i',j')\in E(\grid{h}{\ell})$ if $i'=i\pm 1$ or $j'=j\pm 1$ but not both.
For some $X\subseteq V(\grid{h}{\ell})$, we call a connected component of $\grid{h}{\ell}\setminus X$ \emph{good} if it contains at least one full column of \grid{h}{\ell}.

\begin{observation} \label{obs:good-component}
	For $1 < h < \ell - 1$ and $X\in \binom{V(\grid{h}{\ell})}{\leq h+1}$, there exists a good component in $\grid{h}{\ell} \setminus X$.
\end{observation}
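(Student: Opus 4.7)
The plan is to use a simple counting argument on the columns of the grid. The grid $\grid{h}{\ell}$ partitions into $\ell$ disjoint columns, where column $j$ consists of the $h$ vertices $(1,j),(2,j),\ldots,(h,j)$ and forms a path in $\grid{h}{\ell}$. Since $|X|\leq h+1$, the set $X$ can intersect at most $h+1$ of these $\ell$ columns. From the hypothesis $h<\ell-1$ we have $\ell\geq h+2$, so at least one column $j^\star$ satisfies $X\cap(\{j^\star\}\times[h])=\emptyset$.

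All $h$ vertices of column $j^\star$ therefore survive in $\grid{h}{\ell}\setminus X$. Because these vertices induce a path in the original grid and none of them lies in $X$, this path is also present in $\grid{h}{\ell}\setminus X$; in particular, they all lie in one and the same connected component $C^\star$ of $\grid{h}{\ell}\setminus X$. By definition, $C^\star$ contains a full column of $\grid{h}{\ell}$, so $C^\star$ is a good component, completing the argument.

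There is no real obstacle here: the statement follows immediately from the pigeonhole principle applied to the column partition, together with the fact that each column is internally connected in the grid. The only thing to be slightly careful about is that the hypothesis $h<\ell-1$ is exactly what guarantees that the $h+1$ columns possibly hit by $X$ do not cover all $\ell$ columns.
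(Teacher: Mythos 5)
Your proof is correct and matches the paper's argument essentially verbatim: both use the pigeonhole principle on the $\ell \geq h+2$ columns versus the at most $h+1$ vertices of $X$ to find a cop-free column, which is connected and hence lies in a single (good) component.
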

\begin{proof}
	Since there are at most $h+1$ vertices in $X$ and the grid has at least $h+2$ columns, there exists a column which does not contain any vertices from $X$. This column is contained by a good component of $\grid{h}{\ell} \setminus X$.
\end{proof}
The next lemma shows that $\grid{h}{\ell}\setminus X$ can never contain more than two good components if $X\in\binom{V(\grid{h}{\ell}}{\leq h+1}$.
And if it does contain two good components, there can only be a single vertex that is in any other component.

\begin{lemma}
	\label{lem:third-comp-single}
	Let $h,\ell>1$ and let $X\in \binom{V(\grid{h}{\ell})}{\leq h+1}$. If there are two distinct good components in $\grid{h}{\ell}\setminus X$ then there is at most one additional component and this component has size 1.
\end{lemma}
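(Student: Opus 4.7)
The plan is to reduce everything to the structure of a minimum separator between the columns witnessing the two good components. Let $a<b$ be columns fully contained in $C_1$ and $C_2$ respectively. The $h$ horizontal row-paths from column $a$ to column $b$ are vertex-disjoint in the strip $S\coloneqq\{(i,j):a<j<b\}$, and each such path must meet $X$ since columns $a$ and $b$ themselves contain no vertex of $X$. Hence $|X\cap S|\ge h$, which together with $|X|\le h+1$ leaves at most one vertex of $X$ outside $S$.

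I would next show that a single outside vertex $v^*\in X$ cannot create an extra component. Such a $v^*$ lies in columns $<a$ or $>b$; by symmetry suppose the former. The block of vertices in columns $1,\dots,a$ together with the left fringes of the strip contains the $h\times a$ subgrid, which is $2$-connected whenever $h,a\ge 2$, so removing $v^*$ keeps it connected and still joined to column $a\subseteq C_1$. The case $a=1$ cannot arise here because there is no column to the left of column $1$; symmetrically for $b=\ell$ on the right.

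It remains to locate extra components inside $S$. Each row must contain at least one $X$-vertex in $S$ to cut its row-path. If $|X\cap S|=h$ then each row contains exactly one $X$-vertex $(i,c_i)$; the part of row $i$ with $j<c_i$ is connected via that row to $(i,a)\in C_1$ and the part with $j>c_i$ to $(i,b)\in C_2$, so $S$ splits cleanly with no extra component. If $|X\cap S|=h+1$ then exactly one row $i_0$ carries two $X$-vertices $(i_0,c_{i_0})$ and $(i_0,c'_{i_0})$ with $c_{i_0}<c'_{i_0}$, while every other row still splits cleanly; the only possible extra component is therefore the middle segment $M\coloneqq\{(i_0,j):c_{i_0}<j<c'_{i_0}\}$, which is a path inside row $i_0$.

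Finally, to bound $|M|$: for $M$ to be disconnected from $C_1$ and $C_2$ every vertical neighbour $(i_0\pm 1,j)$ of a vertex in $M$ must lie in $X$. But the adjacent rows $i_0-1$ and $i_0+1$ each carry only a single $X$-vertex, so the vertical shield above or below $M$ can cover at most one column. Hence $|M|\le 1$; otherwise some vertical neighbour of $M$ lies in $C_1\cup C_2$, and the connectedness of $M$ forces all of $M$ into that same good component. Combining the cases gives at most one additional component, of size at most $1$. The main subtlety I expect is the boundary bookkeeping around $a=1$ or $b=\ell$, but in those degenerate positions no outside vertex can exist, which makes the argument go through cleanly.
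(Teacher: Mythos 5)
Your proof is correct. The key observations match the paper's: once one picks columns $a<b$ witnessing that $C_1$ and $C_2$ are good, the $h$ disjoint row segments of the strip force $|X\cap S|\ge h$, hence at most one row has two $X$-vertices, and the adjacent-row count bounds the extra component by $1$. The organisation, however, differs. The paper defines $C_3$ to be the \emph{set} of all vertices outside $X\cup C_1\cup C_2$, notes that a row meeting $C_3$ also meets $C_1$ (via $(i,a)$) and $C_2$ (via $(i,b)$) and therefore needs at least two $X$-vertices, concludes that $C_3$ lives in a single row, and finally bounds $|C_3|$ by counting $X$-vertices in one neighbouring row. This handles inside-strip and outside-strip vertices in a single stroke. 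Your proof instead splits into an outside case (handled by $2$-connectivity of the $h\times a$ sub-grid) and an inside case (locating the middle segment $M$ of the unique doubly-hit row and bounding $|M|$). The $2$-connectivity detour is sound but avoidable: the paper's ``three components in one row forces two $X$-vertices there'' argument takes care of a stray $v^*$ automatically, because a would-be extra vertex outside the strip still sits in a row that already contains both $(i,a)\in C_1$ and $(i,b)\in C_2$. So your version buys nothing over the paper's except a more explicit geometric picture, at the cost of an extra lemma and more boundary bookkeeping.
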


\begin{proof}
	Let $C_1,C_2$ be the two good components.
	Since $X$ separates $C_1$ and $C_2$, $X$ must contain at least one vertex from every row. Thus, at most one row contains two vertices from $X$.
	Let $C_3$ be the set of vertices that are not contained in $X\cup C_1\cup C_2$.
	Assume $C_3$ contains a vertex from row $i\in[h]$.
	Then row $i$ contains vertices from $C_1$, $C_2$, and $C_3$. 
	Since these are all distinct components of $\grid{h}{\ell}\setminus X$, row $i$ also contains at least two, thus exactly two, vertices of $X$.
	
	Hence, $C_3$ intersects at most one row, say row~$r$.
	As $C_3$ cannot contain a vertex of any other row, all neighbours of $C_3$ that are not in row $r$ need to be in $X$.
	As $h>1$, at least one of $r-1$ or $r+1$ is in $[h]$.
	Thus there is a row in $[h]\setminus \{r\}$ where at least $|C_3|$ vertices are contained in $X$.
	Since all rows besides $r$ have exactly one vertex in $X$, we conclude that $|C_3|\leq 1$.
\end{proof}

We use \cref{lem:third-comp-single} to show that Robber can find some large component and that the cops can never remove more than two vertices from this component as long as his component is good.
Next we show that a large component is always good.

\begin{lemma}
	\label{lem:size-non-good}
	Let $1<h<\ell-2$ and let $X\in \binom{V(\grid{h}{\ell})}{\leq h+1}$. A connected component $C$ of $\grid{h}{\ell}\setminus X$
	that contains more than $\sizeNonGood$ vertices is good.
\end{lemma}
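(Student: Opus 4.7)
I will prove the contrapositive: if $C$ is a connected component of $\grid{h}{\ell}\setminus X$ that is not good, then $|C| \leq \frac{(h-1)(h+2)}{2}$. Let $[a,b]$ denote the range of columns meeting $C$, and for $j \in [a,b]$ set $S_j = \{i : (i,j) \in C\}$, $s_j = |S_j|$. Since $C$ is connected in the grid, $s_j \geq 1$ for all $j \in [a,b]$ and $S_j \cap S_{j+1} \neq \emptyset$ for $j < b$ (otherwise $C$ would split between columns $j$ and $j+1$); non-goodness gives $s_j \leq h-1$. Because $C$ is a connected component of $\grid{h}{\ell}\setminus X$, the neighbourhood $N(C)$ of $C$ in $\grid{h}{\ell}$ is contained in $X$, so $|N(C)| \leq h+1$.

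Next I lower-bound $|N(C)|$ column by column. If $a > 1$, column $a-1$ contributes exactly the $s_a$ projections of $S_a$, and if $b < \ell$, column $b+1$ contributes $s_b$ analogously. For each $j \in [a,b]$ I use the refined bound
\[
|N(C) \cap \text{column } j| \;\geq\; \max\{1,\, |S_{j-1}\setminus S_j|,\, |S_{j+1}\setminus S_j|\},
\]
where the $1$ accounts for an unavoidable in-column boundary vertex (since $\emptyset \subsetneq S_j \subsetneq [h]$) and the other two terms capture the horizontal boundary arising when an adjacent column carries rows not present in $S_j$ (setting $S_{a-1} = S_{b+1} := \emptyset$). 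Summing these contributions already disposes of the case $a = 1,\ b = \ell$, since it would force $|N(C)| \geq b - a + 1 = \ell \geq h + 3 > h + 1$, contradicting our assumption. By the left-right symmetry of the grid, I may therefore assume $b < \ell$.

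The main obstacle is the combinatorial optimisation that remains: maximise $|C| = \sum_{j=a}^{b} s_j$ subject to the boundary budget $s_a \cdot [a>1] + s_b\cdot [b<\ell] + \sum_{j=a}^{b}\max\{1,|S_{j-1}\setminus S_j|,|S_{j+1}\setminus S_j|\} \leq h+1$. I would first reduce to the case that each $S_j$ is a single interval of $[h]$ (replacing a multi-run $S_j$ by a single interval of the same size can only shrink the boundary, while preserving $s_j$), and then to the case that these intervals share a common endpoint in the column---say they all touch row $1$, so that $S_j = [1, s_j]$---so that vertical and horizontal boundaries align in each column. Under these reductions, the boundary budget forces $|s_{j+1} - s_j| \leq 1$ for every $j$ (a larger jump would cost at least that many additional horizontal boundary vertices), and the extremum is attained by the staircase $(s_a, s_{a+1}, \ldots, s_b) = (h-1, h-1, h-2, \ldots, 1)$ of length $h$ anchored at the grid boundary $a = 1$, which saturates the budget at exactly $s_b + h = h+1$ and gives
\[
|C| = (h-1) + \sum_{i=1}^{h-1} i = (h-1) + \tfrac{h(h-1)}{2} = \tfrac{(h-1)(h+2)}{2},
\]
as required.
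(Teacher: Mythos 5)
Your approach is genuinely different from the paper's. The paper invokes \cref{obs:good-component} to find a good component $C'$ with a free column strictly to one side of $C$, routes vertex-disjoint $C$-$C'$ paths towards that column, and applies Menger's theorem to deduce that $C$ has at most $i+1$ columns with at least $h-i$ vertices; the bound then follows by summing over $i$. You instead bound $|N(C)| \le |X| \le h+1$ and charge this budget column by column, without referring to any external component. Your per-column lower bounds are correct, and the resulting budget inequality is in fact strong enough to recover the paper's key count directly: fix $i\leq h-1$, let $m_i$ be the number of columns with $s_j\geq i$ and $r$ the rightmost such column; then $s_b + \sum_{j=r+1}^{b}|S_{j-1}\setminus S_j| \geq s_b + \sum_{j=r+1}^{b}(s_{j-1}-s_j) = s_r \geq i$ by telescoping, and adding the contribution $\geq 1$ of each of the $m_i$ columns in $[a,r]$ with $s_j\geq i$ gives $m_i + i \leq h+1$, i.e.\ $m_i\leq h+1-i$, which is exactly the paper's claim and finishes the proof by the same layer-cake sum.

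However, the optimisation you actually wrote out has a gap. Reducing to aligned intervals $S_j=[1,s_j]$ is fine once one notes this is a pure relaxation (connectivity $S_j\cap S_{j+1}\neq\emptyset$ can be dropped when only upper-bounding $\sum s_j$, and $(s_{j-1}-s_j)^+\leq|S_{j-1}\setminus S_j|$ makes the budget only smaller), but the claim that ``the boundary budget forces $|s_{j+1}-s_j|\leq 1$'' is false: for $h=4$ and $a=1$, the profile $(s_1,s_2,s_3)=(3,1,1)$ has budget $1+2+1+s_3=5=h+1$, hence is feasible, with a jump of $2$. What is needed is an exchange argument showing the \emph{maximiser} of $\sum s_j$ may be taken with bounded jumps, and then a proof, not an assertion, that the staircase $(h-1,h-1,h-2,\ldots,1)$ is optimal within that family. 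The final value you report is correct and your budget inequality would support a complete proof, but the argument connecting them is not yet there.
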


\begin{proof}
	Let $C$ be a connected component of $\grid{h}{\ell}\setminus X$ which is not good. We prove that $C$ contains at most $\sizeNonGood$ many vertices.
	
	By \cref{obs:good-component}, there exists a good component $C'$ in $G \setminus X$ containing some column $j \in [\ell]$. Without loss of generality, all columns intersecting $C$ lie left of $j$, i.e.\@ have index strictly smaller than $j$.
	
	\begin{claim} \label{cl:size-non-good}
		For $1\leq i<h$,
		$C$ contains at most $i+1$ columns with at least $h-i$ vertices.
	\end{claim}
	\begin{claimproof}
		Assume $C$ would contain at least $i+2$ columns with at least $h-i$ vertices.
		$X$ contains at least one vertex of each of these columns, as $C$ is not good.
		Let $j'$ be the largest index of a column intersecting $C$ and let $I \subseteq [h]$ be the set of rows that intersect $C$ in column $j'$, that is $(i',j')\in C$, for all $i'\in I$.
		We define $|I|$ many vertex disjoint $C$-$C'$-paths by $(i',j'),(i',j'+1),\ldots,(i',j)$ for $i' \in I$.
		By Menger's Theorem, $X$ needs to contain at least one vertex of each of these paths and all those vertices are in some column that is strictly larger than $j'$.
		All in all, $X$ contains at least $i+2+|I|\geq i+2+h-i=h+2$ vertices, which contradicts the choice of $X$.
	\end{claimproof}
	
	For $1 \leq i \leq h$, write $m_i$ for the number of columns in $C$ which contain at least $i$ vertices. Then $m_i - m_{i+1}$ is the number of columns in $C$ which contain exactly $i$ vertices. 
	By \cref{cl:size-non-good}, $m_i \leq h +1 - i$ for all $1 \leq i \leq h-1$.
	Hence,
	\[
		|C| = \sum_{i = 1}^{h-1} (m_i - m_{i+1}) i = \sum_{i=1}^{h-1} m_i \leq (h-1)(h+1) - \sum_{i=1}^{h-1} i = \sizeNonGood,
	\]
	as desired.
\end{proof}

We observe that this bound is tight as one can realise a component $C$ that is not good and contains exactly $i+1$ columns with at least $h-i$ vertices, for all $1\leq i<h$, by $X=\{(1,1),(1,2),(2,3),\ldots,(h,h+1)\}$.

\lowerBoundRounds*

\begin{proof}The strategy of Robber is to always move to the largest component.
	\begin{claim}
		After $q'< \frac{\ell h - h^2 + 2h}{4}$ rounds the size of the robber component is at least $\sizeLargeComponent-2(q'-h) > \sizeNonGood$.
	\end{claim}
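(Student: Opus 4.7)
My plan is to prove the claim by induction on $q'$, with Robber following the strategy of always moving to the largest component of $G \setminus X_{q'+1}$ reachable from his previous position. The base case $q' = 0$ is immediate: all cops sit on the clique $K$, so no cop is on the grid, the robber's initial component is all of $\grid{h}{\ell}$ with $\ell h$ vertices, and one checks $\ell h \geq \sizeLargeComponent + 2h$ under the assumptions $h \geq 2$ and $\ell > h + 2$.

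For the inductive step, assume the claim holds after round $q'$. The hypothesis $q'+1 < \frac{\ell h - h^2 + 2h}{4}$ rearranges to $\sizeLargeComponent - 2(q'+1-h) > \sizeNonGood$, and in particular $|C_{q'}| > \sizeNonGood$, so by \cref{lem:size-non-good} the component $C_{q'}$ is good. In round $q'+1$, Cops swaps one cop: $X_{q'+1} = (X_{q'} \setminus \{u\}) \cup \{w\}$. Let $C^+$ denote the component of $v_{q'}$ in $G \setminus (X_{q'} \setminus \{u\})$; since removing a cop can only enlarge components, $C^+ \supseteq C_{q'}$. If $w \notin C^+$, the cop addition does not affect $C^+$ and Robber can stay in place, yielding $|C_{q'+1}| \geq |C_{q'}|$. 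Otherwise $w \in C^+$ and $C^+ \setminus \{w\}$ partitions into components of $G \setminus X_{q'+1}$ contained entirely in $C^+$; \cref{lem:third-comp-single} applied to $X_{q'+1}$ of size at most $h+1$ guarantees that at most two of these are good and that, in the two-good case, any further component is a singleton. The typical cases---a single component of size $|C^+|-1$, or one large good component together with a singleton---give $|C_{q'+1}| \geq |C^+|-2 \geq |C_{q'}| - 2$, matching the incremented bound.

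The main obstacle will be the residual case where $w$ is a cut vertex of $C^+$ splitting it into two good pieces of comparable size, so that Robber loses roughly half of $|C^+|$ in a single round. To handle this, I will argue via the tight budget $|X_{q'+1}| \leq h+1$ and the grid structure that manufacturing such a configuration requires Cops to pre-arrange $h-1$ cops along the column of $w$, which costs several preparatory rounds during which $|C_{q'}|$ does not shrink. Amortising these setup rounds against the round of the large drop, the total loss over any window averages to at most two vertices per round, which preserves the bound $\sizeLargeComponent - 2(q'-h)$. Combining this with the easy cases in the inductive step completes the induction, and the strict inequality $\sizeLargeComponent - 2(q'-h) > \sizeNonGood$ is then immediate from the assumed bound on $q'$.
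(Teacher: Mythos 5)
Your overall plan—induct on $q'$ while Robber always escapes to the largest reachable component, and split the inductive step on whether $w\in C^+$—is a reasonable reorganisation of the paper's argument, and the easy cases you describe (the cop moving outside $C^+$, or $C^+\setminus\{w\}$ consisting of one good component plus at most a singleton) match the paper. However, there is a genuine gap in how you propose to handle the case you call ``residual,'' and it is precisely where the paper's argument diverges from yours.

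You treat the split of $C^+$ into two good pieces of comparable size as a costly event that must be paid for by amortisation: you claim that manufacturing it requires ``pre-arranging $h-1$ cops along the column of $w$'' with setup rounds during which $|C_{q'}|$ does not shrink. Neither part of this is established, and both are dubious—a two-good-component separator need not align with a column (a diagonal works), and nothing forces the preparatory rounds to leave $C_{q'}$ unchanged. The paper avoids amortisation entirely by applying the pigeonhole principle \emph{globally} in $\grid{h}{\ell}\setminus X_{q'+1}$: if there are two good components $A,B$ then, by Lemma~\ref{lem:third-comp-single}, $A$, $B$, and at most one singleton are the \emph{only} components of $\grid{h}{\ell}\setminus X_{q'+1}$, so $|A|+|B|\geq \ell h - (h+1) - 1 = \ell h - h - 2$ and thus $\max(|A|,|B|) \geq \sizeLargeComponent$. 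Since a cop position with two good components has a cop in every row and hence needs $q'+1\geq h$ rounds, this is already at least $\sizeLargeComponent - 2(q'+1-h)$, and no averaging over a window of rounds is required. In short: the loss in the ``bad'' round is bounded not relative to $|C^+|$ but by an absolute floor, and that floor is exactly the target bound. This direct argument is the crux of the paper's proof and is missing from your proposal—without it (or a correctly executed amortisation replacing it) the induction does not close.

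A secondary point: your typical-case analysis only considers $C^+\setminus\{w\}$ breaking into one good piece plus possibly a singleton, invoking Lemma~\ref{lem:third-comp-single}; but that lemma gives information only when \emph{two} good components exist. When the robber's piece is the \emph{only} good component, the complementary pieces may be non-good and non-singleton (for example, cutting off a size-$3$ corner pocket of a $3\times\ell$ grid with three cops), so the per-round loss can exceed $2$. The paper's prose is also terse here, but the overall inequality absorbs such events via the slack built into the bound; a careful version of your inductive step would need to account for them explicitly.
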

	\begin{claimproof}
		Since $h < \ell-2$, for every cop position $X\in \binom{V(\grid{h}{\ell})}{\leq h+1}$, there are at least two columns where no cop is positioned.
		In order to catch the robber, the cops have to move to a position such that these two columns lie in distinct good components of  $\grid{h}{\ell}\setminus X$.
		Since such an $X$ contains a vertex of every row, this takes the cops at least $h$ rounds.
		
		By the pigeonhole principle and \cref{lem:third-comp-single}, there is some component $C$ of size at least $\sizeLargeComponent$ and it can be reached by the robber.
		Before Robber reaches this component, the largest component of the graph contained even more vertices.
		If the cops move back to a position $X$ with only one good component, the escape space of the robber is again larger than $\sizeLargeComponent$.
		If the cops move outside of the robber components, its size does not change.
		The only way to shrink the size of the escape space is to move the next cop into the robber component.
		Such a move could, in addition to the vertex where the cop is placed onto, remove some connected component from the robber escape space.
		But we know from \cref{lem:third-comp-single} that this component can only be a singleton as long as the larger remaining component is good.
		Thus from this point onward the size of the robber component shrinks by at most two per round as long as the robber is still in a good component.
		Since 
		\begin{equation*}
		\sizeLargeComponent - 2(q' - h)> \sizeLargeComponent -2\left(\frac{\ell h - h^2 + 2h}{4} - h\right) = \sizeNonGood,
		\end{equation*}
		Robber can choose to remain in a good component by \cref{lem:size-non-good}.
	\end{claimproof}
	As $\sizeNonGood>1$, for $h>1$, Robber is not caught after $q\leq \frac{\ell h - h^2 + 2h}{4}$ rounds and he wins the game.
\end{proof}

The bound of \cref{lem:lower-bound-rounds} turns out to be tight up to an additive term that only depends on $h$, for all $h>3$.

\begin{lemma}
	\label{lem:upper-bound-rounds}
	For $3<h<\ell-3$ and $q\geq \frac{\ell h}{4} + h +1$, Cops wins the game $\CR_q^{h+1}(\grid{h}{\ell})$.
\end{lemma}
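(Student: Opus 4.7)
The plan is to construct a monotone winning strategy for Cops in the game $\CR_q^{h+1}(\grid{h}{\ell})$ using at most $\lceil \ell h/4 \rceil + h + 1$ rounds. Since a monotone strategy is a fortiori a non-monotone one, this will suffice to prove the lemma. The high-level intuition is that the bound of two vertices of shrinkage per round on Robber's escape space, which was identified in the proof of \cref{lem:lower-bound-rounds} via \cref{lem:third-comp-single}, is essentially achievable from Cops' side as well, so the bounds in \cref{lem:lower-bound-rounds} and \cref{lem:upper-bound-rounds} will match up to an additive term depending only on $h$.

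The strategy proceeds in two phases. In the \emph{setup phase} of $h+1$ rounds, Cops places one cop on each row of the column $c_0 \coloneqq \lceil \ell/2 \rceil$ (in $h$ rounds), then positions the extra cop at $(1, c_0+1)$. After setup, the Robber is necessarily contained in one of the two halves of the grid; without loss of generality the right half, which contains at most $h(\ell - c_0) \leq \ell h/2$ vertices. In the \emph{sweep phase}, Cops iteratively advances the separator rightwards through the right half. The core claim is that each single-cop move can be chosen so that the Robber's escape space shrinks by two vertices: one is the vertex newly occupied by a cop, and the other is a vertex that becomes isolated as a size-one third component of $G \setminus X$. By \cref{lem:third-comp-single}, such a third component has size at most one while the Robber's component remains good (i.e., contains a full column), so the two-per-round rate is the best available and the analysis is tight. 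Iterating the sweep over the right half therefore uses at most $\lceil \ell h / 4 \rceil$ rounds.

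The main obstacle is verifying monotonicity at each step: when one cop is removed from the current column, one must ensure that the condition $\gamma_v^{X_i} \supseteq \gamma_v^{X_i \cap X_{i+1}}$ still holds, i.e., no path for the Robber across the separator is opened. The plan is to maintain an explicit invariant on the position of the extra cop and on which column cops are safe to move: only cops at the boundary rows ($i \in \{1, h\}$) can be removed without losing the separating property, provided the extra cop sits one column ahead on the matching boundary row so that the only potential escape vertex becomes the singleton stranded by the move. The hypothesis $h > 3$ enters here, since for narrower grids there is not enough room for this shielding argument to operate. Once Robber's escape space drops below the ``good'' threshold $(h-1)(h+2)/2$, the remaining $O(h^2)$ vertices can be exhausted by a brute-force catch with $h+1$ cops in $O(h)$ further rounds, which fits into the additive $h+1$ slack of the bound.
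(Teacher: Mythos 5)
Your overall plan (separate the grid in the middle, then sweep one half shrinking Robber's escape space by two vertices per round) matches the paper's, but the key step --- actually forcing a shrinkage of \emph{two} vertices per single-cop move --- is not established, and with the column-shaped separator you propose it in fact fails. With $h$ cops on column $c_0$ and the extra cop at $(1,c_0+1)$, no vertex of Robber's component ever becomes an isolated singleton: a vertex $(i,c)$ with $c>c_0$ is a singleton only if all of its neighbours are cops, which requires a cop in column $c+1\geq c_0+2$ together with cops at $(i,c-1)$ and $(i\pm 1,c)$, and with $h+1$ cops essentially confined to two consecutive columns this never arises. (In your own example, when the cop leaves $(1,c_0)$ that vertex simply joins the \emph{left} component, which Robber does not occupy.) Hence each move of your sweep removes exactly one vertex from the escape space, the sweep over the half-grid costs roughly $\ell h/2$ rounds, and the budget $\frac{\ell h}{4}+h+1$ is exceeded. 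Note also that \cref{lem:third-comp-single} only gives an \emph{upper} bound of one on the size of a third component --- which is what Robber needs in \cref{lem:lower-bound-rounds} --- it does not assert that Cops can \emph{force} such a singleton to exist, so citing it does not supply the missing step.

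The paper closes this gap by never using a straight column: Cops builds a \emph{staircase diagonal} separator $\{(i,j+i-1) : i\in[h]\}$ in the middle of the grid and advances it two columns at a time, one row per round, passing through ``broken diagonal'' positions in which row $i$ carries cops at both $(i,j+i-1)$ and $(i,j+i-3)$. There the vertex $(i,j+i-2)$ has all four neighbours occupied (the cops of rows $i-1$ and $i+1$ sit exactly above and below it), so each advancing move simultaneously occupies a fresh vertex and strands such a singleton; this diagonal geometry is precisely where the factor two comes from and is the idea your proposal is missing (it is also where the hypothesis $h>3$ is used, to catch Robber immediately should he jump into the singleton). Finally, your round accounting does not close even granting the two-per-round rate: the additive slack $h+1$ in the statement is already consumed by your $(h+1)$-round setup phase, leaving no room for the extra endgame rounds you invoke once Robber's component stops being good.
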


\begin{proof}
	In the first $h$ rounds, Cops places the cops on $X\coloneqq\{(1,\lfloor\frac{\ell}{2}\rfloor-\lfloor\frac{h}{2}\rfloor+1),(2,\lfloor\frac{\ell}{2}\rfloor-\lfloor\frac{h}{2}\rfloor+2),\ldots,(h,\lfloor\frac{\ell}{2}\rfloor+\lceil\frac{h}{2}\rceil)\}$.
	
	We assume that the robber is in position $\gamma_{(h,1)}^X$.
	In the next round Cops places the last cop on $(h,\lfloor\frac{\ell}{2}\rfloor+\lceil\frac{h}{2}\rceil - 2)$.
	In round $r$, as long as $(1,1)\notin X_r$, there is some $i\in[h]$ and $1<j\leq \lfloor\frac{\ell}{2}\rfloor-\lfloor\frac{h}{2}\rfloor+1$ such that the cops are in position $X_{r}=\{(1,j), (2,j+1), \ldots, (i,j+i-1), (i,j+i-3), (i+1,j+i-2), \ldots, (h,j+h-3)\}$.
	The robber is either still in position $\gamma_{(h,1)}^{X_{r}}$ or he moved to the singleton component $\{(i,j+i-2)\}$.
	If the robber moves to the singleton component, he can be caught in the next move as $h+1>4$ and the singleton has at most four neighbours.
	Thus assume that the robber is in position $\gamma_{(h,1)}^{X_{r}}$.
	Then the cop player moves the cop on position $(i,j+i-1)$ to $(i-1,\max(1,j+i-4))$ if $i>1$, or to position $(h,j+h-5)$.
	That is in $h$ rounds the cops move iteratively to a new diagonal that is two columns closer to vertex $(h,1)$.
	Else if $(1,1)\in X_r$, Cops continues this strategy of iteratively moving the (now partial) diagonal two columns closer to the vertex $(h,1)$ until the robber is caught.
	All in all this strategy places $\lceil(\lfloor\frac{\ell}{2}\rfloor-\lfloor\frac{h}{2}\rfloor+i)/2\rceil$ times a cop in row $i\in [h]$, if the robber avoids singleton components as long as possible.
	Therefore the cops can win in
	\begin{align*}
	1 + \sum_{i=1}^{h} \lceil(\lfloor\frac{\ell}{2}\rfloor-\lfloor\frac{h}{2}\rfloor+i)/2\rceil &\leq 1 + \frac12 \sum_{i=1}^{h} (\frac{\ell}{2}-\frac{h-1}{2}+i+1)\\
		&= 1 + \frac{\ell h-h^2+3h}{4} +\frac12\sum_{i=1}^{h} i = \frac{\ell h}{4} + h + 1
	\end{align*}
	
	If the robber starts in position $\gamma_{(1,\ell)}^X$ the strategy for the cops is symmetric, but since $\lfloor\frac{\ell}{2}\rfloor-\lfloor\frac{h}{2}\rfloor+1\geq \ell - \lfloor\frac{\ell}{2}\rfloor - \lceil\frac{h}{2}\rceil$, Cops does not need longer to catch the robber.
\end{proof}

Using this we can construct a graph $G\in\TW_{k-1}\cap\TD_q$ such that Robber wins $\CR_q^k(G)$, if $q$ is sufficiently larger than $k$.

\begin{lemma}
	\label{lem:CR_tw-td}
	For $q\geq 3$ and $2\leq k-1\leq\frac{q}{3 +\log q}$, there exist
	\begin{itemize}
		\item a connected graph $G\in\TW_{k-1}\cap\TD_q$ such that Robber wins $\CR_q^k(G)$ and
		\item a connected graph $G'\in \TW_{1}\cap\TD_q$ such that Robber wins $\CR_q^2(G')$.
	\end{itemize}
\end{lemma}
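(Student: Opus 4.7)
For each claim I exhibit a grid of carefully chosen dimensions and invoke the lower bounds from \cref{lem:lower-bound-rounds} and \cref{lem:rounds-path}. Recall the standard facts $\tw(\grid{h}{\ell})=h$ and $\td(\grid{h}{\ell})\leq h\lceil\log(\ell+1)\rceil$, and that the grid is connected.

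\textbf{First part.} Set $h:=k-1\geq 2$ and $\ell:=2^{\lfloor q/(k-1)\rfloor}-1$, and take $G:=\grid{h}{\ell}$. Then $\tw(G)=k-1$ and
\[
\td(G)\leq (k-1)\lceil\log(\ell+1)\rceil = (k-1)\lfloor q/(k-1)\rfloor \leq q,
\]
so $G\in\TW_{k-1}\cap\TD_q$. It remains to check that the hypotheses of \cref{lem:lower-bound-rounds} are met, i.e.\ $1<h<\ell-2$ and $q\leq h(\ell-h+2)/4$. The hypothesis $k-1\leq q/(3+\log q)$ gives $q/(k-1)\geq 3+\log q$, hence
\[
\ell \geq 2^{q/(k-1)-1}-1 \geq 2^{2+\log q}-1 = 4q-1.
\]
Combined with $h=k-1\leq q$, this yields $\ell>h+2$. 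For the round bound, since $h\geq 2$ we have $4q/h\leq 2q$ and $h-2\leq q$, so $4q/h+h-2\leq 3q \leq 4q-1\leq \ell$, which rearranges to $q\leq h(\ell-h+2)/4$. \cref{lem:lower-bound-rounds} then guarantees that Robber wins $\CR_q^{k}(G)$.

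\textbf{Second part.} Take $G':=\grid{1}{2q+1}$, the path on $2q+1$ vertices. Clearly $G'$ is connected, $\tw(G')=1$, and $\td(G')=\lceil\log(2q+2)\rceil\leq q$, since $2q+2\leq 2^q$ for $q\geq 3$. Hence $G'\in\TW_{1}\cap\TD_q$. Applying \cref{lem:rounds-path} with $\ell=2q+1$ gives Robber a winning strategy in $\CR_q^{2}(G')$ because $q\leq\lceil(\ell-1)/2\rceil=q$.

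\textbf{Main obstacle.} The only real content is the arithmetic trade-off in the first part: $\ell$ must be at least linear in $q$ to let Robber survive $q$ rounds via \cref{lem:lower-bound-rounds}, but can be at most exponential in $q/(k-1)$ if the grid is to have treedepth $\leq q$. These two requirements are simultaneously satisfiable precisely when $q/(k-1)$ is at least roughly $\log q$, which is exactly what the hypothesis $k-1\leq q/(3+\log q)$ (with the constant $3$ absorbing the slack from the floor and from the condition $\ell>h+2$) ensures. Once the right $\ell$ is pinned down, everything else is a direct appeal to the lemmas already proved.
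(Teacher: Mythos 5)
Your proof is correct and follows essentially the same strategy as the paper: take a grid of suitable dimensions and invoke \cref{lem:lower-bound-rounds} (resp.\ \cref{lem:rounds-path}), balancing the width $\ell$ so that the treedepth stays $\leq q$ while Robber still survives $q$ rounds. The only difference is cosmetic: you set $\ell=2^{\lfloor q/(k-1)\rfloor}-1$ (making the treedepth bound immediate and then checking $\ell$ is large enough), whereas the paper picks $\ell$ in the interval $[\tfrac{q}{k-1}(k+1),\tfrac{q}{k-1}(k+2)]$ (making the round bound immediate and then checking the treedepth), and you use the path on $2q+1$ rather than $2^q-1$ vertices for $G'$ — both choices work.
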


\begin{proof}
	We first construct $G'\in \TW_{1}\cap\TD_q$.
	Consider the path $G'\coloneqq\grid{1}{2^q-1}$.
	It is well known that $\tw(\grid{1}{2^q-1})=1$ and $\td(\grid{1}{2^q-1})\leq \lceil \log(2^q-1+1)\rceil = q$, thus $\grid{1}{2^q-1}\in \TW_{1}\cap\TD_q$.
	By \cref{lem:rounds-path}, Robber wins against two cops if the game is played for $q \leq \lceil \frac{2^q-1-1}{2}\rceil=2^{q-1} -1$ rounds.
	Thus Robber wins $\CR_q^2(G')$ since $q \geq 3$ by assumption.
	
	Next we construct $G\in\TW_{k-1}\cap\TD_q$.
	Since $\frac{q}{k-1} \geq \log q > 1$, we may pick an integer~$\ell$ such that $\frac{q}{k-1}(k+1) \leq \ell \leq \frac{q}{k-1}(k+2)$.  
	Define the graph $G$ as the grid $\grid{k-1}{\ell}$.
	Then $\td G \leq (k-1)(\lceil \log (\ell+1)\rceil)$, cf.\@ \cref{fig:treedepthexample}.
	Using that $\ell \leq \frac{q}{k-1}(k+2)$ and $q \geq k+1$, we get that 
	\begin{align*}
	(k-1)(\lceil \log (\ell+1)\rceil) &\leq (k-1)\left( \log \left(\frac{q}{k-1}(k+2) + 1 \right)+1\right)\\
	&\leq (k-1)\left( \log (q)+\log\left(\frac{k+2}{k-1} + \frac{1}{k+1} \right)+1\right)\\
	&\leq (k-1)\left( \log (q)+2.5 \right) \leq q,
	\end{align*}
	where the penultimate inequality holds since $k \geq 3$ by assumption.
	Hence, $G \in \TD_q$ and clearly $G \in \TW_{k-1}$.
	
	Finally, we derive that the bound in \cref{lem:lower-bound-rounds} applies, yielding that Robber wins the game $\CR_q^k(G)$.
	Since $\ell \geq \frac{q}{k-1}(k+1)$ and $q \geq k-1$.
	\begin{align*}
		\numberOfRoundsK 
		& \geq \frac{q(k+1) - (k-1)(k-3)}{4} \\ 
		& = q + \frac{(k-1)(k-3) - (k-1)(k-3)}{4} \\
		& \geq q.
	\end{align*}
	Hence, Robber wins $\CR_q^k(G)$.
\end{proof}

Observe that by attaching a complete binary tree of depth $q$ to the centre vertex of the grid, one may force the graph $G$ in \cref{lem:CR_tw-td} to have treedepth exactly $q$.

\ekqtwtdsyntax*

\begin{proof}
	We use $G$ and $G'$ from \cref{lem:CR_tw-td} and observe that if Robber wins $\CR_q^k({H})$, for any $q,k\geq 1$ and any graph $H$, he also wins $\monCR_q^k(H)$.
	Thus from \cref{lem:Ekq-cops} it follows that $G'\notin\EParam{2}{q}$ and $G\notin\Ekq$.
\end{proof} 
\section{Proofs of \Cref{thm:guardedEkq_vs_guarded-logic,thm:ckq_equivalence}}
\label{app:homind}
\subsection{Proofs of \Cref{subsec:ckq}}
\label{app:c-homind}

\homcountsInCkq*

\begin{proof}
    Since $F$ is \elimOrd{k}{q}, there is a $k$-construction tree $(T, \lambda, r)$ for $F$ of \elimDepth{} at most $q$. We construct $\phi_m$ per induction over the structure of $T$. Without loss of generality, we assume that $T$ is a binary tree.

    Let $v \in V(T)$ be a leaf of $T$. Since $H \coloneqq \lambda(v)$ is fully labelled, for any graph $G$ with $L_{H} \subseteq L_G$ there is either a unique homomorphism from $H$ to
$G$ or none at all. We thus let $\phi_1^v$ be the conjunction of formulae $x_i = x_j$ for $\nu_{H}(i) = \nu_{H}(j)$ and $Ex_ix_j$ for $\nu_{H}(i)\nu_{H}(j) \in E(H)$. We then let $\phi^v_0 = \neg \phi^v_1$ and $\phi^v_m = \bot$ for $m > 1$. Note that for all $m$, $\phi^v_m$ uses at most $k$ distinct variables and has quantifier-rank $0$, so $\phi_m^v \in \lC^k_q$.

    Now let $v \in V(T)$ be a node with two children $w_1, w_2$. Since $T$ is a construction tree, we have $\lambda(v) = \lambda(w_1)\lambda(w_2)$. Per induction hypothesis, there are formulae $\phi^1_m, \phi^2_m \in \lC^k_q$, such that $\hom(\lambda(w_i), G) = m$ if and only if $H \models \phi^i_m$ for appropriately labelled graphs $G$ and $i \in \{1,2\}$. If $m \geq 1$ we let $\phi_m$ be the disjunction of formulae $\phi^1_{m_1} \land \phi^2_{m_2}$ for all $m_1, m_2$ with $m = m_1m_2$. For $m = 0$ we let $\phi_m = \phi^1_0 \lor \phi^2_0$. This boolean combination does not alter the quantifier rank, so $\phi_m$ still has quantifier-rank at most $q$. Moreover, $\phi^1_m$ and and $\phi^2_m$ both have variables among $x_1, \dots, x_k$, so $\phi_m \in \lC^k_q$.

    Finally, suppose $v \in V(T)$ has only one child $w$. This implies that we can obtain $\lambda(v)$ from $\lambda(w)$ by removing a label $\ell$. Per induction hypothesis, there are formulae $\phi'_m \in \lC^k_q$ that capture that there are $m$ homomorphisms from $F'$ to an appropriately labelled graph $H$. By \cref{prp:homlabeldel}, we can define $\phi_m$ as the disjunction over all decompositions $m = \sum_{i=1}^t c_im_i$, for $c_i, m_i \in \N$ and $c \coloneqq \sum c_i$, of formulae
    \begin{equation*}
      \exists^{= c} x_\ell ~\neg\phi'_0 \land \bigwedge_{i \in [t]} \exists^{=c_i}x_l ~\phi'_{m_i}.
\end{equation*}

    If $v$ is the root of $T$, i.e. $\lambda(v) = F$, we let $\phi_m = \phi_m^v$.
    Observe that each elimination step increases the quantifier-rank of $\phi_m$ by one, and at the leafs the quantifier-rank is 0. Since $T$ has \elimDepth{} at most $q$, it holds that $\phi_m \in \lC^k_q$.
  \end{proof}

  \qgFromCkq*

  \begin{proof}
    The proof is per induction over the structure of $\phi$. If $\phi = [x_i = x_j]$, we let $\qg{F} = F$ be the graph consisting of a single vertex $v$ with $\nu_F(i) = \nu_F(j) = v$. If $\phi = Ex_ix_j$, we let $\qg{F} = F$ be the graph consisting of two adjacent vertices $v_1, v_2$ with $\nu_F(i) = v_1$ and $\nu_F(j) = v_2$, unless $i = j$, in which case we let $\qg{G} = 0$. It is not hard to see that there exists a (unique) homomorphism from $\qg{F}$ to a loopless graph $G$ iff $G \models \varphi$. In all these cases $\qr(\phi) = 0$, and since $G$ is always fully labelled, it holds that $F$ is \elimOrd{k}{0}. Consequently, $\qg{F} \in \R\LParam{k}{0}$.

    If $\phi = \neg \psi$, then there exists per induction hypothesis an $\qg{F}_\psi \in \R\Lkq$ modelling $\psi$ for graphs of order $n$. We use the interpolation construction from \cref{lem:qginterpolation} and let $\qg{F} = \qgp{F_\psi}{0}{1}$. Since $\Lkq$ is closed under taking products, we have $\qg{F} \in \R\Lkq$.

    If $\phi = \psi \lor \theta$, let $\qg{F_\psi}, \qg{F_\theta}$ be defined as above. Then $\qg{F} \coloneqq \qgp{F'}{1, 2}{0}$ where $\qg{F'} = \qg{F_\psi} + \qg{F_\theta}$ models $\phi$ for graphs of order $n$. Again, by \cref{lem:qginterpolation} it is $\qg{F} \in \R\Lkq$.

    Finally, consider the case $\phi = \exists^{\geq t} x_\ell \psi$. Let $\qg{F_\psi} = \sum_{i}c_iF_{\psi, i}$ be a linear combination modelling $\psi$ for graphs of size $n$. Since $\qr(\psi) = \qr(\phi) - 1$, we may assume that $\qg{F_\psi} \in \R\LParam{k}{q-1}$.
    We let $\qg{F_\psi'}$ be the graph obtained from $\qg{F_\psi}$ by removing the label $\ell$ from all $F_{\psi, i}$. Then
        \begin{align*}
          \hom(\qg{F_\psi'}, G) &= \sum_{v \in V(G)}\sum_i c_i\hom(F_{\psi, i}, G(\ell \to v))\\
          &= \sum_{v \in V(G)}\hom(\qg{F_\psi}, G(\ell \to v)),
        \end{align*}
    so $\qg{F} = \qgp{F_\psi'}{t, \dots, n}{0, \dots, t-1}$ models $\phi$ for graphs of order $n$. Moreover, it is easy to see that the $F_{\psi, i}'$, obtained by removing a label from $F_{\psi, i}$, are \elimOrd{k}{q}. Consequently, $\qg{F} \in \R\Lkq$.
  \end{proof}

  \subsection{Proofs of \Cref{subsec:gckq}}
  \label{app:gc-homind}

  \gcHomcap*

    \begin{proof}
    The construction proceeds along the same lines as \cref{lem:homcounts_in_ckq}. In fact, we only have to reconsider label deletions. Suppose $F$ is obtained from a graph $F'$ by removing a label $\ell$. Then per induction hypothesis there exist formulae $\varphi'_m \in \mathsf{GC}_{q-1}^k$ encoding the number of homomorphisms from $F'$ to an arbitrary graph $G$. Moreover, per definition there is a label $\ell'$ in $F'$ with $\nu(\ell)\nu(\ell') \in E(F')$. We then define the formula $\varphi_m \in \lGC_{q}^k$ as the disjunction over all decompositions $m = \sum_i c_im_i$, $c = \sum_i c_i$ of formulae
    \begin{equation*}
      \theta \coloneqq \exists^{=c}x_\ell (Ex_{\ell}x_{\ell'} \land \neg \varphi'_0) \land \bigwedge_i \exists^{=c_i}x_\ell(Ex_{\ell}x_{\ell'} \land \varphi'_{m_i}).
    \end{equation*}
    Note that this is only differs from the construction in \cref{lem:homcounts_in_ckq} by the added guards $Ex_{\ell}x_{\ell'}$. To see that this does not limit the strength of the formula, suppose
    \begin{equation*}
      G \models \theta' \coloneqq \exists^{=c}x_\ell \neg \varphi'_0 \land \bigwedge_i \exists^{=c_i}x_\ell\varphi'_{m_i}.
    \end{equation*}
    Then per definition there exist exactly $c_i$ vertices $v_1, \dots, v_{c_i}$ such that for each $j \in [c_i]$ it holds that $G(\ell \to v_j) \models \varphi'_{m_i}$. Per induction hypothesis this is equivalent to $\hom(F', G(\ell \to v_j)) = m_i > 0$. But for each $h \in \HOM(F', G(\ell \to v_j))$, we have $h(\nu(\ell))h(\nu(\ell')) = v_jh(\nu(\ell')) \in E(G(\ell \to v_j))$. Consequently $G(\ell \to v_j) \models Ex_{\ell}x_{\ell'}$ for all $j$, and thus $G \models \theta$.

    Conversely, if $G \models \theta$ but $G \not\models \theta'$ then there must exist a vertex $w$ such that $G(\ell \to w) \models \varphi'_{m_i}$ but $G(\ell \to w) \not\models Ex_{\ell}x_{\ell'}$. But then again per induction hypothesis there exists a homomorphism from $F'$ to $G(\ell \to w)$, and thus $w\nu(\ell') \in E(G(\ell \to w))$ and $G(\ell \to w) \models Ex_{\ell}x_{\ell'}$.
  \end{proof}

  \qgFromGckq*

  \begin{proof}
    Consider the construction used in the proof of \cref{lem:qg_from_ckq}. We show that the same construction yields an $\qg{F} \in \R\GEkq$ when restricting ourselves to $\lGC$ formulae. First note that $\GEkq$ is still closed under products, so we can use the interpolation construction without any restrictions. It thus suffices to consider the case that $\varphi = \exists^{\geq t} x_\ell (Ex_{\ell}x_{\ell'} \land \theta)$.

    Per assumption there then exists an $\qg{F_\theta} \in \R\GEParam{k}{q-1}$ such that for graphs $G$, $\hom(\qg{F_\theta}, G) = 1$ iff $G \models \theta$. Moreover, there is a graph that models $Ex_{\ell}x_{\ell'}$ -- namely the two vertex graph $F_E = (\{x, y\}, {xy})$ with $\nu^{F_E}(\ell) = x$ and $\nu^{F_E}(\ell') = y$.

    The product $\qg{F'_\theta} = \qg{F}_\theta \cdot F_E$ then models $Ex_{\ell}x_{\ell'} \land \theta$. But note that in all graphs of $\qg{F'_\theta}$, $\nu(\ell)$ has a labelled neighbour -- namely $\nu(\ell')$. This means that the linear combination $\qg{F''}$ obtained by removing the label $\ell$ from all graphs in the linear combination is still in $\R\GEkq$. Completely analogously to \cref{lem:qg_from_ckq},
    \begin{equation*}
      \qg{F_\varphi} \coloneqq \qgp{F_\theta''}{t, \dots, n}{0, \dots, t-1},
    \end{equation*}
    models $\varphi$ for graphs of size $n$.
  \end{proof}

  \subsection{Removing labels to prove \Cref{thm:guardedEkq_vs_guarded-logic}}

  We prove the result that removing the labels from $\GEkq$ yields the desired equivalence relation---and thus \Cref{thm:guardedEkq_vs_guarded-logic}---in two steps.

  \begin{lemma}\label{lem:labelsFw}
  Let $G, H$ be unlabelled graphs with $G \equiv_{\lGC^k_q} H$. Then $\hom(\GEkqLL, G) = \hom(\GEkqLL, H)$.
\end{lemma}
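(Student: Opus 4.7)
The plan is to reduce the lemma to the single-labelled case of \cref{cor:gc-thm-with-labels} via an expansion of $\hom(F,G)$ into a sum over pointed homomorphism counts, and then to exploit the bijection $f \colon V(G) \to V(H)$ provided by $G \equiv_{\lGC^k_q} H$ to transport each pointed count from $G$ to $H$.

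Fix $F \in \GEkqLL$ with a witness $F^{*} \in \GEkq$, so that $F$ equals $F^{*}$ with all labels stripped. I would first reduce to the connected case: restricting the construction tree of $F^{*}$ to each connected component $F_i$ of $F$ yields a witness in $\GEkq$, so $F_i \in \GEkqLL$. By \cref{prp:homproducts}, $\hom(F,G) = \prod_i \hom(F_i,G)$ (and analogously for $H$), so it suffices to handle connected $F$. Next, I would show that every connected $F \in \GEkqLL$ admits a witness $F^{**} \in \GEkq$ with a single labelled vertex $u$ carrying some label $\ell$. Granted this, iterating \cref{prp:homlabeldel} over the removed labels gives
\[
\hom(F,G) = \sum_{v \in V(G)} \hom(F^{**}, G(\ell \to v)).
\]
Since $G, v \equiv_{\lGC^k_q} H, f(v)$ for every $v \in V(G)$ by hypothesis, \cref{cor:gc-thm-with-labels} yields $\hom(F^{**}, G(\ell \to v)) = \hom(F^{**}, H(\ell \to f(v)))$. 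Summing and substituting $w = f(v)$ using the bijectivity of $f$,
\[
\hom(F,G) = \sum_{v \in V(G)} \hom(F^{**}, H(\ell \to f(v))) = \sum_{w \in V(H)} \hom(F^{**}, H(\ell \to w)) = \hom(F,H),
\]
as desired.

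The hard part will be establishing the single-labelled witness claim for connected $F$. The intuition is that, since $F$ is connected, the construction tree can be modified by inserting additional elimination nodes near the root which peel off labels from leaves of a spanning structure of the labelled subgraph, each such elimination being legitimised by the guard condition through the chosen neighbour. The subtlety lies in ensuring that this restructuring respects the bound $q$ on the elimination depth, and a careful choice of the vertex $u$ to retain will be required. Should the direct construction prove too delicate, an alternative route is to establish a multi-labelled generalisation of \cref{cor:gc-thm-with-labels} (which follows from \cref{lem:gchomcap} and \cref{lem:qg_from_gckq} applied to formulae with several free variables) and to lift the single-variable bijective $\lGC^k_q$-equivalence $G, v \equiv H, f(v)$ to a multi-pointed equivalence along the tuples of labelled vertices arising in $F^{*}$; this lifting is itself non-trivial since multi-pointed equivalence does not follow automatically from single-pointed equivalence via a bijection and would need to be anchored in the guarded structure of $\lGC^k_q$.
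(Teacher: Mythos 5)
Your core argument matches the paper's: extract the bijection $f$ from $G \equiv_{\lGC^k_q} H$, invoke \cref{lem:gchomcap} (equivalently \cref{cor:gc-thm-with-labels}) to get $\hom(F^{**}, G(\ell \to v)) = \hom(F^{**}, H(\ell \to f(v)))$ for every $v \in V(G)$, then sum over $v$ and reindex by $f$. Your extra care --- factoring $F$ into connected components and insisting on a single-labelled witness $F^{**}$ --- is well-placed, since the identity $\hom(F, G) = \sum_{v} \hom(F^{**}, G(\ell \to v))$ is only valid when $F^{**}$ carries exactly one label, a point the paper's own write-up leaves implicit.

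The step you flag as ``the hard part'', however, is a genuine gap in your proposal, and the route you sketch does not close it. Appending elimination nodes near the root of a given $\GEkq$-construction tree lengthens every root-to-leaf path, so it can push the elimination depth strictly above $q$; it can also be blocked outright when the labels remaining at the root are pairwise non-adjacent, so that no label has a labelled neighbour. Concretely, the path $P_3$ with its two endpoints labelled lies in $\GEParam{3}{1}$ via a single elimination of the middle vertex, yet from that root neither remaining label can be removed (the middle vertex is now unlabelled), and doing so would in any case exceed depth $1$. A single-labelled witness for unlabelled $P_3$ in $\GEParam{3}{1}$ does exist, but it is built as a product of two edges, not by extending the two-labelled tree --- the required argument has a different shape from the one you sketch. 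Your alternative route via multi-pointed $\lGC^k_q$-equivalence is likewise left open and, as you correctly observe, does not follow from the single bijection $f$ alone. (To be fair, the paper's proof does not address the multi-label issue either; it silently treats each $F \in \GEkq$ as single-labelled when writing $\hom(F^-,G) = \sum_v \hom(F, G(\ell\to v))$.)
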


  \begin{proof}
  Let $G, H$ be unlabelled graphs with $G \equiv_{\lGC^k_q} H$. Then there exists a bijection $f \colon V(G) \to V(H)$ such that for all $\psi(x) \in \lGC^k_q$ it holds that
  \begin{equation*}
    G, v \models \psi(x) \iff H, f(v) \models \psi(x)
  \end{equation*}
  for all $v \in V(G)$. This holds in particular for the formulae $\phi^F_m(x_\ell)$ encoding homomorphism counts from a graph $F \in \GEkq$. We thus get
  \begin{align*}
    \hom(F, G(\ell \to v)) = m &\iff G, v \models \phi^F_m(x_\ell)\\
                               &\iff H, f(v) \models \phi^F_m(x_\ell)\\
                               &\iff \hom(f, H(\ell \to f(v))) = m.
  \end{align*}
  or equivalently
  \begin{equation*}
    \hom(F, G(\ell \to v)) = \hom(F, H(\ell \to f(v))).
  \end{equation*}

  Now for some $F \in \GEkq$, we let $F^-$ be the graph obtained from $F$ by removing all labels. Then $F^- \in \GEkqLL$, and moreover for each $X' \in \GEkqLL$ there is a graph $X \in \GEkq$ with $X' = X^-$. For an arbitrary $F^- \in \GEkqLL$ it then holds
  \begin{align*}
    \hom(F^-, G) &= \sum_{v \in V(G)}\hom(F, G(\ell \to v))\\
                 &= \sum_{v \in V(G)} \hom(F, H(\ell \to f(v)))\\
                 &= \sum_{w \in V(H)} \hom(F, H(\ell \to w))\\
                 &= \hom(F^-, H).
  \end{align*}
\end{proof}

\begin{lemma}[Folklore] \label{lem:interpolation}
	Let $I$ and $J$ be finite sets.
	Let $\mathcal{F}$ be a set of pairs of functions $(a, b)$ where $a \colon I \to \mathbb{R}$ and $b \colon J \to \mathbb{R}$.
	Suppose that $\mathcal{F}$ is closed under multiplication, i.e.\@ if $(a, b), (a', b') \in \mathcal{F}$ then $(a \cdot a', b \cdot b') \in \mathcal{F}$ where $a \cdot a'$ denotes the point-wise product of $a$ and $a'$.
	Then the following are equivalent:
	\begin{enumerate}
		\item For all $(a, b) \in \mathcal{F}$, $\sum_{i \in I} a(i) = \sum_{j \in J} b(j)$,
		\item There exists a bijection $\pi \colon I \to J$ such that $a = b \circ \pi$ for all $(a,b) \in \mathcal{F}$.
	\end{enumerate}
\end{lemma}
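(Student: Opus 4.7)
The direction $(2) \Rightarrow (1)$ is immediate: if $a = b \circ \pi$ for a bijection $\pi$, then $\sum_{i \in I} a(i) = \sum_{i \in I} b(\pi(i)) = \sum_{j \in J} b(j)$ for every $(a,b) \in \mathcal{F}$. The real content lies in $(1) \Rightarrow (2)$, and my plan is to reduce the construction of $\pi$ to showing that elements of $I$ and $J$ of the same ``type'' occur with equal multiplicities. For $i \in I$, define the type $\alpha(i) \colon \mathcal{F} \to \mathbb{R}$ by $\alpha(i)(a,b) \coloneqq a(i)$, and analogously $\beta(j)(a,b) \coloneqq b(j)$ for $j \in J$. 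Let $T \coloneqq \alpha(I) \cup \beta(J)$ (a finite set of cardinality at most $|I| + |J|$), and set $c_t \coloneqq |\alpha^{-1}(t)|$ and $d_t \coloneqq |\beta^{-1}(t)|$. Once $c_t = d_t$ has been established for every $t \in T$, the desired bijection $\pi$ is obtained by fixing, for each $t \in T$, an arbitrary bijection between $\alpha^{-1}(t)$ and $\beta^{-1}(t)$.

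To establish the equality of multiplicities, I would exploit closure under multiplication. For any $(a_1,b_1),\dots,(a_k,b_k) \in \mathcal{F}$, iterated multiplication yields $(a_1\cdots a_k,\ b_1\cdots b_k) \in \mathcal{F}$, so hypothesis $(1)$ gives
\begin{equation*}
\sum_{i \in I} \prod_{\ell=1}^k a_\ell(i) = \sum_{j \in J} \prod_{\ell=1}^k b_\ell(j).
\end{equation*}
Regrouping both sides according to types yields
\begin{equation*}
\sum_{t \in T} (c_t - d_t) \prod_{\ell=1}^k t(a_\ell,b_\ell) = 0.
\end{equation*}
Since this identity is linear, taking $\mathbb{R}$-linear combinations of such identities over all tuples $(f_1,\dots,f_k)$ of elements of $\mathcal{F}$ shows that the same identity holds with $\prod_\ell t(a_\ell,b_\ell)$ replaced by $P\bigl(t(f_1),\dots,t(f_m)\bigr)$ for any finite list $f_1,\dots,f_m \in \mathcal{F}$ and any polynomial $P \in \mathbb{R}[x_1,\dots,x_m]$.

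The key step, and the only non-routine one, is then a separation-and-interpolation argument. Since the elements of $T$ are pairwise distinct functions on $\mathcal{F}$ and $T$ is finite, I can choose a finite subset $\{f_1,\dots,f_m\} \subseteq \mathcal{F}$ (collecting, for every pair $t \neq t'$ in $T$, some $f \in \mathcal{F}$ with $t(f) \neq t'(f)$) such that the map $t \mapsto \bigl(t(f_1),\dots,t(f_m)\bigr)$ is injective on $T$. The images are then finitely many distinct points in $\mathbb{R}^m$, and standard Lagrange interpolation produces, for each $t_0 \in T$, a polynomial $P_{t_0}$ that takes value $1$ at $\bigl(t_0(f_1),\dots,t_0(f_m)\bigr)$ and $0$ at all other type-evaluations. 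Substituting $P_{t_0}$ into the polynomial identity above isolates the term for $t_0$ and yields $c_{t_0} - d_{t_0} = 0$, as required.

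The main obstacle I anticipate is reconciling the fact that $\mathcal{F}$ is only assumed closed under \emph{multiplication} (not addition) with the need to access arbitrary polynomials in the type values; the plan sidesteps this by taking linear combinations on the level of the equations $\sum_i a(i) = \sum_j b(j)$ rather than on the level of $\mathcal{F}$ itself, which is precisely what bridges multiplicative closure to the Vandermonde-style separation of types.
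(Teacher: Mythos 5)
Your reduction to ``types occur with equal multiplicity'' is sound, and the route is genuinely different from the paper's. The paper regards the $a$'s and $b$'s as vectors spanning subspaces $A \leq \mathbb{R}^I$ and $B \leq \mathbb{R}^J$: multiplicative closure plus hypothesis (1) gives $\sum_i a(i)a'(i) = \sum_j b(j)b'(j)$ for all pairs, so a Gram--Schmidt argument (imported from \cite{grohe_homomorphism_2021_arxiv}) produces an inner-product-preserving map $U \colon A \to B$ with $Ua = b$, and a further argument shows $U$ must permute the indicator vectors of the ``same type'' equivalence classes, from which $\pi$ is read off. Your polynomial-interpolation argument is more elementary and self-contained, at the cost of the explicit Vandermonde-style bookkeeping; both hinge on the same underlying fact that pointwise products let one separate types.

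One step as written does not go through. Multiplicative closure only gives you the identity $\sum_{t \in T}(c_t - d_t)\prod_{\ell=1}^{k} t(f_\ell) = 0$ for $k \geq 1$; the empty product $k=0$ would assert $|I| = |J|$, which is not a consequence of the hypotheses. Hence your linear-combination step yields $\sum_{t}(c_t - d_t)P\bigl(t(f_1),\dots,t(f_m)\bigr) = 0$ only for polynomials $P$ with \emph{zero constant term}, whereas a Lagrange interpolant $P_{t_0}$ generically has a nonzero one. The repair: if $t_0$ does not vanish on all of $\mathcal{F}$, adjoin to your separating family some $f_\ell$ with $t_0(f_\ell) \neq 0$ and replace $P_{t_0}$ by $P_{t_0}\cdot x_\ell / t_0(f_\ell)$, which preserves the interpolation values and kills the constant term. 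This settles every type except a possible type vanishing identically on $\mathcal{F}$, for which $c_{t_0} = d_{t_0}$ is equivalent to $|I| = |J|$ --- and that genuinely does not follow: with $|I| \neq |J|$ and $\mathcal{F}$ a single pair of zero functions, (1) and multiplicative closure hold but (2) fails. So the lemma needs a nondegeneracy assumption (e.g.\@ the all-ones pair lies in $\mathcal{F}$), which is satisfied in the paper's application since the single labelled vertex belongs to $\GEkq$; the paper's own proof silently relies on the same assumption (without it the colour-class indicators need not lie in the spans $A$ and $B$). With that caveat made explicit, your proof is correct.
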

\begin{proof}
	The backward implication is immediate.
	Conversely, consider the vector spaces $A \leq \mathbb{R}^I$ and $B \leq \mathbb{R}^J$ spanned by the $a$ and $b$ from $\mathcal{F}$, respectively. By a Gram--Schmidt argument \cite[Lemma~29]{grohe_homomorphism_2021_arxiv}, there exists a unitary map $U \colon A \to B$ such that $Ua = b$ for all $(a,b) \in \mathcal{F}$.
	Define an equivalence relation on $I$ by $i  \sim i' \iff a(i) = a(i')$ for all $a$ and analogously on $J$. A colour class indicator vector is a vector in $\mathbb{R}^I$ (or $\mathbb{R}^J$) which is one on one of the equivalence classes and zero everywhere else.
	By arguments from \cite[Lemma~35]{grohe_homomorphism_2021_arxiv}, $U$ must send colour class indicator vectors to colour class indicator vectors. Hence, $\pi$ can be extracted from $U$.
\end{proof}

\begin{lemma}\label{lem:labelsBw}
  Let $G, H$ be unlabelled graphs with $\hom(\GEkqLL, G) = \hom(\GEkqLL, H)$. Then $G \equiv_{\lGC^k_q} H$.
\end{lemma}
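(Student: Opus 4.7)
The plan is to construct the required bijection $f \colon V(G) \to V(H)$ via a single application of the interpolation lemma (\cref{lem:interpolation}), and then to express every $\lGC^k_q$-formula as a homomorphism count through \cref{lem:qg_from_gckq}. Let $\mathcal{F}^* \subseteq \GEkq$ denote the subfamily of those guarded construction-tree graphs which carry exactly one labelled vertex, namely with label~$1$. For each $F \in \mathcal{F}^*$ define $a_F \colon V(G) \to \R$, $v \mapsto \hom(F, G(1 \to v))$, and analogously $b_F \colon V(H) \to \R$, $w \mapsto \hom(F, H(1 \to w))$. Set $\mathcal{F} \coloneqq \{(a_F, b_F) : F \in \mathcal{F}^*\}$ with $I \coloneqq V(G)$ and $J \coloneqq V(H)$ as input for \cref{lem:interpolation}.

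Both hypotheses of \cref{lem:interpolation} can be verified directly. For closure of $\mathcal{F}$ under multiplication I would show that $\mathcal{F}^*$ is itself closed under taking products: if $F_1, F_2 \in \mathcal{F}^*$, then $F_1 F_2$ identifies the two label-$1$ vertices and therefore still carries exactly one labelled vertex, and joining their guarded construction trees below a fresh product root yields a guarded $k$-construction tree of elimination depth $\leq q$ for $F_1 F_2$. \Cref{prp:homproducts} then gives $a_{F_1 F_2} = a_{F_1}\cdot a_{F_2}$ and $b_{F_1 F_2} = b_{F_1}\cdot b_{F_2}$. For equality of the sums, write $F^-$ for $F$ with its unique label removed; since $F^- \in \GEkqLL$, the hypothesis of the lemma yields
\[
\sum_{v \in V(G)} a_F(v) \;=\; \hom(F^-, G) \;=\; \hom(F^-, H) \;=\; \sum_{w \in V(H)} b_F(w).
\]
Taking $F$ to be a single vertex labelled~$1$ already forces $|V(G)| = |V(H)| \eqqcolon n$. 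Applying \cref{lem:interpolation} therefore produces a bijection $f \colon V(G) \to V(H)$ with $\hom(F, G(1 \to v)) = \hom(F, H(1 \to f(v)))$ for every $F \in \mathcal{F}^*$ and every $v \in V(G)$, and by linearity this identity extends to every $\qg{F} \in \R\mathcal{F}^*$.

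To finish, fix an arbitrary $\psi(x) \in \lGC^k_q$ and apply \cref{lem:qg_from_gckq} to obtain $\qg{F}_\psi \in \R\GEkq$ modelling $\psi$ for graphs of size~$n$. Since the sole free variable of $\psi$ is $x = x_1$, tracing through the inductive construction in \cref{lem:qg_from_gckq} shows that every summand of $\qg{F}_\psi$ carries the label~$1$ and no other labels, hence $\qg{F}_\psi \in \R\mathcal{F}^*$. Substituting into the transported identity yields
\[
[G, v \models \psi] \;=\; \hom(\qg{F}_\psi, G(1 \to v)) \;=\; \hom(\qg{F}_\psi, H(1 \to f(v))) \;=\; [H, f(v) \models \psi],
\]
as required for $G \equiv_{\lGC^k_q} H$. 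The main obstacle is the interface between the labelled machinery defining $\GEkq$ and the purely unlabelled hypothesis on $\GEkqLL$: one must verify that the closure argument for $\mathcal{F}^*$ really stays inside the single-label subfamily, and that the construction underlying \cref{lem:qg_from_gckq} applied to a one-free-variable formula indeed produces only single-label summands, so that the interpolation output can be combined linearly against the right family of linear combinations.
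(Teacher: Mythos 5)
Your proof takes essentially the same route as the paper's: apply \cref{lem:interpolation} to the family of pairs of pointed-homomorphism-count functions indexed by single-labelled graphs in $\GEkq$, extract the bijection, and then transport $\lGC^k_q$-formulae across it. The only cosmetic difference is that the paper closes by citing \cref{cor:gc-thm-with-labels} as a black box, whereas you unfold it and re-derive the transport from \cref{lem:qg_from_gckq} directly, along the way spelling out the closure-under-products check and the fact that the quantum graph modelling a formula with one free variable has only the label~$1$ in each summand; these are details the paper leaves implicit, and your filling them in is correct.
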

\begin{proof}
	Write $\mathcal{F} $ for the set of pairs of functions $V(G) \to \mathbb{R}$, $v \mapsto \hom(F, G(1 \to v))$ and $V(H) \to \mathbb{R}$, $v \mapsto \hom(F, H(1 \to v))$ for all $F \in \GEkq$ with one label. Then $\mathcal{F}$ is as in \cref{lem:interpolation}. Hence, there exists a bijection $\pi \colon V(G) \to V(H)$ such that $\hom(F, G(1 \to v)) = \hom(F, H(1 \to \pi(v)))$ for all $v \in V(G)$ and $F \in \GEkqLL$. Hence, \cref{cor:gc-thm-with-labels} yields the claim.
\end{proof}

\Cref{thm:guardedEkq_vs_guarded-logic} is an immediate corollary of \Cref{lem:labelsFw} and \ref{lem:labelsBw}

\section{Proofs of \Cref{thm:td-closed,thm:Ekq_tw-td-semantics}}
\label{app:td-closed}
We start by recalling a construction of highly similar graphs from \cite{roberson_oddomorphisms_2022} which is reminiscent of the CFI-construction \cite{cai_optimal_1992}.
Let $G$ be a graph with $U \subseteq V(G)$. Write $\delta_{v, U} \coloneqq |\{v\} \cap U|$ for every $v\in V(G)$.
The graph $G_U$ has vertices $(v, S)$ for every $v \in V(G)$ and $S \subseteq E(v)$ with $|S| \equiv \delta_{v, U} \mod 2$ where $E(v)$ denotes the set of edges incident to $v$. It contains an edge $(v, S)(u, T)$ whenever $uv \in E(G)$ and $uv \not\in S \triangle T$ where $S\triangle T$ denotes the symmetric difference of $S$ and $T$. Write $\rho \colon G_u \to G$ for the homomorphism sending $(v, S)$ to $v$.

Note that the same construction also appears in \cite{Furer01} and may be referred to as CFI-construction with inner vertices only. Central is the following result of \cite{roberson_oddomorphisms_2022} regarding homomorphisms to these graphs.

\begin{lemma}[{\cite[Corollary~3.7 and Theorem~3.13]{roberson_oddomorphisms_2022}}] \label{lem:roberson3.7}
	For a connected graph $G$ and $U \subseteq V(G)$, the following are equivalent:
	\begin{enumerate}
		\item $|U|$ is even,
		\item $G_\emptyset \cong G_U$,
		\item $\hom(G, G_\emptyset) = \hom(G, G_U)$.
	\end{enumerate}
	Furthermore, $\hom(F, G_\emptyset) \geq \hom(F, G_U)$ for every graph $F$.
\end{lemma}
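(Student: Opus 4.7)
The plan is to parametrise homomorphisms to $G_U$ by an affine $\mathbb{F}_2$-system. Every $\phi\colon F\to G_U$ decomposes as $\phi(f)=(\bar\phi(f),S_f)$ where $\bar\phi=\rho\circ\phi\colon F\to G$ is a graph homomorphism and $S_f\subseteq E_G(\bar\phi(f))$. The definition of $G_U$ imposes two constraints: (P) $|S_f|\equiv\delta_{\bar\phi(f),U}\pmod 2$ for every $f\in V(F)$, and (E) for every $F$-edge $f_1f_2$ with $\bar\phi(f_1)\bar\phi(f_2)=e$, the element $e$ lies in both $S_{f_1}$ and $S_{f_2}$ or in neither. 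Treating the indicator of ``$e\in S_f$'' as a binary variable, these constraints read $A_{\bar\phi}\vec x=b_{\bar\phi,U}$, where $A_{\bar\phi}$ depends only on $\bar\phi$ and $G$ while the right-hand side $b_{\bar\phi,U}$ encodes $U$ through (P).

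The ``Furthermore'' part then follows almost formally. For $U=\emptyset$ the system is homogeneous and admits the trivial solution $S_f=\emptyset$, so its solution set has size $2^{n(\bar\phi)-\operatorname{rank}A_{\bar\phi}}$. The inhomogeneous system for general $U$ has either the same number of solutions or zero. Summing over $\bar\phi\in\HOM(F,G)$ gives $\hom(F,G_\emptyset)\geq\hom(F,G_U)$, as the number of lifts of each $\bar\phi$ drops from a positive power of two to either that same value or zero.

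The contrapositive of $(3)\Rightarrow(1)$ uses the same framework with $F=G$ and $\bar\phi=\operatorname{id}$. Constraint (E) collapses, for each $G$-edge $e=uv$, the two variables at $u$ and $v$ into a single $y_e$, and (P) becomes $\sum_{e\in E_G(v)}y_e\equiv\delta_{v,U}\pmod 2$. Summing over $v\in V(G)$ yields $0\equiv|U|\pmod 2$, so the identity admits no lift when $|U|$ is odd; the same parity argument on the image of any $\bar\phi$ gives $\hom(G,G_U)=0$, whereas $\hom(G,G_\emptyset)\geq 1$. For $(1)\Rightarrow(2)$, observe that when $|U|$ is even and $G$ is connected, a $U$-join $T\subseteq E(G)$ exists, i.e.\ $|T\cap E(v)|\equiv\delta_{v,U}\pmod 2$ for every $v$. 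The map $(v,S)\mapsto(v,S\triangle(T\cap E(v)))$ then defines an isomorphism $G_\emptyset\to G_U$: the shift matches (P), and the edge condition is preserved because $T\cap E(u)$ and $T\cap E(v)$ agree on whether $uv\in T$. The implication $(2)\Rightarrow(3)$ is immediate.

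The main obstacle is conceptual rather than technical: one has to identify the right linear-algebraic encoding, verifying carefully that the matrix $A_{\bar\phi}$ does not depend on $U$. Once this is isolated, the inequality and the odd-$|U|$ obstruction both reduce to elementary statements about the kernel and the image of $A_{\bar\phi}$ over $\mathbb{F}_2$, and the construction of the isomorphism becomes the translation of a $U$-join into an affine shift.
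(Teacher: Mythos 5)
The paper does not prove this lemma; it imports it verbatim from Roberson, so there is no internal proof to compare against. Your reconstruction via affine $\mathbb{F}_2$-systems indexed by the base homomorphism $\bar\phi$ is the right framework and cleanly yields the ``Furthermore'' inequality, the $U$-join isomorphism for $(1)\Rightarrow(2)$, and the parity obstruction for lifting $\bar\phi=\mathrm{id}$ when $|U|$ is odd.

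There is, however, a genuine gap in your treatment of $(3)\Rightarrow(1)$: the claim that $\hom(G,G_U)=0$ whenever $|U|$ is odd is false. The parity contradiction you derive is specific to $\bar\phi=\mathrm{id}$, because only then does constraint (E) collapse the two variables at the endpoints of each $G$-edge, making $\sum_{v}\sum_{e\in E_G(v)}y_e$ vanish identically modulo $2$. For a non-identity $\bar\phi\colon G\to G$, the (E)-constraints tie together variables indexed by $F$-edges mapped to the same $G$-edge, which does not in general make the sum of the left-hand sides of (P) vanish, and the right-hand sides sum to $\sum_{u\in U}|\bar\phi^{-1}(u)|$, which need not be odd. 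A concrete counterexample: for $G=P_3$ on vertices $a,b,c$ and $U=\{b\}$, one computes that $G_U$ is a disjoint union of two edges, so $\hom(P_3,G_U)=4>0$, while $G_\emptyset$ is $P_3$ together with an isolated vertex and $\hom(P_3,G_\emptyset)=6$. Your conclusion is salvageable: by your own ``Furthermore'' argument, for each $\bar\phi$ the number of lifts into $G_\emptyset$ is at least the number into $G_U$, and for $\bar\phi=\mathrm{id}$ the former is at least $1$ (trivial solution) while the latter is $0$ (parity). Summing over $\bar\phi$ gives the required strict inequality $\hom(G,G_\emptyset)>\hom(G,G_U)$; it is this one strictly dropping summand, not the vanishing of the whole count, that forces the strict gap.
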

In virtue of \cite[Lemma~3.2]{roberson_oddomorphisms_2022}, we may write $G_0$ for $G_\emptyset$ and $G_1$ for any of the isomorphic $G_U$ with $U \subseteq V(G)$ of odd size.
Given \cref{lem:roberson3.7}, it suffices to prove the following proposition:

\begin{proposition}\label{prop:closedness-td-core}
	Let $k \geq 1$ and $q \geq 0$.
	Let $G$ be a connected graph.
	If Robber wins the non-monotone $\CR^k_q(G)$ then $G_0 \equiv_{\Ekq} G_1$.
\end{proposition}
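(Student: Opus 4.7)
By \cref{thm:ckq_equivalence}, it suffices to establish $G_0 \equiv_{\lC^k_q} G_1$. This equivalence is captured by the bijective $k$-pebble, $q$-round game of Hella, so the task becomes: convert Robber's winning strategy $\sigma$ in $\CR^k_q(G)$ into a winning strategy for Duplicator on $(G_0, G_1)$.

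The bridge is the projection $\rho \colon G_b \to G$. A pebble configuration with pebbles at $(a_1, \dots, a_k) \in V(G_0)^k$ and $(b_1, \dots, b_k) \in V(G_1)^k$ is matched to the cop configuration $X \coloneqq \{\rho(a_i) \mid i \in [k]\} = \{\rho(b_i) \mid i \in [k]\}$ in $G$. The structural property of the CFI-like construction driving the argument is that a parity twist may be freely relocated within a single connected component of $G \setminus X$: for any component $C$ of $G \setminus X$ and any $U, U' \subseteq V(G)$ with $U \triangle U' \subseteq C$ and $|U \cap C| \equiv |U' \cap C| \pmod{2}$, there is an isomorphism $G_U \to G_{U'}$ which is the identity on $\rho^{-1}(V(G) \setminus C)$. (This is standard for CFI constructions, cf.\@ \cite[Section~3]{roberson_oddomorphisms_2022}.)

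Duplicator's invariant after round $r$: with pebbles projecting to cops $X_r$ and Robber at $C_r$ (computed by $\sigma$ from the play so far), there is an isomorphism $\phi_r$ between the CFI-representatives housing the pebbled vertices of $G_0$ and $G_1$ that maps pebbles to pebbles and whose parity twist is supported in $\rho^{-1}(C_r)$; in particular, $\phi_r$ acts as the identity on $\rho^{-1}(V(G) \setminus C_r)$. To handle a round, Spoiler selects a pebble $i$ to reuse, which we interpret as Cops lifting cop $i$; Duplicator queries $\sigma$ for Robber's next component $C_{r+1}$ and defines the bijection $f \colon V(G_0) \to V(G_1)$ by composing $\phi_r$ with a CFI-automorphism shifting the twist from $C_r$ into $C_{r+1}$. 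Spoiler then picks $v \in V(G_0)$; placing cop $i$ on $\rho(v)$ yields $X_{r+1}$, and $\phi_{r+1}$ is defined from $f$.

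The pebbled map is a partial isomorphism because any edge between pebbled vertices projects to an edge inside $X$, and the identity-on-$\rho^{-1}(X)$ property of $\phi_r$ makes adjacency transparent between $G_0$ and $G_1$. A violation would force the twist to become visible on a pebbled edge, meaning $X$ has isolated the twist from Robber's component---i.e.\@ that Robber has been captured, contradicting $\sigma$'s winning nature. Duplicator therefore survives $q$ rounds whenever $\sigma$ does, yielding $G_0 \equiv_{\lC^k_q} G_1$. The main obstacle is the twist-shift bookkeeping, particularly in the non-monotone regime where the new cop may land inside the previous Robber component $C_r$; this is precisely where non-monotonicity of $\sigma$ is essential, since Duplicator cannot restrict herself to twist-shifts consistent with a monotone cop strategy.
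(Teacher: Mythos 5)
The proposal follows essentially the same strategy as the paper: reduce to the bijective $k$-pebble game (via \cref{thm:ckq_equivalence} and \cref{thm:hella}), project pebble placements to cop positions through $\rho$, maintain an isomorphism tracking the twist at Robber's current position, and shift the twist along Robber's escape paths using the local rigidity of the CFI construction.

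Two points in the round-handling step need tightening. First, Duplicator must output the bijection $f \colon V(G_0) \to V(G_1)$ \emph{before} Spoiler chooses the vertex $v$, so Robber's next component $C_{r+1}$ (and hence the twist-shifting automorphism) is not yet determined when $f$ is committed. The paper resolves this by assembling $f$ fiber by fiber: for each candidate placement $v$ one computes Robber's escape vertex $u'$ and a twist-shift $\psi_v$ along a path from the current Robber vertex $u$ to $u'$ that avoids the cops that stay put, and sets $f(x) \coloneqq \phi(\psi_{\rho(x)}(x))$; since every $\psi_v$ respects $\rho$-fibers this is a well-defined bijection, and the automorphism actually composed into the invariant is then the one corresponding to the fiber Spoiler actually picks. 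Second, the claimed invariant that $\phi_r$ is the identity on $\rho^{-1}(V(G) \setminus C_r)$ is stronger than what the construction preserves: composing twist-shift automorphisms only guarantees identity off the \emph{union of all paths used so far}, which may leave $C_r$. The paper's invariant is weaker and sufficient: $\phi$ is an isomorphism $G_{\{u\}} \to G_{\{u_0\}}$ with $\rho \circ \phi = \rho$, $u$ not in the image of $\rho \circ \gamma$, and $\phi$ sending pebbled vertices to pebbled vertices. That suffices because $G_\emptyset$ and $G_{\{u\}}$ coincide as labelled graphs on $\rho^{-1}(V(G) \setminus \{u\})$, so $\phi$ restricted to the pebbled region directly witnesses the partial isomorphism.
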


Before we turn to the proof of \cref{prop:closedness-td-core}, we show how it implies \cref{thm:td-closed}, cf.\@ \cite[Theorem~4.7]{roberson_oddomorphisms_2022}. For two graphs $F_1$ and $F_2$, write $F_1 + F_2$ for their disjoint union.

\begin{proposition} \label{prop:closed-core}
	Let $\mathcal{F}$ be a graph class satisfying the following:
	\begin{enumerate}
		\item $\mathcal{F}$ is closed under disjoint unions, i.e.\@ if $F_1, F_2 \in \mathcal{F}$ then $F_1 + F_2 \in \mathcal{F}$,
		\item $\mathcal{F}$ is closed under taking summands, i.e.\@ if $F_1 + F_2 \in \mathcal{F}$ then $F_1, F_2 \in \mathcal{F}$,
		\item for every connected $G \not\in \mathcal{F}$, it holds that $G_0 \equiv_{\mathcal{F}} G_1$.
	\end{enumerate}
	Then $\mathcal{F}$ is homomorphism distinguishing closed.
\end{proposition}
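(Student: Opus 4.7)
The plan is to show, for any $F \not\in \mathcal{F}$, how to exhibit witnesses $G, H$ with $G \equiv_\mathcal{F} H$ and $\hom(F, G) \neq \hom(F, H)$; this yields $F \not\in \cl(\mathcal{F})$, and since $\mathcal{F} \subseteq \cl(\mathcal{F})$ is immediate, one obtains $\mathcal{F} = \cl(\mathcal{F})$. First I would apply hypothesis~(2): since $F \not\in \mathcal{F}$ and $\mathcal{F}$ is closed under summands, some connected component $F^*$ of $F$ lies outside $\mathcal{F}$ (otherwise hypothesis~(1) would force $F \in \mathcal{F}$). By hypothesis~(3) applied to the connected graph $F^*$, we already have $F^*_0 \equiv_\mathcal{F} F^*_1$.

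A naive attempt would set $G := F^*_0$ and $H := F^*_1$. Then $G \equiv_\mathcal{F} H$ is immediate, and \cref{lem:roberson3.7} gives $\hom(F_i, G) \geq \hom(F_i, H)$ for every connected component $F_i$ of $F$, with strict inequality when $F_i = F^*$ (via the equivalence of items~(1) and~(3) of \cref{lem:roberson3.7}, applied with $|U|$ odd). By multiplicativity of $\hom$ over disjoint unions one has $\hom(F, G) = \prod_i \hom(F_i, G)$, and similarly for $H$, so one hopes the strict inequality survives on the product side.

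The main obstacle is that this product argument can collapse to $0 = 0$: whenever some component $F_i$ admits no homomorphism to $F^*$, one has $\hom(F_i, F^*_0) = \hom(F_i, F^*_1) = 0$, killing both products. My fix is to add an auxiliary graph $C$ absorbing every component: set $G := F^*_0 + C$ and $H := F^*_1 + C$, where $C := K_{|V(F)|}$ ensures $\hom(F_i, C) \geq 1$ for every connected component $F_i$ of $F$, since each is properly $|V(F)|$-colourable.

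Finally I would verify the two required properties. For $G \equiv_\mathcal{F} H$: given any $F' \in \mathcal{F}$ with connected components $F'_1, \dots, F'_m$, each $F'_j$ lies in $\mathcal{F}$ by hypothesis~(2), hence $\hom(F'_j, F^*_0) = \hom(F'_j, F^*_1)$ by hypothesis~(3); additivity of $\hom$ on the connected source gives $\hom(F'_j, G) = \hom(F'_j, F^*_0) + \hom(F'_j, C) = \hom(F'_j, H)$, and multiplicativity then yields $\hom(F', G) = \hom(F', H)$. For $\hom(F, G) \neq \hom(F, H)$: writing both sides as products over the components $F_i$ of $F$, each factor $\hom(F_i, F^*_0) + \hom(F_i, C)$ dominates $\hom(F_i, F^*_1) + \hom(F_i, C)$ by \cref{lem:roberson3.7}, strictly so at the factor with $F_i = F^*$, and every factor is strictly positive thanks to the contribution from $C$; the strict inequality therefore propagates to the products, completing the plan.
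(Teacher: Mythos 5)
Your proposal is correct and follows essentially the same route as the paper: pick a connected component $F^*\not\in\mathcal{F}$, form the witnesses $F^*_0+K_n$ and $F^*_1+K_n$, and propagate the strict inequality from \cref{lem:roberson3.7} through the product over components, with the clique guaranteeing all factors are positive. The only (harmless) difference is that you prove the step ``$F^*_0\equiv_{\mathcal{F}}F^*_1$ implies $F^*_0+K_n\equiv_{\mathcal{F}}F^*_1+K_n$'' directly from hypothesis~(2) and additivity over connected sources, where the paper cites an external result for it.
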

\begin{proof}
	 It has to be shown that for every $F \not\in \mathcal{F}$ there exist graphs $G$ and $H$ such that
	 $G \equiv_{\mathcal{F}} H$ and $\hom(F, G) \neq \hom(F, H)$.
	 
	 Let $F \not\in \mathcal{F}$ be arbitrary. Write $F = F^1 + \dots + F^r$ as disjoint union of its connected components. Since $\mathcal{F}$ is closed under taking disjoint unions, we may suppose wlog that $F^1 \not\in \mathcal{F}$.
	 By the third assumption, $F^1_0 \equiv_{\mathcal{F}} F^1_1$.
	 Let $n \geq 1$ be large enough such that $\hom(F, K_n) > 0$.
	 By the second assumption and \cite[Theorem~5]{seppelt_logical_2023}, also $F^1_0 + K_n \equiv_{\mathcal{F}} F^1_1 + K_n$. Furthermore, applying well-known facts \cite[(5.28)--(5.30)]{lovasz_large_2012},
	 \begin{align*}
	 	\hom(F, F^1_0 + K_n) 
	 	&= \prod_{i = 1}^r \hom(F^i, F^1_0 + K_n)
	 	= \prod_{i = 1}^r \left(\hom(F^i, F^1_0) + \hom(F^i, K_n)\right) \\
	 	& > \prod_{i = 1}^r \left(\hom(F^i, F^1_1) + \hom(F^i, K_n)\right) = \hom(F, F^1_1 + K_n)
	 \end{align*}
 	since $\hom(F^i, F^1_0) \geq \hom(F^i, F^1_1)$ for every $i \in [r]$ and $\hom(F^1, F^1_0) > \hom(F^1, F^1_1)$ by \cref{lem:roberson3.7}.
 	Hence, $G \coloneqq F^1_0 + K_n$ and $H \coloneqq F^1_1 + K_n$ are as desired.
\end{proof}

\begin{proof}[Proof of \cref{thm:td-closed} assuming \cref{prop:closedness-td-core}]
	The class $\mathcal{TD}_q$ is closed under disjoint unions and taking summands.
	Furthermore, if $G \not\in \mathcal{TD}_q$ then Robber wins the non-monotone $\CR^q_q(G)$ by \cref{lem:games}. Hence, $G_0 \equiv_{\mathcal{T}_q^q} G_1$ and $\mathcal{TD}_q = \mathcal{T}_q^q$.
	Now \cref{prop:closed-core} implies the result.
\end{proof}

\begin{proof}[Proof of \cref{thm:Ekq_tw-td-semantics} assuming \cref{prop:closedness-td-core}]
	For the inclusion $\cl(\Ekq) \subseteq \mathcal{TW}_{k-1} \cap \mathcal{TD}_q$, first observe that \cref{cor:Ekq-subset-tw-td} implies that $\cl(\Ekq) \subseteq \cl(\mathcal{TW}_{k-1} \cap \mathcal{TD}_q)$.
	By \cite{neuen_homomorphism-distinguishing_2023} and \cref{thm:td-closed}, $\mathcal{TW}_{k-1}$ and $\mathcal{TD}_q$ are homomorphism distinguishing closed. By \cite[Lemma~6.1]{roberson_oddomorphisms_2022}, so is their intersection. Hence, $\cl(\Ekq) \subseteq \cl(\mathcal{TW}_{k-1} \cap \mathcal{TD}_q) = \mathcal{TW}_{k-1} \cap \mathcal{TD}_q$.
	
	By \cref{lem:CR_tw-td}, there exists a connected graph $G \in \mathcal{TW}_{k-1} \cap \mathcal{TD}_q$ on which robber wins the non-monotone $\CR^k_q(G)$. By \cref{prop:closedness-td-core}, the graphs $G_0$ and $G_1$ are homomorphism indistinguishable over $\Ekq$. By \cref{lem:roberson3.7}, $\hom(G, G_0) \neq \hom(G, G_1)$. Hence, $G \not\in \cl(\Ekq)$.
\end{proof}

\subsection{How Robber Guides Duplicator (Proof of \cref{prop:closedness-td-core})}

Let $G$ and $H$ be graphs and $k \geq 1$.
For a partial function $\gamma \colon [k] \rightharpoonup V(G) \times V(H)$, write $\gamma_G \colon [k] \rightharpoonup V(G)$ and $\gamma_H \colon [k] \rightharpoonup V(H)$ for the maps obtained from $\gamma$ by projecting to the respective components. Then $\gamma$ is a \emph{partial isomorphism} if for all $i, j \in \dom(\gamma)$, $\gamma_G(i) = \gamma_G(j) \Leftrightarrow \gamma_H(i) = \gamma_H(j)$ and $\gamma_G(i)\gamma_G(j) \in E(G) \Leftrightarrow \gamma_H(i)\gamma_H(j) \in E(H)$. Note that $\gamma = \emptyset$ is a partial isomorphism for any two graphs $G$ and $H$.

The \emph{bijective $k$-pebble game} on graphs $G$ an $H$ is played by two players Spoiler and Duplicator. The positions are partial functions $\gamma \colon [k] \rightharpoonup V(G) \times V(H)$.
If $\gamma$ is not a partial isomorphism then Spoiler wins in $0$ rounds.
If $|V(G)| \neq |V(H)|$ then Spoiler wins in $1$ round.
At any round $i \geq 1$ starting in position $\gamma$,
\begin{itemize}
	\item Spoiler picks $p \in [k]$,
	\item Duplicator supplies a bijection $f \colon V(G) \to V(H)$,
	\item Spoiler picks $v \in V(G)$.
\end{itemize}
The position is updated to the partial map $\gamma' \coloneqq \gamma[p \mapsto (v, f(v))]$ whose domain is $\dom(\gamma) \cup \{p\}$ and
\[
	\gamma' \colon q \mapsto \begin{cases}
		(v, f(v)), & \text{if } q = p,\\
		\gamma(q), & \text{otherwise}.
	\end{cases}
\]
Spoiler wins after round $i$ if $\gamma'$ is not a partial isomorphism. Otherwise Duplicator wins the $i$-round bijective $k$-pebble game.

The following \cref{thm:hella} is implicit in many sources \cite{cai_optimal_1992,hella_logical_1996}. As we were not able to find an exact reference, we choose to give a proof sketch:

\begin{theorem} \label{thm:hella}
	Let $k\geq 1$ and $q \geq 0$. Let $G$ and $H$ be graphs and $\gamma \colon [k] \rightharpoonup V(G) \times V(H)$. Then the following are equivalent:
	\begin{enumerate}
		\item Duplicator wins the $q$-round bijective $k$-pebble game starting in position $\gamma$,
		\item For all formulae $\phi(\boldsymbol{x}) \in \mathsf{C}^k_q$ with $|\gamma|$ free variables, $G, \gamma_G \models \phi(\boldsymbol{x})$ if and only if $H, \gamma_H \models \phi(\boldsymbol{x})$.
	\end{enumerate}
\end{theorem}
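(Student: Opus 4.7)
The plan is to proceed by induction on $q$. The base case $q = 0$ is immediate: Duplicator wins the $0$-round game from $\gamma$ if and only if $\gamma$ is a partial isomorphism, which is precisely the condition that all atomic $\mathsf{C}^k_0$-formulae (equality and edge atoms) interpreted via $\gamma_G$ and $\gamma_H$ agree.

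For the inductive step, assume the equivalence for $q-1$ and fix $\gamma$. For the forward direction I would argue contrapositively by structural induction on a distinguishing formula $\phi(\boldsymbol{x}) \in \mathsf{C}^k_q$. The atomic and Boolean cases are immediate. The interesting case is $\phi = \exists^{\geq t} x_p\, \psi$ with, say, $G,\gamma_G \models \phi$ and $H,\gamma_H \not\models \phi$: there are at least $t$ witnesses $v_1,\dots,v_t \in V(G)$ for $\psi$ and fewer than $t$ in $H$. Spoiler plays pebble $p$; for any bijection $f$ Duplicator supplies, the images $f(v_1),\dots,f(v_t)$ are distinct, so by pigeonhole some $f(v_i)$ fails $\psi$ in $H$. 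Spoiler picks $v_i$, and by the inductive hypothesis applied to $\psi \in \mathsf{C}^k_{q-1}$, Duplicator loses the remaining $q-1$ rounds from the updated position. The dual case with roles swapped is symmetric using $f^{-1}$.

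For the backward direction I would build Duplicator's strategy from $\mathsf{C}^k_q$-equivalence. First observe that $|V(G)| = |V(H)|$ is forced by evaluating $\exists^{\geq t} x_p (x_p = x_p) \in \mathsf{C}^k_1$ for all $t$. When Spoiler plays pebble $p$, I would partition $V(G)$ and $V(H)$ by their $\mathsf{C}^k_{q-1}$-type over the parameters $\gamma_G$ and $\gamma_H$ respectively. Since the graphs are finite, only finitely many types arise, and each type over $V(G)$ is definable by a single formula $\chi(\boldsymbol{x}, x_p) \in \mathsf{C}^k_{q-1}$ (a conjunction of finitely many formulae distinguishing it from the other finitely many types). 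The assumption applied to $\exists^{\geq t} x_p\, \chi$ for all $t \in \mathbb{N}$ shows that for every such $\chi$, the number of realisations of $\chi$ over $\gamma_G$ in $G$ equals that over $\gamma_H$ in $H$; in particular, each type realised in $G$ is realised in $H$ with the same multiplicity and vice versa. Duplicator then supplies any bijection $f \colon V(G) \to V(H)$ respecting this type-matching. For every $v$, $G, \gamma_G[p \mapsto v] \equiv_{\mathsf{C}^k_{q-1}} H, \gamma_H[p \mapsto f(v)]$, so the inductive hypothesis yields a winning continuation.

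The main obstacle is the backward direction, specifically the step that realises the types of $\mathsf{C}^k_{q-1}$ by single definable formulae and then transfers multiplicities via counting quantifiers. Once this is in place and the two classes match size-for-size, the bijection can be assembled and the inductive hypothesis closes the argument.
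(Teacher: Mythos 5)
Your proof is correct and follows the same overall strategy as the paper's (induction on the number of rounds $q$, with the quantifier case driving the round count down). In fact, you supply a detail the paper elides: the paper simply asserts that a suitable bijection $f$ exists in the backward direction, whereas you justify it by partitioning $V(G)$ and $V(H)$ into $\mathsf{C}^k_{q-1}$-types, arguing each type is definable by a single formula (finitely many types over two fixed finite structures), and transferring multiplicities via the counting quantifiers $\exists^{\geq t}x_p\,\chi$. Your forward direction is the contrapositive with an explicit pigeonhole Spoiler strategy rather than the paper's direct argument that agreement on subformulae propagates to the quantified formula; both are logically equivalent.
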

\begin{proof}
	By induction on $q$. For $q = 0$, Duplicator wins the $0$-round game iff the starting position is a partial isomorphism. This is precisely what can be defined using formulae in $\mathsf{C}^{k}_0$ with $|\gamma|$ free variables.
	
	Let $q > 1$. Suppose Duplicator wins the $q$-round bijective game from $\gamma$. Let $\phi(\boldsymbol{x}) \in \mathsf{C}^k_q$ be a formula with $|\gamma|$ free variables. We may suppose without loss of generality that $\phi$ is of the form $\exists^{=n} x_p.\ \psi(\boldsymbol{x})$ for some $n \geq 0$ and $p \in [k]$. If Spoiler picks $p$ then Duplicator can supply a bijection $f \colon V(G) \to V(H)$ such that for all $v \in V(G)$, Duplicator wins the $(q-1)$-round bijective game from $\gamma[p \mapsto (v, f(v))]$. By the inductive hypothesis, $G, \gamma_G[p \mapsto v] \models \psi(\boldsymbol{x})$ if and only if $H, \gamma_H[p \mapsto f(v)] \models \psi(\boldsymbol{x})$ for all $v \in V(G)$ and all $\psi \in \mathsf{C}^k_{q-1}$. Hence, $G, \gamma_G \models \phi(\boldsymbol{x})$ if and only if $H,\gamma_H \models \phi(\boldsymbol{x})$ as desired.
	
	Conversely, suppose that $G, \gamma_G \models \phi(\boldsymbol{x})$ if and only if $H, \gamma_H \models \phi(\boldsymbol{x})$ for all $\phi \in \mathsf{C}^k_q$.
	We claim that Duplicator wins the $q$-round game starting in $\gamma$.
	Suppose Spoiler picks $p \in [k]$.
Then there exists a bijection $f \colon V(G) \to V(H)$ such that $G, \gamma_G[p \mapsto v] \models \psi(\boldsymbol{x})$ iff $H, \gamma_H[p \mapsto f(v)] \models \psi(\boldsymbol{x})$ for all $\psi \in \mathsf{C}^k_{q-1}$ and all $v \in V(G)$.
Duplicator may play this bijection. For any choice $v \in V(G)$ of Spoiler, it follows inductively that Duplicator wins the $(q-1)$-round game from $\gamma[p \mapsto (v, f(v))]$.
\end{proof}

\begin{lemma}[{\cite[Lemma~4.3]{neuen_homomorphism-distinguishing_2023}, cf.\@ \cite{dawar_power_2007}}]
	\label{lem:neuen4.3}
	Let $G$ be a connected graph. 
	Let $u, v\in V(G)$. Let $P$ be a path in $G$ from $u$ to $v$.
	Then there exists an isomorphism $\phi \colon G_{\{ u \}} \to G_{\{v\}}$ such that 
	\begin{enumerate}
		\item $\rho(\phi(w, S)) = w$ for all $(w, S) \in V(G_{\{u\}})$ and
		\item $\phi(w, S) = (w, S)$ for all $(w, S) \in V(G_{\{u\}})$ with $w \in V(G) \setminus P$.
	\end{enumerate}
\end{lemma}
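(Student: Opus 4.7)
The plan is to define $\phi$ explicitly using the path $P$, and then verify that it is a parity-respecting bijection which preserves the edge relation. Write $P = p_0 p_1 \dots p_\ell$ as a simple path with $p_0 = u$ and $p_\ell = v$ (the degenerate case $u = v$ is handled by the identity map), and let $e_i = p_{i-1}p_i$ for $i \in [\ell]$. For each vertex $w \in V(G)$, define the toggle set
\[
\Delta_w \coloneqq \begin{cases} \{e_1\}, & \text{if } w = p_0, \\ \{e_\ell\}, & \text{if } w = p_\ell, \\ \{e_i, e_{i+1}\}, & \text{if } w = p_i \text{ with } 0 < i < \ell, \\ \emptyset, & \text{if } w \notin P. \end{cases}
\]
Then set $\phi(w, S) \coloneqq (w, S \triangle \Delta_w)$ for every $(w,S) \in V(G_{\{u\}})$.

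First I would verify that $\phi$ is well-defined, i.e.\@ that $(w, S \triangle \Delta_w) \in V(G_{\{v\}})$. The only vertices of $G$ for which the required parity differs between $G_{\{u\}}$ and $G_{\{v\}}$ are $u$ and $v$. Since $|\Delta_u| = |\Delta_v| = 1$ and $|\Delta_w|$ is even for all other $w$ (either $0$ or $2$), the parity of $|S \triangle \Delta_w|$ matches $\delta_{w, \{v\}}$ in every case. Bijectivity follows by noting that the analogous construction with respect to the reversed path $P$ from $v$ to $u$ produces the inverse map.

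The main step is edge preservation. Given an edge $(w, S)(w', T)$ of $G_{\{u\}}$, one has $ww' \in E(G)$ and $ww' \notin S \triangle T$; we need $ww' \notin (S \triangle \Delta_w) \triangle (T \triangle \Delta_{w'}) = (S \triangle T) \triangle (\Delta_w \triangle \Delta_{w'})$, so it suffices to show $ww' \notin \Delta_w \triangle \Delta_{w'}$. I would split into cases. If neither $w$ nor $w'$ lies on $P$, both toggle sets are empty and the claim is immediate. If exactly one endpoint, say $w$, lies on $P$, then $\Delta_{w'} = \emptyset$; because every edge in $\Delta_w$ is a path edge with both endpoints on $P$ while $w' \notin P$, we have $ww' \notin \Delta_w$. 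If both $w = p_i$ and $w' = p_j$ lie on $P$, consider whether $ww'$ is a path edge. If not, then $ww'$ lies in neither $\Delta_w$ nor $\Delta_{w'}$. If $ww' = e_{i+1}$ (the case $j = i+1$), then a direct inspection of the definition of $\Delta_{p_i}$ and $\Delta_{p_{i+1}}$ at the four sub-cases (interior/interior, endpoint/interior, interior/endpoint, $\ell = 1$) shows that $e_{i+1}$ belongs to both sets, hence cancels in the symmetric difference. This case analysis is the only slightly tedious piece of the argument, but it is purely mechanical and is where the choice of $\Delta_w$ is essentially forced.

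Finally, the two advertised properties are evident from the definition: $\rho$ forgets the second coordinate, so $\rho(\phi(w, S)) = w$ for all $(w,S)$, and $\phi$ restricts to the identity on vertices $(w, S)$ with $w \notin P$ because $\Delta_w = \emptyset$ in that case. I do not anticipate any genuine obstacle beyond the careful bookkeeping in the edge-preservation case distinction.
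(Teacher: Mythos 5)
Your construction is correct. The paper does not prove this lemma itself --- it is imported verbatim from Neuen's work (Lemma~4.3 there, cf.\@ Dawar--Richerby) --- but your explicit ``toggle along the path'' map $\phi(w,S) = (w, S \triangle \Delta_w)$ is exactly the standard argument behind that cited result: the odd-size toggles at the two endpoints shift the parity defect from $u$ to $v$, the even-size toggles at interior path vertices preserve parity, and consecutive path vertices share the connecting edge in their toggle sets so that it cancels in $\Delta_w \triangle \Delta_{w'}$, which is precisely what edge preservation requires. The case analysis is complete (including $\ell = 1$ and the chord case, where simplicity of $P$ guarantees a non-path edge between path vertices lies in neither toggle set), and both advertised properties are immediate from the definition, so there is nothing to add.
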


We can now conduct the proof of \cref{prop:closedness-td-core} following \cite[Lemma~4.4]{neuen_homomorphism-distinguishing_2023}.

\begin{proof}[Proof of \cref{prop:closedness-td-core}]
	Given the winning strategy for Robber in the non-monotone $\CR^k_q(G)$, we provide a winning strategy for Duplicator in the $q$-round bijective $k$-pebble game on $G_\emptyset$ and $G_U = G_{\{u_0\}}$ for some fixed vertex $u_0 \in V(G)$ with initial position $\emptyset$. Clearly, $|V(G_\emptyset)| = |V(G_U)|$, so the game commences.

	For a position $\gamma \colon [k] \rightharpoonup V(G_\emptyset) \times V(G_{\{u_0\}})$, we regard the partial map $\rho \circ \gamma_{G_\emptyset}$ as a positions of the cops in $\CR^k_q(G)$. More precisely, the cops are placed on $\{\rho(\gamma_{G_\emptyset}(p)) \mid p \in \dom(\gamma)\}$ and on any selection of $k - |\dom(\gamma)|$ many  vertices from the auxiliary clique.

	Throughout the game, Duplicator maintains the following invariant. Before the $i$-th round of the bijective $k$-pebble game, $1 \leq i \leq q$, with current position $\gamma \colon [k] \rightharpoonup V(G) \times V(H)$:
	There is a vertex $u \in V(G)$ and an isomorphism $\phi \colon G_{\{u\}} \to G_{\{u_0\}}$ such that
	\begin{enumerate}[I]
		\item $\rho(\phi(w, S)) = w$ for all $(w, S) \in V(G_{\{u\}})$,\label{inv1}
		\item $u$ does not appear in the image of $\rho \circ \gamma_{G_\emptyset} = \rho \circ \gamma_{G_{\{u_0\}}}$,\label{inv2}
		\item $\phi$ sends pebbled vertices to pebbled vertices, i.e.\@ $\phi \circ \gamma_{G_\emptyset} = \gamma_{G_{\{u_0\}}}$,\label{inv3}
		\item Robber wins the non-monotone $\CR^k_{q-i+1}(G)$ with cops placed on $\rho \circ \gamma_{G_\emptyset}$ and robber starting in $u \in V(G)$.\label{inv4}
	\end{enumerate}

	To clarify the indexing, observe that initially (before the first round) with position $\gamma = \emptyset$ we require Robber to be able to win $\CR^k_q(G)$ when all cops are placed on the auxiliary clique.

	Initially, let $u \in V(G)$ denote the vertex chosen by Robber when the cops are places on the auxiliary clique. 
	Let $P$ denote a shortest path from $u$ to $u_0$.
	By \cref{lem:neuen4.3}, there exists an isomorphism $\phi \colon G_{\{u\}} \to G_{\{u_0\}}$ satisfying all stipulated properties.
	
	The invariant is maintained as follows: Let $\gamma$ denote the current position and write $u \in V(G)$ for the current position of the robber. When Spoiler picks a pebble $p \in [k]$, Duplicator constructs a bijection $f \colon V(G_\emptyset) \to V(G_{\{u_0\}})$ as follows:
	Let $v \in V(G)$ be arbitrary and write $u' \in V(G)$ for the vertex where robber would escape to when Cops updates his position to $(\rho \circ \gamma_{G_\emptyset})[p \mapsto v]$. Let $P_v$ denote a shortest path from $u$ to $u'$ avoiding the vertices which appear in $\img (\rho \circ \gamma_{G_\emptyset})[p \mapsto v] \cap \img \rho \circ \gamma_{G_\emptyset}$.
	Let $\psi_v$ denote the isomorphism $G_{\{u'\}} \to G_{\{u\}}$ from \cref{lem:neuen4.3} for $P_v$.
	Duplicator plays $f \colon x \mapsto \phi(\psi_{\rho(x)}(x))$.
	
	Now Spoiler picks a vertex $x \in V(G_\emptyset)$. Then the map $\phi' \coloneqq \phi \circ \psi_{\rho(x)}$ satisfies the properties of the invariant and $f(x) = \phi'(x)$. Indeed, \cref{inv1} is immediate from \cref{lem:neuen4.3}. \cref{inv2} holds because the robber was not yet captured, i.e.\@ $\rho(x) \neq u'$ in the notation from above. \cref{inv3} holds since $P_v$ avoids all vertices in $\img (\rho \circ \gamma_{G_\emptyset})[p \mapsto v] \cap \img \rho \circ \gamma_{G_\emptyset}$ and $f(x) = \phi'(x)$. Clearly, \cref{inv4} holds.
	
	It remains to argue that the updated $\gamma$ is a partial isomorphism. By the properties of the invariant, $\phi \colon G_{\{u\}} \to G_{\{u_0\}}$ is an isomorphism and $u$ does not appear in the image of $\rho \circ \gamma_{G_\emptyset} = \rho \circ \gamma_{G_{\{u_0\}}}$. Hence, $\phi$ restricts to an isomorphism $G_\emptyset - \rho^{-1}(u) \to G_{\{u_0\}} - \rho^{-1}(u)$. The partial map $\gamma$ coincides with this isomorphism by \cref{inv3}.
\end{proof}
 
\end{document}